\documentclass[
onecolumn,
12pt,
nofootinbib, 						 
prd,
floatfix,
preprintnumbers,
amsmath,
amssymb,
groupedaddress,					
superscriptaddress 				
]{revtex4-2}

\usepackage{amsthm}			
\usepackage{mathtools}
\usepackage{tikz-cd} 			
\usepackage[sans]{dsfont} 	 
\usepackage{eucal} 				 
\usepackage[Euler]{upgreek} 
\usepackage{xcolor}				  
\usepackage[backref=page]{hyperref}  
\definecolor{linkCC}{RGB}{0, 0, 0}   
\definecolor{citeCC}{RGB}{0,50,50}   
\definecolor{fileCC}{RGB}{0,0,0}   
\definecolor{urlCC}{RGB}{0,50,50}    
\hypersetup{
    colorlinks=true,
    linkcolor=linkCC,       
    citecolor=citeCC,       
    filecolor=fileCC,       
    urlcolor=urlCC          
}
\def\thesection{\arabic{section}} 
\renewcommand\thesubsection{\thesection.\arabic{subsection}} 

\makeatletter
\renewcommand{\p@subsection}{}
\renewcommand{\p@subsubsection}{}

\makeatother
\newcommand{\nocontentsline}[3]{}
\newcommand{\tocless}[2]{\bgroup\let\addcontentsline=\nocontentsline#1{#2}\egroup}


\makeatletter
\newcommand{\pagetarget}[2]
{\hypertarget{#1}{#2}\protected@write\@auxout{}{%
   \string\newlabel{#1}{{#2}{\thepage}{page.\thepage}{#1}{}}}}
\makeatother

\newcommand{\pagelink}[2]
{\hyperlink{\getrefbykeydefault{#1}{name}{Doc-Start}}{#2}}

\newcommand{\pagenumber}[1]
{\getrefbykeydefault{#1}{page}{Doc-Start}}
\newcommand{\Fun}{\mathcal{F}\mathrm{un}}
\newcommand{\Map}{\mathrm{Map}} 
\newcommand{\bbC}{\mathbb{C}} 
 
\newcommand{\bbR}{\mathbb{R}} 
\newcommand{\Aut}{\mathrm{Aut}}
\newcommand{\R}{\mathbb{R}}
\newcommand{\C}{\mathbb{C}}
\newcommand{\Z}{\mathbb{Z}}
\newcommand{\ssR}{\mathds{R}}
\newcommand{\ssC}{\mathds{C}}
\newcommand{\ssZ}{\mathds{Z}}
\newcommand{\X}{\mathcal{X}}
\newcommand{\Cart}{\mathcal{C}\mathrm{art}}
\newcommand{\Sm}{\mathcal{S}\mathrm{m}}
\newcommand{\esh}[1]{{#1^{\delta}}}
\newcommand{\tge}[3]{\tau_{\geq #1}(#3,#2)}
\newcommand{\terminal}{\ast}
\newlength{\perspective}
\makeatletter
\begingroup
\catcode`\&=13
\gdef\and{&}
	\gdef\squareRaw#1#2#3#4#5#6#7#8{
	\begin{tikzcd}
	#1 \ar["{#5}"]{r} \ar["{#6}",swap]{d}\and #2\ar["{#7}"]{d} \\
	#3 \ar["{#8}",swap]{r} \and #4
	\end{tikzcd}
	}
	\gdef\square#1#2#3#4#5#6#7#8{
	\[\squareRaw{#1}{#2}{#3}{#4}{#5}{#6}{#7}{#8}\]
	}
	\gdef\squareE#1#2#3#4#5#6#7#8{
	\[\begin{tikzcd}
	#1 \arrow["{#5}",two heads]{r} \arrow["{#6}",swap]{d}\and #2\arrow["{#7}"]{d} \\
	#3 \arrow["{#8}",swap,two heads]{r} \and #4
	\end{tikzcd}\]
	}
	
	\gdef\pastingVE#1#2#3#4#5#6{
	\[\begin{tikzcd}
	#1 \ar[r,two heads]\ar{d}\and #2\ar{d} \\
	#3 \ar[r,two heads] \ar{d}\and #4\ar{d} \\
	#5 \ar[r,two heads] \and #6
	\end{tikzcd}\]
	}
	
	\gdef\pastingHE#1#2#3#4#5#6{
	\[\begin{tikzcd}
	#1 \arrow[two heads]{r} \arrow{d}\and #2\arrow{d}\arrow[two heads]{r} \and #3\arrow{d}\\
	#4\arrow[two heads]{r} \and #5 \arrow[two heads]{r} \and #6
	\end{tikzcd}\]
	}
	\gdef\twosimplex#1#2#3#4#5#6#7{
	\[\begin{tikzcd}
	\and {#2} \ar[dr,"{#6}"] \\
		{#1} \ar[ur,"{#5}"]\and \and {#3}
		\arrow[""{name=0, anchor=center, inner sep=0}, "{#7}"', from=2-1, to=2-3]
		\arrow["{#4}"{description}, draw=none, from=1-2, to=0]
	\end{tikzcd}\]
	}
\endgroup
\makeatother


\begin{document}   
\title{Smooth generalized symmetries of quantum field theories} 
\date{\today}
\author{Ben Gripaios}  
\email{gripaios@hep.phy.cam.ac.uk}
\affiliation{
Cavendish Laboratory, University of Cambridge,
JJ Thomson Avenue,
Cambridge, UK}
\author{Oscar Randal-Williams}  
\email{o.randal-williams@dpmms.cam.ac.uk}
\affiliation{
  DPMMS, University of Cambridge, Wilberforce Road, Cambridge, UK}
\author{Joseph Tooby-Smith}  
\email{j.tooby-smith@cornell.edu}
\affiliation{Department of Physics, LEPP, Cornell University, Ithaca, NY 14853, USA}
\begin{abstract} 
Dynamical quantum field theories (QFTs), such as those in which spacetimes are equipped with a metric and/or a field in the form of a smooth map to a target manifold, can be formulated axiomatically using the language of  $\infty$-categories. 
According to a geometric version of the cobordism hypothesis, such
QFTs collectively assemble themselves into objects in an
$\infty$-topos of smooth spaces. We show how this allows one to define
and study generalized global symmetries of such QFTs. The symmetries
are themselves smooth, so the `higher-form' symmetry groups
can be endowed with, {\em e.g.}, a Lie group structure.

Among the more surprising general implications for physics are, firstly, that QFTs in
spacetime dimension $d$, considered collectively, can have $d$-form
symmetries, going beyond the known $(d-1)$-form symmetries of
individual QFTs and, secondly, that a global symmetry of a QFT can be anomalous even before we try to gauge it,
due
to a failure to respect either smoothness (in that a symmetry of an
individual QFT does not smoothly
extend to QFTs collectively) or locality (in that a symmetry of an
unextended QFT does not extend
 to an extended one). 

Smoothness anomalies are shown to occur  
even in 2-state systems in quantum mechanics (here
formulated axiomatically by equipping $d=1$ spacetimes with a metric,
an orientation, and perhaps some unitarity structure). Locality anomalies are shown to
occur even
for invertible QFTs defined on $d=1$ spacetimes equipped with an
  orientation and a smooth map to a target
manifold. These correspond in physics to topological actions for a
particle moving on the target
and the relation to an earlier classification of such actions using invariant
differential cohomology is elucidated. 
\end{abstract}   
\maketitle
\tableofcontents
\begingroup
\catcode`\&=13
\gdef\and{&} 
\endgroup

\section{Introduction\label{sec:Intro}}
Previously \cite{GRWTS23}, we gave an interpretation of generalized
symmetries of topological quantum field theories (TQFTs) using homotopy theory. In a
nutshell, the idea there was that
TQFTs 
with a given tangential structure assemble
themselves, according to the cobordism hypothesis \cite{LurTFT}, into a
space $X$ and that a 
  generalized global\footnote{We also discussed generalized gauge
  symmetries in \cite{GRWTS23}, but will not do so here.} symmetry of a
particular TQFT can be understood as a fixed point of an external group
$G$ acting on $X$, provided that all of these notions are interpreted in a
suitably homotopy-theoretic sense. So by a `space' we mean a homotopy
type,  and by a `group' we mean 
not a
group in the usual sense of a set with an associative and invertible multiplication
law, but rather a space with a notion of associative
invertible multiplication up to homotopy.

As noted in \cite{GRWTS23}, one upshot of this interpretation is that there is rather more structure associated to a generalized
symmetry than physicists had previously observed. The known tower
of abelian `$n$-form
symmetry' groups, for $n \geq 1$, lying above the usual `$0$-form
symmetry' group correspond here to the homotopy groups $\pi_n(G)$, for
$n \geq 0$.
 So we get, for free, an action of $\pi_0(G)$ on $\pi_n(G)$ for each
 $n$, a  Samelson product $\pi_n(G)
\times \pi_m(G) \to \pi_{n+m}(G)$ for each $n$ and $m$, \&c. In
fact, every $G$ is
equivalent to the loop space of some pointed connected space, so the
structure associated to a generalized symmetry group $G$ is completely described
by its delooping $BG$. Moreover, a group
action of $G$ on $X$ is equivalent to a morphism to $BG$ whose
fibre over the basepoint is $X$ and a homotopy fixed point is equivalent
to a section of that morphism; in favourable cases, such as TQFTs in
spacetime dimension one or two, this enabled us to give a more-or-less explicit description
of all possible actions and their homotopy fixed points.

To give a
concrete example of the power of this approach, consider the global
symmetries of gauge theory
on oriented spacetimes with dimension $d=2$ and gauge group a
 discrete group
$K$. A gauge field is a connection on a principal
$K$-bundle over the spacetime manifold, but, since $K$ is discrete, such
a connection is unique, so we can instead think of the gauge
fields simply as maps from spacetime to $BK$. 
A classical action can then be formed, following
Dijkgraaf and Witten \cite{DW90}, by taking a class in 
$H^2(BK,\bbC^\times)$, pulling back to spacetime along the
gauge field, and pairing with the fundamental class.
In the case where $K=(\Z / p)^2$, such that the classical action is
specified by $q \in H^2(B (\Z / p)^2,\bbC^\times) \simeq \Z/p$, semiclassical arguments
were used in \cite{GKSW15} to argue that the quantized theory has 0- and 1-form symmetry
groups both isomorphic to $(\Z/r)^2$,
where $r$ is the greatest common divisor of $p$ and $q$. In fact, it
was shown in \cite{GRWTS23} that these are mere subgroups of much larger 0- and 1-form
symmetries, isomorphic to $S_{r^2}$ and $(\bbC^\times)^{r^2}$,
respectively.
The underlying reason for this larger symmetry is duality: given an arbitrary
classical gauge theory as above the resulting quantum theory depends,
up to equivalence,
only on the dimensions of the irreducible projective representations
of $K$ corresponding to the class in $H^2(BK,\bbC^\times)$. (In our
specific example, the theories specified by $(p,q)$ have $r^2$ inequivalent irreducible
projective representations, each of which has dimension
$p/r$.) To spell
it out, many
theories that are classically distinct
are quantumly equivalent  and so the symmetries of any one QFT are consequently
larger. This is, presumably, hard to see using semi-classical
arguments, but is child's play once one knows the corresponding $X$:
one needs only to find the $x \in X$ that corresponds to the given
classical action. Once one has done so, one knows the full
generalized symmetry structure (a connected homotopy 2-type); in our
specific example, it is
specified by the further information that $S_{r^2}$ acts on
$(\bbC^\times)^{r^2}$ by permutation with Postnikov invariant given by the
trivial class in $H^3\left( BS_{r^2}, (\bbC^\times)^{r^2}\right)$. 

In this work, we wish to play a similar game in a set-up that is
rather closer to the real world in two ways. Firstly, we shall
generalize from TQFTs to genuine 
quantum field theories (QFTs), equipped with geometric structures such
as
maps to some target manifold ({\em i.e.} quantum fields), spacetime metrics, and
so on. Secondly, we shall ask that the generalized symmetries of
these theories be equipped with a suitable smooth structure. This
allows us, for example, to consider the $n$-form symmetry groups not
just as abstract groups, but as Lie groups, as physicists are wont to
do. In a nutshell, the idea here will be that QFTs with a given
geometric structure assemble themselves, via a geometric version of
the cobordism hypothesis \cite{GP21}, into an object in a certain 
$\infty$-topos whose objects we generically call smooth
spaces. Many of the relevant constructions from homotopy theory
can be transferred to any $\infty$-topos \cite{Lur09,NSS14}. In particular, not only do we have analogous notions
of a group object $G$, an action of $G$ on another object $X$, and
a fixed point thereof (up to a notion of homotopy), but we
also have equivalences (via a notion of delooping) sending
these, respectively, to 
a pointed connected object $BG$, a morphism to $BG$ with fibre $X$
over the basepoint, and a section thereof. As a result, we are able to carry
over the discussion in \cite{GRWTS23}, {\em
  mutatis mutandis}, to the $\infty$-topos of smooth spaces. Here, a
group object $G$ is equipped with a smooth structure, as desired,
giving us a new, and hopefully powerful,
perspective on smooth generalized symmetries of dynamical ({\em i.e.}
non-topological) QFTs.

Among the juicier titbits, we find that QFTs in spacetime
dimension $d$ have a non-trivial notion of an $n$-form symmetry for $n
\leq d$ (rather than
$n < d$, as is the common lore) and that a global
symmetry can be anomalous even before one tries to gauge it, either
due to a failure to respect smoothness, or due to a failure to respect
locality. To be explicit, a smoothness anomaly (which occurs even in
quantum mechanics) arises when a symmetry of a particular QFT fails to
extend smoothly to the smooth space of QFTs, while a locality anomaly
arises when a symmetry of an unextended QFT fails to extend when we extend
the QFT.  These anomalies reinforce the view we espoused in \cite{GRWTS23} that
anomalies should be considered not
 so much as a problem associated to gauging a global symmetry, but rather to a failure
 to define global symmetries properly in the first place, namely in a
 way that is fully consistent with the sacred physical principles of
 locality and smoothness. Such anomalies arise frequently in practice, because
physicists have a nasty habit of defining QFTs by writing them down not only one at a
time, so smoothness is obscured, but also by specifying their values
only on closed spacetime manifolds (and often on only one, such as the
$d$-dimensional sphere)
so locality is
obscured as well.

We illustrate all of this using known examples of smooth
spaces of QFTs. Unfortunately, our collective ignorance as to which higher category
to take as the target for QFTs in $d>1$ means that most known examples
of $X$ are in $d=1$, but the principles
apply quite generally. The holy grail would be, given $X$ and $G$, to 
classify all possible $G$-actions on the whole of $X$ and their corresponding homotopy fixed
points, as we did for TQFTs in \cite{GRWTS23}, but it will become clear that this is a daunting task in
general, even in $d=1$. For example, for quantum mechanics with a
1-dimensional state space, the hamiltonian describing the time
evolution along an interval is given by a real number; if $G$
corresponds to a Lie group, there is a $G$-action for every smooth
action of the Lie group on $\bbR$ and part of the data of a homotopy
fixed point is the fixed point subset of the smooth action.

Though dispiriting, it is important to remark
that finding all actions on $X$ and their homotopy fixed points goes
way beyond what the typical physicist does. As we have already remarked, such a physicist,
is, by-and-large, ignorant of what $X$ is and instead
contents themself with
studying symmetries of QFTs considered one at a time. In our
framework, a QFT is represented
by a point $x$ of $X$ and the problem of 
finding all physical symmetries of that QFT considered in
  isolation can be phrased here as the problem of
finding all $G$-actions and homotopy fixed points of the connected
component of $X$ at $x$. As we shall see, connectedness makes the
problem of finding $G$-actions and homotopy fixed points more straightforward.
It is much less straightforward to
establish whether or not actions on a connected component extend to actions
on all of $X$; if not, we have the aforementioned smoothness
anomaly. 

Explicitly, our main examples are QFTs in $d=1$
equipped with an orientation and either a smooth map to
a target manifold or a metric. The most striking result in the latter
case (which upon addition of a suitable unitarity structure leads us
to quantum mechanics) is that one can have a smoothness
anomaly already for a quantum system whose state space is
two-dimensional. In the former case, we find a
locality anomaly even for a system whose state space is
one-dimensional. Such a QFT is a so-called invertible QFT, and we can think
of it as a semi-classical version of the mechanics of a point particle
moving on the target manifold. Each such QFT defines a topological
physics action, by evaluating the QFT on closed
spacetime manifolds ({\em i.e.} disjoint unions of circles). For
group actions that are induced by a smooth action of a Lie group on
the target manifold, we can compare with the classification of
invariant topological actions proposed in \cite{DGRW20} in terms of invariant differential
cohomology. The fact that a QFT here is defined in a fully-extended fashion, {\em i.e.} we specify its value not just
on closed spacetime manifolds ({\em i.e.} disjoint unions of circles),
but also on manifolds with boundary ({\em i.e.} disjoint unions of
intervals and circles), leads to very different
classifications. Indeed, the map from the extended symmetric theories to the
unextended ones defined by restriction is neither surjective (meaning
that a symmetry of the physics action does not extend, 
giving a locality anomaly, as defined above) nor injective (meaning
that a symmetry of the physics action can be extended in multiple
ways), in general. Two simple examples from physics may serve to illustrate the general situation:
firstly, for a particle moving in a plane in the presence of a uniform magnetic field,
the action for motion in a loop is invariant under translations of the
plane, but
the fact that the lagrangian shifts by a total derivative means that
it cannot be extended to an interval;
secondly, the physics action describing motion of a particle moving on a point is obviously symmetric under
the trivial action of any group on the point. But this symmetry
can be extended to the theory defined on an interval in different
ways, by assigning different characters of the group
to the 
one-dimensional state space assigned by the theory to a spacetime point.

In case the basic point gets lost amidst all the mathematical heavy machinery,
let us consider further the example of quantum mechanics. This is
certainly a `real world' example (even though we consider here only the simplified
case where the state space is finite-dimensional). The smooth space
$X$ that we obtain in this case corresponds to (for each finite
dimension $n$ of the state space) the smooth manifold of
$n \times n$ hermitian matrices together with the smooth action on it by
conjugation of the Lie
group of $n \times n$ unitary matrices. Physically, the hermitian matrices represent the possible
hamiltonians of a quantum-mechanical
system. Every physicist knows that hamiltonians
  related by conjugation by unitary matrices are equivalent, but
  identifying them and choosing one from each equivalence
  class ({\em e.g.} by diagonalizing and quotienting out by permutations), as
  a physicist usually does, not only results in a set which is not a
  smooth manifold, but also entails a loss of information. 
 It is better, in general, to 
keep track manifestly of the ways in which
hamiltonians are equivalent and this is what the smooth space $X$
does. It is better, in particular, for describing symmetries of quantum mechanical
theories
  because it allows us to keep track of transformations that send a hamiltonian not to
itself, but to a conjugate one, in a smooth fashion.

In contrast to the physics, we lay no great claim to mathematical novelty. (In
  particular, the notion of an $\infty$-topos is now
  well-established, largely thanks to \cite{Lur09}, and the notion of group
  actions therein was developed in \cite{NSS14}.) But one or
  two minutiae may be of mathematical interest. To give one example, we
  introduce in \S\ref{sec:orbstab}
  notions, for each non-negative integer $m$, of the
$m$-orbit and $m$-stabilizer of a point under a group
action in an $\infty$-topos. In addition, there are a number of new
results regarding group actions in $\infty$-topoi and their associated
homotopy fixed points.

{\em A brief review of existing literature.} It is worth
  highlighting a few papers relevant to our discussion here. Locality
  anomalies have previously been discussed in
  \cite{TW20,M20}. Anomalies associated with smoothness are discussed
  in \cite{CFLS19a,CFLS19b}, although it is unclear how they are
  related to the smoothness anomalies discussed here. The way we enforce the
  condition of a field theory having a $d$-form symmetry has been used
  to enforce the spin-statistics relation in \cite{MS23}.

{\em A note on style and notation.} Since our target audience consists
of physicists,
we have opted for an informal presentation, emphasing pedagogy over
details.
To this end, we eschew definitions, theorems, and
proofs whenever possible; such impedimenta, when not supplied in the
Appendix or given
an explicit reference, can mostly be found in
\cite{Lur09}. We mostly follow the
nomenclature and notation 
used in \cite{Lur09} with one significant exception: an `$\infty$-category'
there will henceforth be called simply a `category' here (and denoted,
as there, using calligraphic typeface); the same goes for all
$\infty$-categorical paraphernalia, such as functor, adjoint, (co)limit, topos, \&c. On the rare occasions where we need to
refer to a bog-standard category, we will call it a `$1$-category',
\&c.\footnote{Lest readers who know their $1$-categorical onions be lulled into a
false sense of security, we stress
 that their intuition may fail spectacularly on
occasion. For example, consider a point \( x: \terminal \rightarrow X
\) within an object \( X \), where \( \terminal \) is a terminal
object. The pullback of this point over itself is not a terminal
object, end of story.
On the contrary, it is the very beginning of the story of homotopy
theory in a topos, defining as it does the 
object
of loops in $X$ based at $x$. Those who are {\em au fait} with classical homotopy theory
should be on their guard too. For example, an object $X$ being
connected is neither a necessary nor a sufficient condition for it to
be what we shall call path-connected, namely such that any two points $x,x^\prime:
\terminal \to X$ in it are homotopic to one another. In the topos of
  smooth spaces, being connected implies being path-connected, but not the
  converse. The classifying space of principal bundles with connection,
  described in \S\ref{sec:smoothspaces}, provides a counterexample.}
As usual, by a pointed
object, we mean one equipped with the datum of a point, {\em i.e.} a morphism $x: \terminal \to X$ from a terminal object. As in \cite{Lur09}, by `space' we mean `$\infty$-groupoid' and
denote the archetypal topos in which they live by ${\cal
  S}$. Similarly, we too
allow ourselves to use `the' or `an essentially unique' to
mean `a' in situations where the gamut of possibilities forms a contractible
space. When we come to consider the specific topos $\Sm$ of smooth spaces in
\S\ref{sec:smoothspaces}, we shall frequently need to pass back and
forth between classical constructs in differential geometry, such as
manifolds and Lie groups, and their corresponding objects in the topos
of smooth spaces. To distinguish the two, we use sans serif typeface
({\em e.g.} $\mathsf{M}$, $\mathsf{G}$ or $\ssR$) for the former and serif
typeface for the latter ({\em e.g.} $M$,  $G$ or $\R$).

The outline is as follows. In the next Section, we recall some
relevant details of topos theory and describe
notions of groups,
actions and homotopy fixed points in an arbitrary topos. In \S\ref{sec:smoothspaces}, we
describe the particular topos of smooth spaces. In \S\ref{sec:examples}, we describe some
examples in smooth spaces that are relevant to physics.
\section{Group actions in a topos\label{sec:groups}}
Our goal in this Section is to sketch the relevant details of topos
theory and 
introduce various notions and constructions related to group actions
therein. The notion of a topos and a group object therein were given
in \cite{Lur09} and group actions were discussed in \cite{NSS14}; the notions of orbits,
stabilizers, and various results that will be useful in our later
analysis of examples are perhaps new. 
To keep in the spirit of things, we mostly express
these in a functorial fashion, which requires a certain amount
of abstraction. Happily, 
unwinding the definitions at the level of objects usually results in something straightforward.

A topos is a special kind of category, which in turn is
designed to be a homotopy-theoretic
generalization of a 1-category. So let us begin by describing how the latter
is achieved. 

A category is a special kind of simplicial set.
Like a 1-category, it has objects and
morphisms (given respectively by the 0-simplices, a.k.a. vertices, and
1-simplices, a.k.a. edges, a.k.a. 1-morphisms, of
the simplicial set), along with an identity morphism $\mathrm{id}_X: X
\to X$ for each object $X$
(given by its image under the degeneracy map), but unlike a 1-category it also has $n$-morphisms
for every $n >1$ (given by the $n$-simplices). The r\^{o}le of these
is to provide a notion of the associative
composition of morphisms defined in a 1-category that is weaker in
that it holds only up to coherent
homotopy. To wit, 
given
a morphism into some object and a morphism out of that same object,
there is a contractible space of 2-simplices that are sent by the
appropriate face maps to the input morphisms. Each such
2-simplex gives the data of a possible composition (the image under
the remaining face map) along with a coherent homotopy witnessing it
(the 2-simplex itself). The 3-simplices provide a homotopy-coherent
notion of associativity, and so on. Unlike a 1-category, where the
morphisms between two objects form a set, in a category the morphisms form a space, whose 0-simplices are the 1-morphisms.

A functor is simply a map of simplicial sets, so boils down to a map
of sets for each $n$, preserving the relations between the face and
degeneracy maps.

The archetypal category $\cal{S}$ has objects given by spaces up to weak homotopy equivalence. A topos $\cal{X}$ is a generalization of
$\cal{S}$ which maintains many of its desirable properties, admitting 
many notions akin to those in homotopy theory. There are (small)
limits and colimits, so we have the terminal object $\terminal$ along with
the terminal morphism $\terminal_X: X \to \terminal$ for each object $X$.
In addition, we have notions of $n$-truncated and
$n$-connected morphisms, for each $n \geq -2$, defined as follows. Firstly, recall that a space is called $n$-truncated if its $i$-th homotopy group vanishes for every $i>n$ and every choice of basepoint.\footnote{A space is
  $-2$-truncated iff it is equivalent to the point, while a space is
  $-1$-truncated iff it is equivalent to the point or to the empty space.} An object $X$ of the topos $\X$ is called $n$-truncated if for any $Y\in
\X$ the mapping space $\X(Y,X)$ is $n$-truncated. The inclusion of the full subcategory $\tau_{\le n}\X$ of $n$-truncated objects in $\X$ has a left adjoint $\tau_{\leq n}$ and we say that an object $X\in \X$ is $n$-connected if $\tau_{\leq n} X\simeq \terminal$. A morphism $f:Y\to Z$
  in $\X$ is called $n$-truncated (respectively, $n$-connected) if it is an
  $n$-truncated ($n$-connected) object in $\X_{/Z}$. 
   For each $n$, the $n$-connected/$n$-truncated morphisms form an orthogonal factorization system.
Among other things, this implies that any morphism 
$f: X \to Y$ in $\X$ canonically factorizes as the composition of
an $n$-connected morphism 
followed by an $n$-truncated one and that, given a commutative square
\[\begin{tikzcd}
    X \ar[r] \ar{d}[swap]{n-\mathrm{conn}} & Y \ar{d}{n-\mathrm{trun}}\\
    Z \ar[r] \ar[ur,dashed] & W
\end{tikzcd}\]
in which the morphism on the left is $n$-connected and the morphism on
the right is $n$-truncated, there is an essentially unique dashed
arrow that makes the diagram commute. 
For $n=-2$, these notions are not particularly exciting,
since any morphism is $-2$-connected and a morphism is $-2$-truncated
iff it is an equivalence; for $n=-1$ we have the notions of
monomorphism ($-1$-truncated) and effective epimorphism ($-1$-connected), generalizing the notions in
$\cal{S}$ of morphisms that induce respectively an injection on $\pi_0$ (and
bijections on higher homotopy groups)
and a surjection on $\pi_0$.  The reader is encouraged to think of
an effective epimorphism $X \twoheadrightarrow Y$ as a cover of $Y$ by
$X$.

Along similar lines, given a pointed object, {\em i.e.} an object $X
\in \X$ and a morphism $x:\terminal \to X$ in $\X$, we define its $k$-connected cover to be the object $\tau_{\geq (k+1)}(X,x)$ defined via the cartesian square
\begin{equation}\label{eq:ConnectedCover}
\squareRaw{\tau_{\geq (k+1)}(X,x)}{X}{\ast}{\tau_{\le k} X}{}{}{}{}
\end{equation}
where $X\to \tau_{\le k }X $ is the $k$-connected morphism in the
$k$-connected/$k$-truncated factorization of $X\to \ast$ and where
$\ast\to\tau_{\le k} X$ is the composition of $x$ and $X\to \tau_{\le
  k }X $. Since $k$-connectedness is stable under pullback,
$\tau_{\geq (k+1)}(X,x)$ is indeed $k$-connected, as the name
suggests. It is, moreover, canonically pointed, via the morphism $x$
and the universal property of 
(\ref{eq:ConnectedCover}). As is traditional in spaces, we call $\tau_{\geq
    1}(X,x)$ the connected component of $X$ at~$x$.
    
 Before going on to groups and their actions, let us define one finial bit of notation. For each topos $\X$ there is a special functor which takes an object $X\in \X$ to the mapping space $\X(\ast, X)$. We denote this functor $\Gamma : \X \to {\cal S}$, and will call the object  $\Gamma X$ the underlying space of $X$. 
\subsection{Groups and group actions as groupoids \label{sec:gpd}}
Group objects and group actions in a topos are both special cases of
groupoid objects; the latter may be viewed as the appropriate homotopical
generalization of the run-of-the-mill notion of an equivalence relation on
a set. This generalization is achieved in three steps. Firstly, since
the very essence of homotopy theory is that we care not just about
whether two things are equivalent, but rather about all the ways in which
they are equivalent, we generalize from an equivalence relation on a set to a
groupoid. Thus, the elements of the set become objects in a
1-category and the equivalence relations between them become morphisms
that are isomorphisms. (To go back, we demand that
there be at most one isomorphism between any two objects.) Secondly,
we generalize to groupoids valued in any 1-category by asking for a
1-simplicial object $U_\bullet$ in that 1-category obeying certain conditions due to
Segal \cite{Seg68} (and earlier Grothendieck \cite{Gro61}). (To go back, we
demand that the 1-category be the 1-category of sets.) 
Thus, denoting by $U_n$ the object of $n$-simplices, our original set
and equivalences are replaced respectively by $U_0$ and $U_1$ and the
identities are determined by the degeneracy map $s_0: U_0 \to
U_1$. The Segal conditions for $U_2$, for example, require that the
three squares
\[
\squareRaw{U_2}{U_1}{U_1}{U_0\mathrlap{,}}{d_0}{d_1}{d_0}{d_0}\qquad
\squareRaw{U_2}{U_1}{U_1}{U_0\mathrlap{,}}{d_2}{d_0}{d_0}{d_1}\qquad
\squareRaw{U_2}{U_1}{U_1}{U_0\mathrlap{,}}{d_2}{d_1}{d_1}{d_1}
\]
are 1-cartesian. Collectively, these
  define composition along with left- and right-
  inverses. Associativity is enforced by the Segal conditions for
  $U_3$.
  Thirdly, we generalize from 1-categories to categories by removing
  the `1-'.

  Having done this legwork, we may now easily define a group object
  as
  a groupoid object 
whose $U_0$ is the terminal object $\terminal$. As a result of the Segal conditions, we have
that $U_n$ is the $n$-fold product of $U_1$. We shall sometimes abuse
notation, by denoting both
the simplicial object $U_\bullet$ and the underlying object $U_1$ of a
group object by, {\em e.g.}, $G$.

Now we come to 
  a first miracle of topos theory: there is an equivalence of
  categories between groupoid objects in $\cal{X}$ and effective
  epimorphisms in $\cal{X}$,
  obtained in one direction by taking the colimit of the groupoid
  object and in the other direction by taking the \v{C}ech nerve.
Under the colimit, a group object $G$ is thus sent to an effective
epimorphism out of the terminal object and into an object that we call the
delooping of $G$ and denote $BG$. 
Since $\terminal \twoheadrightarrow BG$ is $-1$-connected
and is a section of $BG \to \terminal$, it follows \cite[Prop. 6.5.1.20]{Lur09} that $BG$ is
$0$-connected, a.k.a. connected. In short, there is an equivalence of
categories between group objects in $\cal{X}$ and pointed, connected
objects in $\cal{X}$. Going in the other direction, the first step of
the \v{C}ech nerve construction
sends a morphism to its fibre product with itself and so
we define, for any object $X$ with basepoint $x:\terminal \to X$, the object
$\Omega_x X$ (we write simply $\Omega X$ if the basepoint is clear) of
loops in $X$ based at $x$ via the cartesian square
\square{\Omega_x X}{\terminal}{\terminal}{X\mathrlap{.}}{}{}{x}{x}

Using this equivalence, we may define, following
  \cite[Defn. 3.1]{NSS14} and~\cite[Prop. 3.2.76]{SS21}, a group action of a group object
$G$ on an object $X$ as a morphism to $BG$ together with an
identification of the fibre over the basepoint $\terminal \twoheadrightarrow BG$ with
$X$. In other words, it is the data of an object $X/\!\!/G$ and a cartesian square
\squareE{X}{X/\!\!/G}{\terminal}{BG\mathrlap{.}}{}{}{}{}
Since effective epimorphisms are stable under
pullback, the morphism $X \twoheadrightarrow X/\!\!/ G$ is an effective
epimorphism, as indicated. To see that this is a sensible definition
of a group action, take the
\v{C}ech nerve of $X\twoheadrightarrow X/\!\!/G$. This results upstairs in a groupoid, in which moreover
$U_n$ is
equivalent to $X \times G^n$. We have a diagram of the schematic
form
\[\begin{tikzcd}
 \cdots & X \times G\times G  \ar{r} \ar[shift left=2]{r} \ar[shift
 right = 2]{r}& X \times G \ar[shift left]{r}{d_0} \ar[shift
 right]{r}[swap]{d_1}& X \ar[r,two heads] & X/\!\!/G\mathrlap{,}
\end{tikzcd}\]
where the face morphisms 
$d_{0}, d_1: X \times G  \to X$ correspond to
projection on the right hand factor and the group action respectively, but the latter
satisfies the usual properties only up to higher coherent homotopies
specified by the subsequent stages.

The holy grail for the purposes of physics would be to classify all possible
actions by $G$ and to find the corresponding homotopy fixed points. To do so,
we need a notion of when two $G$-actions are equivalent and it is
convenient to do so by assembling a category of $G$-actions on
arbitrary objects in $\X$. A suitable category (which is, in fact, a topos) is
given by the slice category $\X_{/BG}$ (see {\em e.g.} \cite[Prop 3.2.76]{SS21}). Given an object therein, {\em
  i.e.} an object $E$ in $\X$ along with a morphism $E \to BG$ in
$\X$, we recover the object in $\X$ on which $G$ acts by pulling back
$E \to BG$ along the basepoint $\terminal \twoheadrightarrow BG$ of $BG$. In
cases where there is little risk of confusion, we will often write simply $E$ to denote the object $E
\to BG$ in $\X_{/BG}$ in the following.

As usual when studying group actions, it is useful to consider the
notion of an equivariant morphism, which we define to be a morphism in
$\X_{/BG}$. As we shall
see in the next Subsection, this notion is convenient for discussing
homotopy fixed points.
Occasionally, we abuse nomenclature and use the same term to refer to the underlying nonequivariant morphism in $\X$ induced by pullback along $\terminal\twoheadrightarrow BG$.
\subsection{Homotopy fixed points \label{sec:fix}}
We now wish to define a homotopy-theoretic version of a fixed
point for a group object $G$ acting on an object $X$ in a topos ${\cal
  X}$. As we have seen, a group action is a fibre sequence and so it is
natural to guess that a homotopy fixed point should be a section of the
corresponding morphism $X/\!\!/G \to BG$. To see that this makes sense, we
note that we may
pull back a section $s: BG
\to X/\!\!/G$ to get a commutative diagram
\[\begin{tikzcd}
    \terminal \ar[d,"x",swap] \ar[r,two heads] & BG \ar[d, "s"]\\
    X \ar{d}\ar[r,two heads]& X/\!\!/G \ar{d}\\
    \terminal \ar[r,two heads]& BG \mathrlap{,}
\end{tikzcd}\]
in which all squares are cartesian and the vertical composite
morphisms are identities. In this way, we associate to every section
the data of a morphism $x:\terminal \to X$ in ${\cal X}$, {\em i.e.} a point in $X$, that is moreover
equivariant with respect to the given action of $G$ on $X$ and the
trivial action on $\terminal$, reproducing the usual 1-categorical notion of a fixed
point of a group action. We say that $s$ is a homotopy fixed point
  through the underlying fixed point $x$.
Note that there may be more than one section through a given $x$, meaning that there is more data
associated to a homotopy fixed point than its underlying fixed point. This reflects the fact
that to be
a {\em homotopy} fixed point is not merely a property, but a structure: to be fixed means to be fixed up to
homotopy, and the structure records the possible homotopies that do
the job.

Since we shall be interested for physics in the question of finding
all homotopy fixed points, it is behoves us to find a home where
homotopy fixed points can live. As in \cite{GRWTS23}, a
useful strategy for physics (we shall discuss why shortly) is
to assemble the homotopy fixed points into an object $X^{hG}$ in $\X$,
which we furthermore equip with a morphism
to $X$, corresponding to sending a homotopy fixed point to its
  underlying fixed point and forgetting that it is a
fixed point.\footnote{Again, because to be a homotopy fixed point is a
  structure not a property, we cannot expect the morphism $X^{hG} \to X$ to be a
monomorphism, {\em i.e.} $-1$-truncated, unlike
the case of a common or garden group action on a set
(where it corresponds
to the
the inclusion of the fixed point subset).} 
This is achieved as follows. Given any morphism $f:Y \to Z$ in a topos $\X$
we have a triple adjunction~\cite{Lur09} 
\[\begin{tikzcd}
	\X_{/Y}
	\ar[r, "{f_!}", shift left=4.5, ""{name=0, anchor=center, inner sep=0}] 
	\ar[r, "{f_*}", shift right=4.5, swap,""{name=2, anchor=center, inner sep=0}] 
	& \X_{/Z}\mathrlap{.}
	\ar[l, "{f^*}"{description},""{name=1, anchor=center, inner sep=3}]
	\ar[draw=none, from=1, to=2, "\dashv"{anchor=center,rotate=-90}]
	\ar[draw=none, from=0, to=1, "\dashv"{anchor=center,rotate=-90}]
\end{tikzcd}\]
Given $X\in \X$, we construct the object $X^{hG}$ by
noting that when we set $Y=X$, $Z=\terminal$ and $f$ to be a terminal morphism
$\terminal_X:X \to \terminal$  in the above, the right-adjoint $(\terminal_X)_*$  
sends an object in $\X_{/X}$, {\em viz.} a
morphism $g: W \to X$, to an object in $\X$ whose points ({\em
  i.e.} its morphisms from $\terminal$) correspond to morphisms 
from
$\mathrm{id}: X \to X$ to $g: W \to X$, {\em i.e.} to sections of $g$.
Thus, it is reasonable to make the 
definition $X^{hG} := (\terminal_{BG})_\ast  (X/\!\!/G \to BG)$. As a check, the adjunction implies that
  $\X_{/BG}(BG,X/\!\!/G ) \simeq \X(\terminal,X^{hG})$, so points
  of $X^{hG}$ indeed correspond to sections of $X/\!\!/G \to
  BG$.

To get the morphism $X^{hG}\to X$, we need to use the basepoint of
$BG$, which we write explicitly as $b_G:\terminal\to BG$. The unit of the
adjunction $b_G^\ast\dashv (b_G)_\ast$ defines a functor $\Delta^1\times
\X_{/BG}\to \X_{/BG}$ (or, in other words, a natural transformation of
functors $\X_{/BG}\to \X_{/BG}$) sending the morphism
$(\{0,1\},\mathrm{id}_{X/\!\!/G})$ to $X/\!\!/G\to (b_G)_\ast b_G^\ast (X/\!\!/G)$. By
postcomposing this with $(\terminal_{BG})_\ast:\X_{/BG} \to \X$ we get a
functor $\Delta^1\times \X_{/BG} \to \X$ which sends the morphism
$(\{0,1\},\mathrm{id}_{X/\!\!/G})$ to a morphism $X^{hG}\to X$. 
(Alternatively, we may think of this as a functor $\X_{/BG}\to
\Fun(\Delta^1,\X)$
sending $X/\!\!/G \to BG$ to $X^{hG}\to X$.)

The construction of the morphism $X^{hG}\to X$ enjoys the following
functoriality properties.
Firstly, given $G$-actions on two objects $X$ and $Y$ and a morphism $f:X/\!\!/G\to Y/\!\!/G$ in $\X_{/BG}$ we get a commutative square
 \[
 \begin{tikzcd}[column sep=2cm]
 	X^{hG}\ar[d]\ar[r,"{(\terminal_{BG})_\ast f}"] & Y^{hG} \ar[d]\\
 	X \ar[r,swap,"{b^\ast_G f}"]&Y\mathrlap{,}
 \end{tikzcd}
\]
where $b^\ast_G f$ is a $G$-equivariant morphism from $X$ to $Y$.

Secondly, given a 2-simplex
\[	\begin{tikzcd}
		& BG\ar[dr,"{h}"] & \\
		 \terminal\ar[ur,"{b_G}"]\ar[rr,"{b_H}",swap] & &BH
	\end{tikzcd}
\]
      in $\X$ and a $X/\!\!/H\to BH \in \X_{/BH}$ representing an $H$-action on $X$, we may form $X /\!\!/G = h^*(X /\!\!/H) \to BG \in \X_{/BG}$ as the pullback of $X/\!\!/H\to BH$ along $h : BG \to BH$, which represents the restriction of the $H$-action on $X$ to a $G$-action, and we get another 2-simplex
      \begin{equation} \label{eq:hfpOverX}
	\begin{tikzcd}
		& X^{hG}\ar{dr} & \\
		 X^{hH}\ar{ur}\ar{rr}& & X\mathrlap{.}
	\end{tikzcd}
      \end{equation}
      in $\X$, expressing functoriality with respect to the group objects. The construction goes as follows. Letting $\eta_h$,
        $\eta_{b_H},$ and $\eta_{b_G}$ denote, respectively, the units
        of the adjunctions $h^\ast \dashv h_\ast$, $b_H^\ast \dashv
        (b_H)_\ast$, and $b_G^\ast \dashv (b_G)_\ast$, we get a
        functor $\Delta^2\times \X_{/BH}\to \X_{/BH}$ (or, in other
        words, a modification) that, regarded as a 2-simplex in
        $\Fun(\X_{/BH},\X_{/BH})$, takes the form
   \twosimplex{\mathrm{id}}{h_\ast h^\ast}{(b_H)_\ast
     b_H^\ast\mathrlap{,}}{}{\eta_h}{\mathrm{id}_{h_\ast}
     \eta_{b_G}\mathrm{id}_{h^\ast}}{\eta_{b_H}}
 Postcomposing with $(\terminal_{BH})_\ast:\X_{/BH}\to \X$ gives a functor
    $\Delta^2\times \X_{/BH}\to \X$ which, evaluated at $X/\!\!/H \to BH \in
 \X_{/BH}$, returns the 2-simplex in Eq.~\ref{eq:hfpOverX}.

Before going further, let us pause to discuss how these notions
  should be applied to physics. The discussion will be general, but we
  will use the familiar example of an Lie group symmetry of quantum mechanics, discussed
  in detail in \S\ref{sec:UnitQM}, to illustrate.

The
idea will be that the object $X$ represents a collection of QFTs with
some specified structure, with a particular QFT given by a point $x: \terminal
\to X$. In the case of quantum mechanics, an $x$ is specified by
a unitary representation of $\ssR$, or equivalently a
hamiltonian.

The group object $G$ represents a possible 
symmetry group of a QFT; to see if it really is a symmetry, however, we must
first specify more data, namely an action of $G$ on $X$. In quantum
mechanics, for example, the trivial action of 
on $X$ will give rise to symmetries acting via a smooth unitary
representation on the state space, while non-trivial actions will give
rise to smooth representations 
that are twisted in some way, such as projective or antiunitary representations.

Having specified the data of a group object $G$ and an action of
  it on $X$, we are now in a position to ask if there
is a  homotopy
fixed point through $x$; if there is, we say that the QFT $x$ admits a
$G$-symmetry (the action of $G$ is left implicit in the notation).
The associated homotopy fixed point is to be interpreted as the QFT $x$
equipped with the symmetry structure; we call it a $G$-symmetric
QFT. In quantum mechanics, for example, with the trivial action of $\mathsf{G}$, a
homotopy fixed point is given by a unitary representation of $\mathsf{G} \times
\ssR$ that restricts along $\ssR \hookrightarrow \mathsf{G} \times
\ssR$ to the representation specifying $x$ (equivalently, the
unitary operators assigned to each $\mathsf{g} \in \mathsf{G}$ commute with the hamiltonian).
We stress again that a symmetry of a QFT is a structure rather than a
property, because (for fixed $G$ and fixed action thereof) there may
be multiple homotopy fixed points through $x$. In quantum mechanics,
for example, there may be multiple inequivalent representations of $\mathsf{G} \times
\ssR$ that restrict suitably.
The $G$-symmetric
QFTs in $X$ assemble themselves into the object
$X^{hG}$ and the morphism $X^{hG} \to X$ sends a $G$-symmetric QFT
to its underlying QFT, forgetting the $G$-symmetry structure.

Since our
construction is functorial, it follows that group actions that
are isomorphic in $\X_{/BG}$ lead to isomorphic $X^{hG} \to X$ in
$\Fun(\Delta^1,\X)$, so are physically
equivalent.

It is to be stressed here that $G$ represents a symmetry
group that is extrinsic rather than intrinsic to $X$ (in particular, $G$ can be
chosen freely). There also exists a notion of an
intrinsic symmetry group of a particular QFT $x: \terminal
\to X$, given by the group object $\Omega_x X$ of loops based at
$x$. Furthermore, there is even a notion \cite{NSS14} of 
an intrinsic symmetry group $\mathrm{Aut}\, X$ of the
object $X$ of QFTs, at least when the object $X$, or rather the
morphism $X \to \terminal$ is suitably compact.
In practice, actually giving an explicit description of
$\mathrm{Aut}\, X$ starting from an explicit description of $X$ is a difficult
task. Moreover, the interplay between $\mathrm{Aut}\, X$ and $\Omega_x
X$ is subtle. Passing to an extrinsic notion of symmetry is convenient
in that it allows us to avoid having to do
any of this.

To remove some of the mystery of these assertions, consider the simple example of a
quantum mechanical system whose dynamics is invariant under
rotations of 3-dimensional space. There is an intrinsic $\mathsf{SO}(3)$
symmetry, and one possibility is for the states of the system to form a
representation thereof. But the fact that $\mathrm{Aut}\, X$ is
non-trivial for the object of quantum-mechanical theories (cf. \S\ref{sec:NonUnitQM})
means that one can also have projective representations of
$\mathsf{SO}(3)$.

Carrying on our discussion of the physics interpretation, an important remark is that
physicists are presumably interested only in objects in $X$ that admit
a point (since otherwise there is no QFT in the `object of QFTs') and
only in $G$-actions on $X$ that admit a
homotopy fixed point (since otherwise there will be no QFT that admits a $G$-symmetry under the given action). Moreover, given a $G$ and an $X$, it
  would be desirable to have a means of considering, in one go, all possible actions admitting
  fixed points, along with their homotopy fixed points.
We can achieve this by considering 
not just the topos $\X_{/BG}$ in which an
object corresponds to a $G$-action, but also the related category of pointed
objects. 

Given any topos $\X$, we define the corresponding category of pointed
objects $\X_\terminal$ as the slice category $\X_{\terminal/}$.\footnote{Note that $\X_\terminal$ is not a topos (unless it is trivial), so the usual theorems for topoi do not apply.} We claim that
$(\X_{/BG})_\terminal$ is a category whose objects are equivalent to homotopy
fixed points of $G$-actions. Indeed, the terminal object $\terminal$ in $\X_{/BG}$ is
$\mathrm{id}_{BG}$, so a section is manifestly an object in
$(\X_{/BG})_\terminal$. 
To recover the $G$-action (which is necessarily one that admits a
homotopy fixed point), we simply follow the functor
$(\X_{/BG})_\terminal \to \X_{/BG}$ that forgets the point, while to recover
the underlying fixed point $\terminal \to X$, we 
follow the functor $(b^\ast_G)_\terminal:(\X_{/BG})_\terminal \to \X_\terminal$
induced by $b^*_G: \X_{/BG} \to \X$ ({\em i.e.} pullback along the
basepoint $b_G: \terminal \to BG$). 

As well as recovering the individual homotopy fixed point
corresponding to an object in $(\X_{/BG})_\terminal$, we can recover the space
(not smooth space) of homotopy fixed points of a given action as the
 fibre of the functor $(\X_{/BG})_\terminal\to \X_{/BG}$ above a
 $G$-action $X/\!\!/G\to BG$. Indeed, this is equivalent via \S2.1.2 of \cite{Lur09}
 to the space $\X(\terminal,X^{hG})$.

It is sometimes
useful to exploit
a relation, described in \pagetarget{page:Actions-HFPs}{Appendix}~\ref{proof:Actions-HFPs}, between
the category $(\X_{/BG})_\terminal$ and
the topos $\Fun(\Delta^1,\X)_{/\mathrm{id}_{BG}}$. Since the latter is evidently
equivalent to the topos of actions on objects given by arrows in $\X$ by the group
given by looping the arrow $\mathrm{id}_{BG}$,
the relation shows that the problem of
finding group actions with homotopy fixed points
is really no different to the problem of finding bare group actions --
it simply takes place in a different topos. The problem of finding
bare group actions will be discussed in \S\ref{sec:findact}.
\subsection{Orbits and stabilizers \label{sec:orbstab}}

For better or
for worse, some physicists seem to be not so much 
interested in (or perhaps are
entirely unaware of) the space $X$ of
QFTs, but rather tend to focus on their pet
theory $x: \terminal \to X$ within it. We shall have frequent need to refer to
such a physicist in the sequel, so we shall give
them the allegorical moniker Simplicio.\footnote{We hope that no physicist (Galileo included)
 will be unduly offended by our doing so.}
Now Simplicio, parochial as they are, is 
presumably not (or not yet, at least) interested in
arbitrary $G$-actions on $X$, nor even $G$-actions on $X$ admitting (or
equipped
with) a homotopy fixed point as above, but rather only on $G$-actions
on $X$ admitting (or equipped
with) a homotopy fixed point through their favourite
$x$.

For Simplicio's benefit, it is convenient to formulate certain
notions associated to a point $x: \terminal \to X$. Firstly, by the orthogonal
factorization property we have a Postnikov decomposition
({\em cf}. the Whitehead tower of a space) 
\[ x: \terminal \to \dots \to \tge{m}{x}{X}\to \dots \to \tge{2}{x}{X} \to  \tge{1}{x}{X} \to X,\]
where $\tge{m}{x}{X}$ is the $(m-1)$-connected cover of
$(X,x)$ defined by Eq.~\ref{eq:ConnectedCover}. To see this,
  observe that since $\tge{m}{x}{X}$ is $(m-1)$-connected,  \cite[Prop. 6.5.1.20]{Lur09} tells us that the  
morphism $\terminal \to \tge{m}{x}{X}$ is $(m-2)$-connected. The
morphism $\tge{m}{x}{X} \to X$ appears at the top of
the diagram in Eq.~\ref{eq:ConnectedCover}, and is the pullback of
$\ast \to \tau_{\le m-1}X$. This latter morphism is $(m-2)$-truncated, a
consequence of~\cite[Lem. 5.5.6.15]{Lur09} and the fact that
  looping increases truncatedness. Since truncatedness is
  preserved by pullback, $\tge{m}{x}{X} \to X$ is $(m-2)$-truncated
  too.

Although it is not true in a
general topos, in the topos $\Sm$ of smooth spaces being 0-connected implies\footnote{The
  converse does not hold, a counterexample being the classifying space
  of bundles with connections introduced in \S\ref{sec:smoothspaces}.} being path-connected,
{\em i.e.} any two points are homotopic to one
another, meaning
physically that the theories are equivalent.
Thus we can sensibly
interpret, in our framework, what Simplicio
does as studying homotopy fixed points of group actions on the
connected component $\tge{1}{x}{X}$, rather than $X$ itself,
(with the QFT $x$ replaced by the point $\terminal \to
\tge{1}{x}{X}$ defined by the decomposition above.) This
modus operandi, which has both features and a bug, will be of
some significance in the sequel. The first feature,  as we shall see in the next Subsection, is that all symmetries of $\tge{1}{x}{X}$ ({\em i.e.} all possible homotopy fixed
points of all possible actions) can be found, as we shall see, in a way analogous to what is done in spaces. The second feature, is that every $G$-action on $X$
equipped with a homotopy fixed point, $s_x$, through $x$ descends to
a $G$-action on $\tge{m}{x}{X}$, for each $m$, so that
  Simplicio will not miss any of the possible symmetries of the QFT $x$. To see this, consider the Postnikov decomposition of $s_x$,
\[\begin{tikzcd}
    \terminal \ar[d] \arrow[bend right=50, swap]{ddddd}{x} \ar[r, two heads] & BG \ar[d] \arrow[bend left=50]{ddddd}{s_x}\\
    \vdots \ar[d] & \vdots \ar[d]\\
    \tge{m}{x}{X} \ar[d] \ar[r,two heads]& \tge{m}{s_x}{X/\!\!/G} \ar[d]\\
    \vdots \ar[d]& \vdots \ar[d]\\
    \tge{1}{x}{X} \ar[d] \ar[r,two heads]& \tge{1}{s_x}{X/\!\!/G} \ar[d]\\
    X \ar[d] \ar[r,two heads]& {X/\!\!/G} \ar[d] \\
	 \terminal \ar[r,two heads]& BG\mathrlap{,}
\end{tikzcd}\]
in which all squares are cartesian. Thus $\tge{m}{s_x}{X/\!\!/G}\to
BG$ defines a $G$-action on $\tge{m}{x}{X}$. 

The bug is that not every
$G$-action on $\tge{1}{x}{X}$ admitting a homotopy fixed
  point extends to a $G$-action on $X$.
Thus we exhibit a new
kind of anomaly that is associated purely to global symmetries,
arising because there are group actions of individual QFTs that do not
respect the smooth structure of the whole space of QFTs. We call these
anomalies `smoothness anomalies'. As we shall see in \S\ref{sec:UnitQM}, an example arises even when $X$
corresponds to the smooth space of unitary quantum mechanical
theories, so these smoothness anomalies seem likely to play a r\^{o}le
in the real world.

The other notions that we wish to formulate generalize the
usual notions of the orbit and stabilizer of a group acting on a set.
Given a point $x:\terminal\to X$ and an action on $X$
by a group object $G$, we can form the morphism $[x]:\terminal \to X/\!\!/G$ obtained by composing $x$ with the inclusion $X \twoheadrightarrow X/\!\!/G$
of the fibre. We call $[x]$ the stabilizer morphism of $x$. The Postnikov decomposition of the stabilizer morphism for each $m\geq 1$ gives the pointed, $(m-1)$-connected
object $\tge{m}{[x]}{X/\!\!/G}$, whose $m$th looping $\Omega^m\tge{m}{[x]}{X/\!\!/G}$ we call
the $(m-1)$-stabilizer of $x$ under the $G$-action. In particular,
$\Omega\tge{1}{[x]}{X/\!\!/G}$ generalizes the usual notion of the stabilizer subgroup at
a point of
a group acting on a set.
 Pulling the stabilizer morphism $[x]$ back along $X\to  X/\!\!/G$ we
 obtain a commutative diagram of the form
 \[\begin{tikzcd}
    G  \ar[d,"{o_x}"] \ar[r,two heads] & \terminal \ar[d, "{[x]}"]\\
    X \ar[d] \ar[r,two heads]& {X/\!\!/G} \ar[d] \\
	 \terminal \ar[r,two heads]& BG\mathrlap{,}
\end{tikzcd}\]
in which both squares are cartesian; we call the resulting  
morphism $o_x:G\to X$ the orbit morphism, since it
generalizes the usual orbit map for a group acting on a set. Its
$(m-2)$-connected/$(m-2)$-truncated factorization for each
$m\geq 1$  defines an object $\mathrm{im}_m \, o_x$ which
we call the $(m-1)$-orbit of $x$ under the $G$-action. The object
$\mathrm{im}_1 \, o_x$ generalizes the usual notion of the orbit of a point in a
set under a
group action.\footnote{As an aside, the usual notions of
  transitive and free actions of group actions on sets are encoded, respectively, by
  $-1$-connectedness or $-1$-truncatedness of the orbit morphism,
  leading to obvious (though $x$-dependent) generalizations of $m$-transitive and $m$-free
  group actions in any topos, but we make no use of them here.}

Universality of the
pullbacks implies, furthermore, that we have a diagram of the form
\[\begin{tikzcd}
    G  \ar[d] \ar[r] & \terminal \ar[d]\\
    \vdots \ar[d] & \vdots \ar[d]\\
\mathrm{im}_{m} \, o_x\ar[d] \ar[r,two heads]& \tge{m}{[x]}{X/\!\!/G} \ar[d]\\
    \vdots \ar[d]& \vdots \ar[d]\\
    \mathrm{im}_{1}  \,o_x\ar[d] \ar[r,two heads]& \tge{1}{[x]}{X/\!\!/G} \ar[d]\\
    X \ar[d] \ar[r,two heads]& {X/\!\!/G} \ar[d] \\
	 \terminal \ar[r,two heads]& BG\mathrlap{,}
\end{tikzcd}\]
in which all squares are cartesian.

Let us now discuss what can be said about the $m$-orbits and
  $m$-stabilizers at $x$ when we have a homotopy fixed point at $x$,
  {\em i.e.} a section $s_x$ through $x$. A first observation is that
  when we have a homotopy fixed point through $x$, the orbit morphism $G \to X$
  factors through $x: \terminal \to X$, as can be seen by considering the diagram
\[\begin{tikzcd}
	{G} & \terminal \\
	\terminal & BG \\
	X & {X/\!\!/G} \\
	\terminal & BG
	\arrow[from=4-1, to=4-2]
	\arrow[from=3-1, to=3-2]
	\arrow[from=3-2, to=4-2]
	\arrow[from=3-1, to=4-1]
	\arrow["{s_x}", from=2-2, to=3-2]
	\arrow[from=1-2, to=2-2]
	\arrow["{[x]}", bend left=40, from=1-2, to=3-2]
	\arrow[from=1-1, to=1-2]
	\arrow["x"', from=2-1, to=3-1]
	\arrow[from=2-1, to=2-2]
	\arrow[from=1-1, to=2-1]
\end{tikzcd}\]
in which all squares are cartesian. (An action having an orbit morphism that factors through $x:\ast \to X$ does not necessarily imply the existence of a
  homotopy fixed point through $x$, however.)

The observation that the orbit morphism $G \to X$
  factors through $x:\terminal \to X$ when we have a homotopy fixed point $s_x$, naturally
  leads us to compare the Postnikov decompositions of the morphisms
  $s_x$ and $[x]$. The sources of these morphisms, respectively $BG$
  and $\terminal$ disagree, but their
  targets agree, so we can hope that their Postnikov decompositions
  agree far from the sources but close to the targets, and moreover
  that the region in
which they coincide will be larger the more $BG$ resembles $\terminal$, {\em i.e.}
 the more connected it is. This intuition leads to the
following theorem (\pagetarget{page:OrbitSpace}{Appendix}~\ref{proof:OrbitSpace}): when $BG$ is
$p$-connected and $0\le m\le p+1$, we have that $ \tge{m}{s_x}{X/\!\!/G}\simeq
 \tge{m}{[x]}{X/\!\!/G}$,  that $s_x:BG\to X/\!\!/G$ has a
  $(m-2)$-connected/$(m-2)$-truncated factorization $BG\to \tge{m}{[x]}{X/\!\!/G}\to
  X/\!\!/G$, where the morphism $\tge{m}{[x]}{X/\!\!/G}\to X/\!\!/G$ is as above, and
  that consequently  $\tge{m}{x}{X}\simeq \mathrm{im}_m \, o_x$.

The usefulness of this result for our purposes is that the Postnikov
decomposition of any section $BG \to X/\!\!/G$ depends, toward its far end,
only on its underlying fixed point $x$.
So if we try to reconstruct
  sections by piecing together their decompositions, part of our
  work can be done in a wholesale fashion.
Since $BG$ is always $0$-connected, we can make use of
this at least for $\tge{1}{x}{X}$, which, as we have argued, is Simplicio's
{\em domus}.

Even more is true when the object $X$ is path-connected. 
For then the morphism of spaces
$\X_{/BG}(BG,  \tge{1}{[x]}{X/\!\!/G}) \to \X_{/BG}(BG, X/\!\!/G)$, which formalizes the
construction of sections of $X/\!\!/G \to BG$ from sections of $ \tge{1}{[x]}{X/\!\!/G} \to
BG$ described above, is not only
$-1$-truncated, but is also $-1$-connected, as will be shown
  momentarily. Thus it is an equivalence and we can find the whole space of homotopy fixed points
of the $G$-action on path-connected $X$ from the
space of homotopy fixed points of the corresponding action on a single
$\tge{1}{x}{X}$. 
This will be of use when we consider group actions on the smooth
  space classifying principal bundles with connection.
      To see that the morphism $\X_{/BG}(BG, \tge{1}{[x]}{X/\!\!/G}) \to
    \X_{/BG}(BG, X/\!\!/G)$ is $-1$-connected, {\em i.e.} essentially
    surjective, we note that, since $X$ is path-connected, for any section $s$ in $\X_{/BG}(BG, X/\!\!/G)$, the composite $\terminal\to BG\xrightarrow{s}X/\!\!/G$ is homotopic to $[x]$. Thus, from the discussion above, $s$ factors through $ \tge{1}{[x]}{X/\!\!/G}$, and is consequently in the essential image of  $\X_{/BG}(BG,  \tge{1}{[x]}{X/\!\!/G}) \to
    \X_{/BG}(BG, X/\!\!/G)$.
\subsection{Finding group actions and homotopy fixed points \label{sec:findact}}
We now discuss the practical business of finding group actions and
homotopy fixed points.

We have already introduced the notions
of $n$-truncated and $n$-connected morphism, for $n \geq -2$.For QFTs in spacetime
dimension $d$ and typical choices of target space,\footnote{Namely,
  the target category should be $d$-discrete, meaning that it is
  equivalent to a category in which the $k$-morphisms for $k>d$ are
  identities. Other choices of target can lead to spaces of field theories that are not $d$-truncated; in these cases we 
expect that even-higher-form symmetries could occur.} the corresponding smooth space $X$ is 
$d$-truncated and we will see in physical examples
that $X$ sometimes enjoys various connectedness
properties as well. As for the extrinsic symmetry group object $G$, choosing $BG$ to
be $n$-connected amounts to insisting that the `symmetry be at least
$n$-form or higher', in the physics lingo, while we shall shortly
show that it can be taken to be $d$-truncated without loss of
generality. This nearly, but not quite, corresponds to the physics
lore that QFTs in dimension $d$ can have at most $(d-1)$-form
symmetries. In fact, we will see that $d$-form symmetries do occur, but they manifest themselves in a more subtle way.

These facts about truncatedness and connectedness motivate us
to study their interplay 
with the notions of fibre sequences and sections that we
used earlier to define group actions and homotopy fixed points.
As we shall see, they
dovetail nicely. As ever, we shall try to emphasize the physical
motivations underlying the mathematical results. All of those results
are known in homotopy theory, but the fact that they generalize to any
topos is sometimes new. 

Let us begin by showing that when $X$
is $n$-truncated, one can replace any action of a group on $X$ by an action of the
$n$-truncation of that group, without loss of generality, but that truncating
further implies a loss of generality.
This immediately implies the claim above that QFTs in spacetime dimension $d$
can have $d$-form symmetries, but not $(d+1)$-form symmetries. 

The first part follows immediately from the fact \cite[Cor. 3.62]{Ras20} that the functor $ \tau_{\leq n}
({\cal X}_{/B\tau_{\leq n}G}) \to \tau_{\leq n} ({\cal X}_{/BG})$ induced by pullback along $BG
  \to B\tau_{\leq n} G$ is an equivalence of categories. On objects this
  implies that, given an action of $G$ on an $n$-truncated $X$, we have a diagram 
\pastingHE{X}{X/\!\!/G}{\tau_{\leq n+1} (X/\!\!/G)}{\terminal}{BG}{B\tau_{\leq n} G\mathrlap{,}}
where the left cartesian square is the fibre sequence corresponding to the
given action and the right square is also cartesian. This tells us
that any $G$-action can be obtained from an action by the $n$th
truncation of $G$, $\tau_{\leq n} G$. Moreover, the object $X^{hG}$
is isomorphic over $X$ to the object $X^{h\tau_{\leq n}
  G}$, in the sense of Eq.~\ref{eq:hfpOverX}.

That truncating further implies a loss
of generality  follows by observing an ordinary group (a.k.a. a $0$-truncated group object in
spaces) can act non-trivially 
on a set (a.k.a. a $0$-truncated object in spaces).

So how do we recover the physics lore that one can at most have
$(d-1)$-form symmetries? As we have already remarked, Simplicio is interested in a single QFT $x: \terminal\to X$, and their study of symmetries
of $x$ amounts here to finding group actions on the $0$-connected object
$\tge{1}{x}{X}$. Moreover, Simplicio is interested in $G$-actions that
admit a homotopy fixed point through $x$. As such, they find themself in
the situation of the following theorem, which we prove in
\pagetarget{page:ConnectedTruncatedX}{Appendix}~\ref{proof:ConnectedTruncatedX}:
if $X$ is both $n$-truncated and $0$-connected, then every $G$-action on $X$ equipped with a homotopy fixed point is in the essential image of the functor $(\X_{/B\tau_{\leq n-1}G})_\terminal\to (\X_{/BG})_\terminal$, and thus every $G$-action on $X$ admitting a homotopy fixed point is in the essential image of 
  the functor $\X_{/B\tau_{\leq n-1}G}\to \X_{/BG}$.

A weaker statement holds when $X$ is not connected. Let $\eta:BG\to B\tau_{\leq n-1}G$ be the unit of the truncation adjunction and let $\iota:\tau_{\leq n}(\X_{B\tau_{\leq n-1}G})\to \X_{/B\tau_{\leq n-1}G}$ be the inclusion of truncated objects. Unlike the connected case, not every $G$-action on $X$ that admits a homotopy fixed point is in the essential image of $\eta^\ast$. Nor is $\eta^\ast \circ \iota$ an equivalence, as it is for $B\tau_{\leq n}G$. However,  for those $X/\!\!/G$ that are in the essential image of $\eta^\ast$, \emph{i.e.}, of the form $\eta^\ast(X/\!\!/\tau_{\leq n-1}G)$  (with $X/\!\!/\tau_{\leq n-1}G$ corresponding to a $\tau_{\leq n-1}G$ action on $X$), then $X^{hG}$ is equivalent to $X^{h\tau_{\leq n-1}G}$ over $X$. This follows since the counit of the adjunction 
  \[\begin{tikzcd}
	{\tau_{\leq n}(\X_{/B\tau_{\leq n-1}G})} & {\X_{/B\tau_{\leq n-1}G}} & {\X_{/BG}\mathrlap{,}}
	\arrow[""{name=0, anchor=center, inner sep=0}, "{\tau_{\leq n}}"', shift right=2, from=1-2, to=1-1]
	\arrow[""{name=1, anchor=center, inner sep=0}, "\iota"', shift right=2, from=1-1, to=1-2]
	\arrow[""{name=2, anchor=center, inner sep=0}, "{\eta_!}"', shift right=2, from=1-3, to=1-2]
	\arrow[""{name=3, anchor=center, inner sep=0}, "{\eta^*}"', shift right=2, from=1-2, to=1-3]
	\arrow["\dashv"{anchor=center, rotate=-90}, draw=none, from=0, to=1]
	\arrow["\dashv"{anchor=center, rotate=-90}, draw=none, from=2, to=3]
\end{tikzcd}\]
is an equivalence of functors. Since both $\eta^\ast$ and $\eta_\ast$ preserve $n$-truncated objects and morphisms, they induce an adjunction
  \[\begin{tikzcd}
	{\tau_{\leq n}(\X_{/B\tau_{\leq n-1}G})} & {\tau_{\leq n}(\X_{/BG})\mathrlap{,}} 	\arrow[""{name=0, anchor=center, inner sep=0}, "{\eta^\ast}"', shift right=2, from=1-2, to=1-1]
	\arrow[""{name=1, anchor=center, inner sep=0}, "\eta_\ast"', shift right=2, from=1-1, to=1-2]
	\arrow["\dashv"{anchor=center, rotate=-90}, draw=none, from=0, to=1]
\end{tikzcd}\]
in which $\eta^\ast$ is full and faithful. Thus, the unit of this adjunction is an equivalence. The image of the components of this unit under $(\terminal_{B\tau_{\leq n-1}G})_\ast$ appear in (\ref{eq:hfpOverX}), manifesting the equivalence of $X^{hG}$ and $X^{h\tau_{\leq n-1}G}$ over $X$ in this case.

The upshot is that Simplicio sees only $(d-1)$-form
symmetries, but only because of their myopic focus on individual
QFTs. Once one accepts that the smooth space $X$ of QFTs is not
connected (as it rarely is), $d$-form symmetries can occur.

We remark here that it does not seem reasonable
to dismiss these $d$-form symmetries as
irrelevant for physics. Indeed, starting
from some QFT $x$ in a smooth space $X$, one can often reach an inequivalent
QFT $y$ (which therefore belongs to a distinct connected component), by
means of a smooth change in the coupling
constants.
(That this is so will become
obvious when we describe the smooth space of quantum-mechanical
theories in \S\ref{sec:NonUnitQM}, but for now let us give an even simpler example: 
the
smooth space representing a manifold $M$ has a distinct connected component for every
distinct point in $M$.\footnote{Mathematically, this is best understood that there
are two distinct notions of connectedness in the topos of smooth spaces. A connected
manifold in ${\Sm}$ is disconnected in the sense of connectedness
described here, but is connected in the other sense \cite{Sch13}.}) So
$d$-form symmetries could play a r\^{o}le
in any
phenomena where a smooth family or ensemble of QFTs are considered,
such as Berry phases.

A related remark is that there is a significant difference between the
topos of spaces and a generic topos when it comes to connectedness. In
the former, every space is a coproduct ({\em i.e.} a
disjoint union) of connected spaces. So while there certainly exist
non-connected spaces, they are generated in a certain sense by
connected ones. In a
generic topos this is not so. (For the example of the smooth space
representing a manifold $M$ given above, the coproduct of the
connected components returns not $M$ with its original smooth structure, but the set underlying $M$
equipped with the smooth structure corresponding to the discrete
topology.)

This difference will cause headaches when we start trying to actually find
group actions on smooth spaces. Since actions correspond to fibre sequences
and since the Postnikov decomposition can be carried out fibrewise, for any action of a group object $G$ on an object $X$ we have an
associated diagram
\[\begin{tikzcd}
    X & {X/\!\!/G} \\
      \vdots & \vdots \\
    	 {\tau_{\leq m}X} & {\tau_{\leq m}X/\!\!/G} \\
	 {\tau_{\leq m-1}X} & {\tau_{\leq m-1}X/\!\!/G} \\
          \vdots & \vdots \\
	 \terminal & BG\mathrlap{,}
         \arrow[from=1-1, to=2-1]
         \arrow[from=2-1, to=3-1]
         \arrow[from=3-1, to=4-1]
         \arrow[from=4-1, to=5-1]
         \arrow[from=5-1, to=6-1]
          \arrow[from=1-2, to=2-2]
         \arrow[from=2-2, to=3-2]
         \arrow[from=3-2, to=4-2]
         \arrow[from=4-2, to=5-2]
         \arrow[from=5-2, to=6-2]
         \arrow[twoheadrightarrow,from=1-1, to=1-2]
         \arrow[twoheadrightarrow,from=2-1, to=2-2]
         \arrow[twoheadrightarrow,from=3-1, to=3-2]
         \arrow[twoheadrightarrow,from=4-1, to=4-2]
         \arrow[twoheadrightarrow,from=5-1, to=5-2]
           \arrow[twoheadrightarrow,from=6-1, to=6-2]
\end{tikzcd}\]
in which all squares are cartesian. The left vertical arrows denote the Postnikov decomposition of \(X \to \terminal\), whereas the right vertical arrows denote that of \(X/\!\!/G \to BG\). For the latter, we adopt the notation \(\tau_{\leq m}X/\!\!/G\), indicating that \(\tau_{\leq m}X/\!\!/G \to BG\) represents a \(G\)-action on \(\tau_{\leq m}X\). By pasting of pullbacks, and
noting that for physics the object $X$ will be $n$-truncated for some
$n$ (so we may as well take $G$ to be $n$-truncated as well), we see
that one
can find $G$-actions via a finite bootstrap algorithm, provided one can find
the individual cartesian squares of the form
\[\begin{tikzcd}
    Y \ar[d] \ar[r,dashed] & ? \ar[d,dashed]\\
    V \ar[r] & W\mathrlap{,}
  \end{tikzcd}\]
where the morphism $Y \to V$ comes as close as it can to being an
isomorphism, in that it is both $m$-truncated and $m-1$-connected, for
some $m$.

 As we show in \pagetarget{page:ConnectedActions}{Appendix}~\ref{proof:ConnectedActions}, one can do this provided one makes
 certain technical assumptions about the objects $X$, $X /\!\!/ G$ and the
 morphism $X \to X/\!\!/G$. These assumptions
involve
no loss of generality in the topos ${\cal S}$, where every object is
isomorphic to a coproduct of connected objects, but they severely
hamstring us in smooth spaces, because the smooth spaces $X$
representing QFTs rarely take such a form, and even if they do we are
only able to find the fibre sequences satisfying the other assumptions in
this way. So we cannot generally hope to be able to find all
$G$-actions in this way, as we
did for TQFTs in \cite{GRWTS23}.

Happily for Simplicio, the assumptions hold in any topos when $X$ is
connected and admits a point (as $\tge{1}{x}{X}$ surely is and
does). Indeed, suppose that in the 
process of constructing the possible $X/\!\!/G$ via Postnikov decomposition we
have succeeded in constructing the possible ${\tau_{\leq
    m-1}X/\!\!/G}$. The latter
object is also connected and also admits a point, so there exists an
effective
epimorphism $\terminal \twoheadrightarrow \tau_{\leq m-1} X$. To try
to construct the possible ${\tau_{\leq m}X/\!\!/G}$, we consider the
diagram
\[\begin{tikzcd}
  F_m  \ar[d] \ar[r,two heads] &	 {\tau_{\leq m}X} \ar[d] \ar[r,two heads,dashed] & {?} \ar[d,dashed]\\
\terminal \ar[r,two heads] &	 {\tau_{\leq m-1}X} \ar[r,two heads]& {\tau_{\leq m-1}X/\!\!/G}\mathrlap{,} 
\end{tikzcd}\]
  in which the left square is cartesian and in which we wish to find the
  possible right-hand cartesian squares (for each of which the entry indicated
  by `$?$'
  corresponds to a cromulent ${\tau_{\leq m}X/\!\!/G}$).
The reverse pasting lemma, \cite[Lem. 3.7.3]{ABFJ20}, shows that finding
the right-hand cartesian squares is equivalent to finding the cartesian
rectangles, which have the special form of a fibre sequence and so are
hopefully easier to find.\footnote{In particular, they
are classified by the space $\X(
\tau_{\leq m-1}X/\!\!/G,B\mathrm{Aut}\, F_m)$
\cite{NSS14}, which may be interpreted
\cite{NSS14} as degree-one cohomology of $\tau_{\leq m-1}X/\!\!/G$ with local coefficients
in $\Aut \, F_m$.}

We remark for later purposes that an $m$-truncated and $(m-1)$-connected object such
as $F_m$ with $m \geq 1$ is called an $m$-gerbe in \cite{Lur09}, or an
Eilenberg-Mac Lane object of degree $m$ if it is pointed. Since any
point in a connected object is homotopic to any other  in smooth
spaces, it will do us little harm to blur the distinction between the
two.

We now discuss the business of finding the homotopy fixed points
corresponding to a given group action. Again, we cannot expect this to
be an easy endeavour in a general topos and will have to rely on
applying {\em ad hoc} results to favourable cases. Our main useful
result exploits the notions of the $m$-orbit and $m$-stabilizer of a given
point $x: \terminal \to X$ of
$X$ that we introduced in \S\ref{sec:orbstab} and
simplifies the finding of the homotopy fixed points through
$x$.  Indeed, we saw in \S\ref{sec:orbstab} that when $BG$ is $(m-1)$-connected, any such
section factorizes as $s: BG \to \tge{m}{[x]}{X/\!\!/G} \to X/\!\!/G$ (and furthermore
this pulls back along $X \to X/\!\!/G$ to $x: \terminal \to \tge{m}{x}{X} \to X$). Thus,
instead of finding the sections of $X /\!\!/ G \to BG$ over $x$, we can attack the
problem of finding
sections of $\tge{m}{[x]}{X/\!\!/G}  \to BG$. Since the fibre of $\tge{m}{[x]}{X/\!\!/G}
\to BG$, {\em viz. } $\tge{m}{x}{X}$, is $(m-1)$-connected, finding its sections is
usually easier. We still have to find the morphism $\tge{m}{[x]}{X/\!\!/G} \to X/\!\!/G$, of
course.
We stress that this trick always works for $m=1$, so is of general
applicability.

To introduce our final box of tricks, we recall that we have already seen that finding fibre sequences  becomes simpler when $X$
is connected and admits a point. We close this section by
discussing further simplifications that occur in finding both fibre sequences  and
sections when $X$ is $k$-connected, for large enough $k$. 

To discuss this will take us into the
topos-theoretic version of stable homotopy theory. This is most
conveniently done in the category $\X_\terminal$ of pointed objects. Since for
physics we are interested in $X$ admitting a point ({\em i.e.} a
QFT), and since $X$ will be connected here all points are homotopic
to one another, there is no loss of generality in doing so.

We have already described the looping operation $\Omega$ in \S\ref{sec:findact}
on a pointed object as the pullback of the point along itself.
We now regard 
this as a functor $\X_\terminal \to \X_\terminal$. It is right-adjoint to the
suspension functor $\Sigma: \X_\terminal \to \X_\terminal$, which on an object in $\X_\terminal$ is given
by the pushout along the terminal morphism. When $X$ is pointed and $k$-connected, the unit
$X \to \Omega \Sigma  X$ (which is defined on objects by the
universality of the pullback defining the looping) is $2k$-connected as a morphism in $\X$ (this generalizes the
Blakers-Massey theorem to any topos \cite{ABFJ20}), as is 
the counit $\Sigma \Omega X \to X$. 

Letting $\X_\terminal^{n,k}$ denote the full subcategory of $\X_\terminal$ on the
pointed, $n$-truncated and $k$-connected objects of $\X$, it follows that we induce an
adjunction \begin{tikzcd}
	\tau_{\leq n+1} \Sigma:\X_\terminal^{n,k} &  \X_\terminal^{n+1,k+1}:\Omega
	\arrow[""{name=1, anchor=center, inner sep=0}, shift left=1.5, from=1-1, to=1-2]
	\arrow[""{name=2, anchor=center, inner sep=0}, shift left=1.5, from=1-2, to=1-1]
	\arrow["\dashv"{anchor=center, rotate=-90}, draw=none, from=1, to=2]
\end{tikzcd}. We claim that
the unit and counit of this adjunction are $n$-truncated and
$2k$-connected as morphisms in $\X$. In particular, the adjunction defines an equivalence
of categories for $2k \ge n$ (see \cite[Thm. 5.1.2]{MS16} for a similar result).

To see that this is so, observe that the unit, considered as a morphism in $\X$, is a morphism of $n$-truncated objects, so is
itself $n$-truncated. It factors, moreover, as
\(
 	X\to \Omega \Sigma X \to  \Omega \tau_{\leq n+1} \Sigma X.
     \)
 where we have seen that the first morphism is $2k$-connected and where the
 second morphism is $n$-connected as discussed in \S2.4 of \cite{BL21}. So either $2k < n$ and the
 unit is $n$-truncated and manifestly $2k$-connected, or $2k \ge n$ and the unit is
 $n$-truncated and $n$-connected, in which case it is an isomorphism
 and so $2k$-connected as well. It is, therefore, also an isomorphism in $\X_\terminal$. A similar argument holds for the
 counit. 

 The power of this result, for our purposes, is that when $X$ is
 sufficiently connected, we can find the group actions on it and
 sections thereof from the group actions and sections of either its
 looping or its suspension
 and that this
 process can be iterated provided we remain in the stable range $2k
 \geq n$. To be explicit, let us discuss the most favourable case where $X$ is an Eilenberg-Mac Lane object of degree $n$ in $\X$, meaning a pointed, $n$-truncated and
 $(n-1)$-connected object in $\X$.
 
We have already seen in \S\ref{sec:fix}
 that a group action of $G$ on any $X$ with a homotopy fixed point
 corresponds to an object $BG \to X/\!\!/G$ in $(\X_{/BG})_\terminal$. It will
 comes as no great surprise that when $X$ is an
 Eilenberg-Mac Lane object of degree $n$ in $\X$, then  $BG \to X/\!\!/G$
 is itself an Eilenberg-Mac Lane object in $\X_{/BG}$. Indeed, the
 morphism $X \to \terminal$ in $\X$ is $n$-truncated and
 $(n-1)$-connected, so the same is true of any morphism $X/\!\!/G \to BG$
 defining a group action via \S\ref{sec:gpd}. But this means that $X/\!\!/G \to
 BG$ is an $n$-truncated and
 $(n-1)$-connected object in $\X_{/BG}$ and thus, as above, the section  $BG \to X/\!\!/G$ defines an Eilenberg-Mac Lane object in $\X_{/BG}$.
 
Letting $\Omega_{BG}:(\X_{/BG})_\terminal\to (\X_{/BG})_\terminal$ and
$\Sigma_{BG}:(\X_{/BG})_\terminal\to (\X_{/BG})_\terminal$ denote looping and
suspension in the topos $\X_{/BG}$, then since $b^*_G$ preserves colimits and limits we have that
 $b^*_G\Omega_{BG} (BG\to X/\!\!/G)$ is equivalent to $\Omega (\terminal\to X)$, where $(\terminal\to X)\simeq b^\ast_G(BG\to X/\!\!/G)$ is an object in $\X_\terminal$,
  and
 likewise $b^*_G\tau_{\leq n+1} \Sigma_{BG} (BG\to X/\!\!/G)$ is equivalent to $\tau_{\leq n+1} \Sigma (\terminal\to
 X)$. So looping sends a section of an action on $X$ to a section of
 an action on $\Omega X$, \&c.
\section{The topos of smooth spaces \label{sec:smoothspaces}}
According to the geometric version of the cobordism hypothesis \cite{GP21}, QFTs
assemble themselves into objects in the topos ${\Sm}$ of smooth spaces, given by sheaves on the site $\Cart$ whose objects are nonnegative integers, with $p$ representing $\ssR^p$, whose morphisms are smooth maps 
$\phi: \ssR^p \to \ssR^q$, and whose covers are given by
good open covers (meaning a cover for which every finite non-empty
intersection is diffeomorphic to some $\ssR^p$).\footnote{The topos
  ${\Sm}$ is, in fact, equivalent to the topos of sheaves on the site
  of manifolds with open covers~\cite{Bun22}.}
A sheaf here is a presheaf,
that is a functor $\Cart^{op} \to {\cal S}$ satisfying descent
(See Defn.~6.2.2.6 of~\cite{Lur09}).\footnote{We remark that, as in 1-categories, limits in
  sheaves can be computed in presheaves, which in turn can be
  computed objectwise.}

The data of an object in ${\Sm}$ are specified by a map $\Cart^{op}
\to {\cal S}$ of simplicial sets,
so consist of: a space for each $p$, a morphism of spaces for
each smooth map $\phi: \ssR^p \to \ssR^q$, a 2-morphism of spaces for
each pair $(\phi: \ssR^p \to \ssR^q, \phi^\prime: \ssR^q \to \ssR^r)$
of composable smooth maps, and so on.

The terminal object $\terminal$ in $\Sm$ corresponds to the locally constant sheaf with value
  $\ast$. Perhaps the next simplest objects in ${\Sm}$ correspond to smooth
manifolds. Given a smooth manifold $\mathsf{M}$,\footnote{We remind the
reader that we use sans serif typeface for classical notions in
differential geometry and serif typeface for the corresponding
notions in ${\Sm}$.} we define a
corresponding smooth space $M$ by assigning: to $p$, the 0-truncated
space corresponding to the
set $\mathrm{Map}(\ssR^p, \mathsf{M})$ of smooth maps from $\ssR^p$
to $\mathsf{M}$; to $\phi$, the morphism of spaces corresponding to the map
of sets $\mathrm{Map}(\ssR^q,
\mathsf{M}) \to \mathrm{Map}(\ssR^p, \mathsf{M}): \alpha \mapsto \alpha \circ
\phi$; and so on. The smooth space $M$ is 0-truncated, but
not 0-connected (unless $\mathsf{M}$ is a point):  the first follows since a sheaf is truncated precisely when its underlying presheaf is and truncatedness of presheaves can be tested objectwise, and the second then follows since $\tau_{\leq 0} M = M \neq \terminal$ if $\mathsf{M}$ is not a point.

The data of a morphism in ${\Sm}$ 
are specified by giving a map $\Delta^1 \times \Cart^{op}\to
{\cal S}$ of simplicial sets.
Given two manifolds $\mathsf{M}$ and $\mathsf{N}$ and a smooth map
$\mathsf{f}: \mathsf{M} \to \mathsf{N}$, there is a morphism  $f:M
\to N$ that restricts to $M$ and $N$, respectively, on the two
non-degenerate 0-simplices $\{0\},\{1\} \in \Delta^1_0$ and that sends the
1-simplex  $(\{0,1\},\phi) \in \Delta^1_1 \times \Cart^{op}_1$ to $\alpha
\mapsto \mathsf{f} \circ \alpha \circ \phi$. Every morphism
from $M$ to $N$ is homotopic to one of this form and in fact something
stronger is true ({\em cf.} Example 1.3.32 of~\cite{Sch13}):
the full
subcategory of ${\Sm}$ on such objects is equivalent
to the category given by the nerve of the familiar 1-category of manifolds and
smooth maps.

Just like any other topos, ${\Sm}$ has group objects and these are
equivalent via delooping to pointed connected smooth spaces. Perhaps the most basic group
object corresponds to a Lie group $\mathsf{G}$. We define the
associated delooping $BG$ by assigning: to $p$,
the
1-truncated space corresponding to the 1-groupoid with a single object
whose morphisms are smooth maps $\ssR^p \to \mathsf{G}$, composed by multiplying pointwise in $\mathsf{G}$; 
to $\phi$, the morphism of spaces that on morphisms in the
aforementioned 1-groupoid is given by
$\Map(\ssR^q,\mathsf{G})\to\Map(\ssR^p,\mathsf{G}): f\mapsto f\circ \phi$;
and so on. The smooth space $BG$ is pointed in an obvious way and the
fact that it is
$0$-connected and $1$-truncated follows from an objectwise
observation. The underlying object
$G$ is obtained via the pullback
\square{G}{\terminal}{\terminal}{BG\mathrlap{,}}{}{}{}{}
and an objectwise computation shows that it corresponds to the
manifold underlying the Lie group $\mathsf{G}$.\footnote{In more detail, this follows from the observation that
for any $0$-truncated group
$K$ in spaces we can work in the 2-category whose objects are
1-groupoids, whose 1-morphisms are 1-functors, and whose 2-morphisms are
1-natural transformations. We then have a commuting square
\square{K}{\terminal}{\terminal}{BK\mathrlap{,}}{}{}{}{}
in which $K$ is the 1-groupoid whose objects are elements of $K$ and
whose morphisms are identities and $BK$ is the 1-groupoid with one
object whose morphisms are elements of $K$. The filling 2-simplices
here are both given by the 1-natural transformation from the unique
1-functor $K \to BK$ to itself whose component on the object in $K$ given by
$k \in K$ is the morphism in $BK$ given by $k$. Since any cone from any
space $X$ into $\terminal \to BK \leftarrow \terminal$ is specified by a map of sets
from $\tau_{\leq 0} A$ to $K$, this square is a universal cone.}

We have defined $BG$ as the delooping of the group object $G$, but it
is useful to see explicitly that it is sensible to regard it as a
smooth space analogue of the usual classifying space of smooth
principal $\mathsf{G}$-bundles. Indeed, if it does play such a r\^{o}le, then
we expect the $G$-bundles over any smooth space $X$ to be classified
(as a space) by $\X(X,BG)$. But by the Yoneda lemma, when
$X=\mathbb{R}^p$, $\X(X,BG)$ is just the space assigned by $BG$ to the
object $\ssR^p$ on the site, which when $G$ corresponds to
$\mathsf{G}$ is the nerve of
the 1-groupoid with a single object and (iso)morphisms given by smooth maps
from $\ssR^p$ to $\mathsf{G}$. But this 1-groupoid is
equivalent to the 1-groupoid whose objects are principal
$\mathsf{G}$-bundles on $\ssR^p$ and whose (iso)morphisms are the
usual morphisms of principal $\mathsf{G}$-bundles: every such bundle is
trivializable and the morphisms of the trivial bundle are in bijective
correspondence with smooth maps $\ssR^p \to \mathsf{G}$. So
  indeed $BG$ plays the r\^{o}le of a classifying space, at least for
  $G$-bundles on $\ssR^p$.

For later purposes, it will be convenient to describe both the pointed
and unpointed versions of the
full subcategories on the smooth spaces corresponding to deloopings of
Lie groups. In
the pointed case, we have that the full subcategory of ${\Sm}_\terminal$
is equivalent to the nerve of the
usual 1-category of Lie groups and smooth homomorphisms. In the
unpointed case, the full subcategory of ${\Sm}$ (which for later
purposes we denote ${\cal L}$)
is equivalent to the category
in which a 0-simplex is a Lie
group $\mathsf{G}$, a 1-simplex is a smooth homomorphism $\mathsf{f}: \mathsf{G}\to
\mathsf{H}$, and a 2-simplex
\twosimplex{\mathsf{G}}{\mathsf{H}}{\mathsf{K}\mathrlap{,}}{\mathsf{k}}{\mathsf{f}}{\mathsf{f}^\prime}{\mathsf{f}^{\prime\prime}}
is an element $\mathsf{k}\in \mathsf{K}$  such that 
$\mathsf{f}^{\prime\prime}=\mathsf{k}(\mathsf{f}^\prime \circ \mathsf{f})\mathsf{k}^{-1}$.
Given 2-simplices specified by $\mathsf{k} \in \mathsf{K}$ and
$\mathsf{l}_{1,2,3} \in \mathsf{L}$ forming a tetrahedron 
\[\begin{tikzcd}[row sep={1.5cm,between origins}, column sep={2.5cm,between origins},labels={inner sep = .5pt}]
	& {\mathsf K}\ar[ddr,bend left=20,"{\mathsf f}"{pos=0.4}]& \\
	& { \mathsf H} \ar[dr,""{name=HL}]\ar[u] &\\
	{ \mathsf G}\ar[rr,bend right=20,""{name=GL}]\ar[uur,bend left=20,""{name=GK,pos=0.6}] \ar[ur] && { \mathsf L\mathrlap{,}}
	\ar["{\mathsf k}"{description, pos=0.3}, from=2-2,to=GK, Rightarrow,shorten >=2pt,shorten <=-4pt]
	\ar["{{\mathsf l}_1}"{description,pos=0.4}, from=2-2,to=GL, Rightarrow,shorten >=2pt,shorten <=0pt]
 	\ar["{{\mathsf l}_2}"{description,pos=0.5}, bend left=10, from=1-2,to=HL,Rightarrow,shorten >=2pt,shorten <=5pt]
	\ar["{{\mathsf l}_3}"{description,pos=0.5}, bend left=15, from=1-2,to=GL,Rightarrow,shorten >=10pt,shorten <=5pt,shift left=0.5]
\end{tikzcd}\]
we have a $3$-simplex
if and only if
$\mathsf{l}_1\mathsf{l}_2=\mathsf{l}_3\mathsf{f}(\mathsf{k})$, in which case the 3-simplex is unique.

The notions of manifold and Lie group are subsumed in the notion of a
Lie groupoid $\mathsf{L}$
\cite{Ehr59},
namely a 1-groupoid in which both the objects and morphisms form
manifolds, which we denote $\mathsf{L}_0$ and $\mathsf{L}_1$,
respectively,
such that the source $\mathsf{s}$ and target $\mathsf{t}$ maps are smooth surjective
submersions and the multiplication $\mathsf{m}$, the unit $\mathsf{u}$ and inverse $\mathsf{i}$
are all smooth maps.\footnote{We recover a manifold when $\mathsf{L}_1
  = \mathsf{L}_0$ and a Lie
  group when $\mathsf{L}_0$ is a point.} We
define a corresponding object $L$ in ${\Sm}$ by assigning: to
$p$, the 1-truncated space corresponding to the 1-groupoid $\Map(\ssR^p, \mathsf{L}_1)\rightrightarrows
\Map(\ssR^p, \mathsf{L}_0)$ (where the source, target, multiplication, unit, and inversion
are induced by $\mathsf{s},\mathsf{t},\mathsf{m}, \mathsf{u}$, and
$\mathsf{i}$ in the obvious way), and so on.

The simplest example of a Lie groupoid that is neither a manifold nor
a Lie group arises when we act with a Lie group on a manifold and will
be described in the next Subsection.

A description of the full subcategory of ${\Sm}$ on smooth spaces
  corresponding to Lie groupoids is somewhat complicated and not
  needed in what follows. Thus, we content ourselves with a
  description of how to construct \emph{some} morphisms in $\Sm$
  between Lie groupoids. In particular, given a 1-functor
  $\mathsf{f}:\mathsf{L}\to \mathsf{J}$ inducing smooth maps,
  $\mathsf{f}_0$ and $\mathsf{f}_1$, on the manifolds of objects and
  morphisms, respectively, we construct a 1-morphism $f:L\to J$ in
  ${\Sm}$ as follows. For each $p$, $f$ assigns to the
  morphism $(0\to 1,\mathrm{id}_{\ssR^p})$ in $\Delta^1\times \Cart$ the
  functor $L(\ssR^p)\to J(\ssR^p)$ sending  the object $\mathsf{h}\in
  \Map(\ssR^p,\mathsf{L}_0)$ of $L(\ssR^p)$ to the object
  $\mathsf{f}_0\circ \mathsf{h} \in \Map(\ssR^p,\mathsf{J}_0)$ of
  $J(\ssR^p)$ and the morphism  $\mathsf{r}\in \Map(\ssR^p,\mathsf{L}_1)$
  of $L(\ssR^p)$ to the morphism $\mathsf{f}_1\circ \mathsf{r} \in
  \Map(\ssR^p,\mathsf{J}_1)$ of $J(\ssR^p)$. The remaining data of $f$ are taken such that the homotopies involved (which in this
  case are natural transformations of functors) are in fact strict
  identities. Every morphism in between objects in $\Sm$ corresponding to manifolds is homotopic
  to such a morphism, and likewise for morphisms between objects corresponding to Lie groups. However, not every morphism in $\Sm$ from an object corresponding to a manifold to
   an object corresponding to a Lie group can be constructed in this way up to homotopy.

Finally, we wish to describe two other smooth spaces constructed from a Lie group $\mathsf{G}$ that will play
    an important r\^{o}le in examples. 
    These smooth spaces, which we will denote $B_\nabla G$ and $G\mathrm{Conn}(M)$ for a manifold $\mathsf{M}$, are close relatives of the smooth space $BG$ that classifies smooth
    principal $\mathsf{G}$-bundles on a smooth space $M$ corresponding
    to a manifold. 
    
 The first, $B_\nabla G$, 
     classifies smooth principal $\mathsf{G}$-bundles with the added
    data of a connection. For discussions in the literature of this
    and related objects, see \cite{Sch13,SW14,FH13}.
 Generalizing our
discussion of principal bundles without connection above,
we define $B_\nabla G$ to be the sheaf that assigns to $p$
the nerve of the 1-groupoid whose objects are smooth
principal $\mathsf{G}$-bundles with connection
on $\ssR^p$ and whose morphisms are
connection-preserving morphisms of principal
$\mathsf{G}$-bundles. (On $\phi: \ssR^p \to \ssR^q$, we get the usual pullbacks of
the bundles and connections.) Equivalently, since any such bundle on
$\ssR^p$ is trivializable and since a connection on the trivial
bundle on $\ssR^p$ is given by a $\mathfrak{g}$-valued 1-form $\mathsf{A}$ on
$\ssR^p$, we get the 1-groupoid in which an object is an $\mathsf{A}$
and an isomorphism from $\mathsf{A}$ to $\mathsf{A}^\prime$ is a
smooth map $\mathsf{g}: \ssR^p \to \mathsf{G}$ such that
$\mathsf{A}^\prime = \mathsf{g}^{-1} \mathsf{A} \mathsf{g} +
\mathsf{g}^{-1} d \mathsf{g}$.

The smooth space $B_\nabla G$ has some curious properties, which will be relevant for our
discussion of the symmetries of QFTs. Since its value on $p$ is
  the nerve of a $1$-groupoid, $B_\nabla G$ is $1$-truncated, just
like $BG$. Unlike $BG$, however, it is not
connected, so it cannot be the delooping of a
group object in ${\Sm}$. This follows from the fact that there is more than one equivalence class
(under bundle morphisms) of
connections on any $\ssR^p$ for $p > 0$. 
Nevertheless,
  like any connected smooth space (such as $BG$),  $B_\nabla G$ admits a point
$\terminal \to B_\nabla G$ and moreover any one point is homotopic to any other
one, so $B_\nabla G$ is path-connected. Indeed, via the Yoneda lemma, the space of points ${\Sm}(\terminal,B_\nabla G) \in {\cal S}$ is given by the value of
$B_\nabla G$ on the site object $\ssR^0$, and is connected, since
  every bundle over a point is trivializable and the trivial bundle
  over a point
  has a unique connection.

To give a description of the connected component of $B_\nabla G$
  at any point, we need to introduce the notion of the discretization
  $\esh X$ of a smooth space $X$. The functor $\Gamma:{\Sm} \to {\cal S}$
  that sends a smooth space $X$ to its space ${\Sm}(\terminal,X)$ of points
  has a left adjoint, $\mathrm{Disc}$, that sends a space $S$ to the
  smooth space given by the locally constant sheaf with value $S$.
 We define $\esh X =
  \mathrm{Disc} \, \Gamma
  X$ and note that the counit of the adjunction gives us a
  natural morphism $\esh X \to X$.
Because both $\mathrm{Disc}$ and
  $\Gamma$ preserve finite products this construction extends naturally to group objects in 
  ${\Sm}$~\cite[Prop. 3.5]{Bun20}. Furthermore, since $\mathrm{Disc}$ and
  $\Gamma$ also preserve colimits (they are both left-adjoints),
  they commute with delooping $B$~\cite[Prop. 3.5]{Bun20}. When $G$ is a group object in smooth spaces corresponding to a Lie group $\mathsf{G}$, we have that $\esh G$ corresponds to the discrete group, which we denote $\esh{\mathsf{G}}$, obtained by replacing the given smooth structure on $\mathsf{G}$ with the discrete one.

We now show that the connected component of $B_\nabla G$ at any point
  is equivalent to the smooth space $B\esh G$, which can be thought of
  as classifying principal $\mathsf{G}$-bundles with flat connection.
Since $BG^\delta$ is the delooping of a group object, $\terminal \to BG^\delta$ is $(-1)$-connected. There is a canonical morphism
  $B\esh{G}\to B_\nabla G$ that for each $p$ is the inclusion of the $1$-groupoid
  $\{$principal $\mathsf{G}$-bundles on $\ssR^p$ with flat connection$\}$ into the 1-groupoid
  $\{$principal $\mathsf{G}$-bundles on $\ssR^p$ with connection$\}$.
This is an inclusion of a path-connected component in spaces, so is a $(-1)$-truncated morphism. Truncatedness may be tested objectwise, so $B\esh{G}\to B_\nabla G$ is $-1$-truncated in $\Sm$. Since any
 point $\terminal\to B_\nabla G$ is unique up to homotopy, it must 
 factor through $B\esh G$, exhibiting $\terminal\to B\esh G\to B_\nabla G$ as the
($-1$-connected,$-1$-truncated) factorization of the point. Thus
 $B\esh G$ is indeed the connected component. There is, of course, also a canonical morphism $B_\nabla G \to BG$
that forgets the connection. 

The second smooth space $G\mathrm{Conn}(M)$~\cite{FRS16} classifies smooth families of
  principal bundles over $\mathsf{M}$. We define it analogously to
  $B_\nabla G$, except that we assign to $p$ the (nerve of the)
  1-groupoid whose objects are $G$-principal bundles on $\mathsf{M}
  \times \ssR^p$ together with a fibrewise connection with respect to
  the projection $\mathsf{M}
  \times \ssR^p \to \ssR^p$.

The space  $\Sm(\ast, G\mathrm{Conn}(M))$ is equivalent to the
  space of $\mathsf{G}$-principal bundles on $\mathsf{M}$, or $\Sm(M,
  B_\nabla G)$ if one wishes. That is, $G\mathrm{Conn}(M)$ and
 $(B_\nabla G)^M$ are equivalent as spaces ($\Gamma (G\mathrm{Conn}(M) )\simeq \Gamma (B_\nabla G)^M$) (but not as smooth spaces). The connected component of  $G\mathrm{Conn}(M)$ at each point corresponds to the smooth family of fibrewise connnection preserving gauge transformations of that bundle. As an example, when $\mathsf{M}$ is taken as the point $\ast$, we recover  $G\mathrm{Conn}(\ast) \simeq BG$ .

    \subsection{Group actions in smooth spaces\label{sec:grpSS}}
    \subsubsection{Lie groups acting on manifolds \label{sec:grpSSm}}
Having defined the smooth spaces corresponding to
the simplest kinds of (group) objects
${\Sm}$, namely manifolds and Lie groups, we can now define the simplest kind of
group action in ${\Sm}$, which corresponds to a smooth action of a
Lie group on a manifold. Namely, 
given a Lie group $\mathsf{G}$ acting on a smooth manifold
$\mathsf{M}$, we define the action Lie groupoid $\mathsf{M} /\!\!/
\mathsf{G}$ by taking $(\mathsf{M} /\!\!/
\mathsf{G})_0 = \mathsf{M} $, $(\mathsf{M} /\!\!/
\mathsf{G})_1 = \mathsf{M} \times
\mathsf{G}$, $\mathsf{s}$ to be the
projection on the left-hand factor, $\mathsf{t}$ to be the group action, and defining
$\mathsf{m}$, \&c,  using the corresponding maps in $\mathsf{G}$.
Denoting the corresponding smooth space by $M/\!\!/G$, we have a
morphism $M/\!\!/G \to BG$ in ${\Sm}$, induced by the morphism of Lie groupoids that
on $(\mathsf{M} /\!\!/
\mathsf{G})_1 = \mathsf{M} \times
\mathsf{G}$ is projection on the right-hand factor, whose fibre over
the basepoint is $M$, 
exhibiting a
$G$-action on $M$ in ${\Sm}$.

The observation that a Lie group action
on a manifold defines a group action in ${\Sm}$ can be elevated into the following
equivalence of categories \cite[Prop. 6.4.44]{Sch13}. Let $\mathsf{G}$ be a Lie group and 
$BG$ the corresponding delooped object in ${\Sm}$. 
Then the full subcategory of ${\Sm}_{/BG}$ on
objects whose fibre is a smooth space corresponding to a manifold
is equivalent to the nerve of the usual 1-category of
$\mathsf{G}$-manifolds.  

So every $G$-action on $M$ in ${\Sm}$ is equivalent to a
$\mathsf{G}$-action on the corresponding $\mathsf{M}$. The associated
object $M^{hG}$ and the morphism $M^{hG} \to M$ realise the fixed
point subset of the
$\mathsf{G}$-action on $\mathsf{M}$ as a smooth space along with its inclusion
in $\mathsf{M}$.  
 The points of $M^{hG}$ are given by ${\Sm}_{/BG}(BG, M/\!\!/G)$. But
 by the equivalence of categories given above, this is the set of
 equivariant maps $\ast \to \mathsf{M}$, with the trivial
 $\mathsf{G}$-action on the point manifold $\ast$, giving us the expected fixed points.
    \subsubsection{Lie groups acting on deloopings of Lie groups \label{sec:grpSSl}}
 The next simplest action to consider is that of a $G$ which
 corresponds to a Lie group $\mathsf{G}$ acting on $BK$, the delooping
 of another Lie group $\mathsf{K}$.
As we will now see, such an action, which will play a prominent r\^{o}le in
our examples, corresponds to a short exact sequence of Lie groups.

 To show it, recall that such an action corresponds to a cartesian square 
 \squareE{BK}{E}{\terminal}{BG\mathrlap{,}}{}{}{}{}
 for some smooth space $E$. {Let us begin by showing that $E$ is 
 $0$-connected, $1$-truncated, and admits a point, so that we can legitimately call it
 $BH$ for some $0$-truncated group object $H$.}
The morphism $BK\to \terminal$ is {$0$-connected and $1$-truncated} and
therefore so is $E\to BG$. Now $BG$ is also {$0$-connected and
  $1$-truncated}, so $E$ is $0$-connected
(from~\cite[Prop. 6.5.1.16]{Lur09}) and
$1$-truncated too (from~\cite[Lem. 5.5.6.14]{Lur09}). The smooth space $E$ certainly
admits a point, since $BK$ does, and thus is equivalent to a $BH$
for some $0$-truncated group $H$.

We next want to show that the $0$-truncated group $H$ is equivalent to
a Lie group.
Looping, we get a cartesian square  
\squareE{H}{G}{\terminal}{BK\mathrlap{,}}{p}{}{}{}
in $\Sm$ exhibiting $p: H \to G$ as an principal $K$-bundle (of smooth
spaces), as defined in \cite{NSS14}.
Then~\cite[Prop. 6.4.39]{Sch13} tells us that $H$ is equivalent as an object in $\Sm$ to a
(smooth space corresponding to a) manifold.\footnote{In fact,
   \cite[Prop. 6.4.39]{Sch13} tells us more, namely that $p: H \to G$
  is equivalent to a smooth principal $\mathsf{K}$-bundle in the usual
  sense, which allows for an alternative proof of what follows here.} The maps
corresponding to multiplication and inversion defined by the Segal
conditions on a manifold correspond to smooth maps of that manifold, so $H$
is equivalent to a group object corresponding to a Lie group $\mathsf{H}$. 

The group action $E\to BG$ can therefore be replaced by 
 \squareE{BK}{BH}{\terminal}{BG\mathrlap{,}}{Bi}{}{Bp}{}
 where $\mathsf{i}:\mathsf{K}\to \mathsf{H}$ and
 $\mathsf{p}:\mathsf{H}\to \mathsf{G}$ are smooth homomorphisms and
 so, in particular, have constant rank. Looping again, we get the cartesian square
 \[
 \begin{tikzcd}
 	K \ar[d,two heads] \ar[r, "i"] & H \ar[d,two heads,"p"] \\ 
 	\terminal \ar[r] & G\mathrlap{,}
 \end{tikzcd}
\]
in $\Sm$.
Since smooth spaces corresponding to manifolds form a full-subcategory of ${\Sm}$, it follows
that
\squareE{\mathsf{K}}{\ast}{\mathsf{H}}{\mathsf{G}\mathrlap{,}}{}{\mathsf{i}}{}{\mathsf{p}}
is a 1-pullback in the 1-category of manifolds. The smooth map $\mathsf{p}$ is
surjective and therefore, by the Global Rank Theorem of~\cite{Lee03}, also
a submersion. Furthermore, the image of $\mathsf{i}$, {\em viz.}
$\mathrm{ker}\, \mathsf{p}$, is embedded in $\mathsf{H}$, so the
restriction of $\mathsf{i}$ to it is both smooth~\cite{Lee03} and bijective, {\em ergo} (again by the Global
Rank Theorem) a diffeomorphism.

It follows that every action of $G$ on $BK$ corresponds to a Lie group extension
\[
	\mathsf{K}\xrightarrow{\mathsf{i}} \mathsf{H} \xrightarrow{\mathsf{p}} \mathsf{G}\mathrlap{,}
\]
as defined, {\em e.g.}, in \cite{DGRW20}. (Equivalently, we
have that $\mathsf{p}: \mathsf{H} \to \mathsf{G}$ is a smooth
principal $\mathsf{K}$-bundle in the usual sense with respect to the
action of $\mathsf{K}$ on $\mathsf{H}$ induced via $\mathsf{i}$ and
multiplication in $\mathsf{H}$.)
Again, this lifts to an equivalence between the full subcategory of
${\Sm}_{/BG}$ on objects whose fibre is a smooth space
corresponding to the delooping of a Lie group and the full subcategory
of $\mathcal{L}_{/\mathsf{G}}$ on those objects for which the smooth
homomorphism $\mathsf{p}:
\mathsf{H}\to \mathsf{G}$ is surjective (and, {\em ergo}, a submersion). Let us denote this last category as $\mathcal{L}^{\twoheadrightarrow}_{/\mathsf{G}}$. 
 
Unsurprisingly, the homotopy fixed points of such actions correspond
to sections of the surjection (or equivalently
smooth splittings of the smooth short exact sequences). To be precise, the category $({\cal L}_{/\mathsf G}^\twoheadrightarrow)_\terminal$ is equivalent to the
nerve of the $1$-category whose objects are pairs of Lie group
homomorphisms $(\mathsf{p}:\mathsf{H}\to \mathsf{G}, \mathsf{s}:\mathsf{G}\to
\mathsf{H})$ such that
$\mathsf{p}\circ \mathsf{s}$ is the identity homomorphism on $\mathsf{G}$, and
whose morphisms from  $(\mathsf{p}:\mathsf{H}\to \mathsf{G}, \mathsf{s}:\mathsf{G}\to
\mathsf{H})$ to
$(\mathsf{p}^\prime:\mathsf{H}^\prime\to \mathsf{G}, \mathsf{s}^\prime:\mathsf{G}\to
\mathsf{H}^\prime)$
are smooth homomorphisms $\mathsf{q}:\mathsf{H}\to \mathsf{H}^\prime$
such that $\mathsf{p}^\prime \circ  \mathsf{q}=\mathsf{p}$ and
$\mathsf{s}^\prime=\mathsf{q} \circ \mathsf{s}$. The forgetful functor
$({\cal L}_{/\mathsf G}^\twoheadrightarrow)_\terminal\to{\cal L}_{/\mathsf G}^\twoheadrightarrow$ sends
$(\mathsf{p}, \mathsf{s})$ to $\mathsf{p}$ and sends $\mathsf{q}$ to the
 2-simplex whose edges are $\mathsf{q}$, $\mathsf{p}$ and
$\mathsf{p}^\prime$ and whose filler is the identity of the group $\mathsf{G}$.

Making the Lie group $\mathsf{K}$ explicit, we have that the full
subcategory of $(\mathcal{L}_{/\mathsf{G}}^{\twoheadrightarrow})_\terminal$ on
objects $\mathsf{p}$ whose kernel is isomorphic to $\mathsf{K}$ is 
equivalent to the nerve of the $1$-category whose objects are smooth
homomorphisms $\mathsf{e}:\mathsf{G}\times \mathsf{K}\to \mathsf{K}$
that 
are automorphisms for all $g\in \mathsf{G}$, and whose morphisms from
$\mathsf{e}$ to $\mathsf{e}^\prime$ are smooth automorphisms $\mathsf{f}:\mathsf{K}\to
\mathsf{K}$ such that $\mathsf{f}(\mathsf{e}  (g,k))=\mathsf{e}^\prime(g, \mathsf{f}(k))$. The
equivalence sends $\mathsf{s}:\mathsf{G}\to \mathsf{H}$ to $\mathsf{e}: \mathsf{G} \times
\mathsf{K} \to \mathsf{K} : (g,k) \mapsto \mathsf{i}^{-1}\left(
  \mathsf{s}(g) \mathsf{i}(k) \mathsf{s}(g^{-1})\right)$. 

The
corresponding space of
homotopy fixed points, {\em i.e.} the fibre of the forgetful functor
$(\mathcal{L}_{/\mathsf{G}}^{\twoheadrightarrow})_1\to
\mathcal{L}_{/\mathsf{G}}^{\twoheadrightarrow}$, is then equivalent to
the nerve of the 1-groupoid whose objects are smooth maps $\mathsf{r}:\mathsf{G}\to \mathsf{K}$ such that
\begin{equation}\label{eq:TwistedReps}
	\mathsf{r}(\mathsf{g}_2) \mathsf{e} (\mathsf{g}_2,\mathsf{r}(\mathsf{g}_1))=\mathsf{r}(\mathsf{g}_2\mathsf{g}_1),
\end{equation}
 for all $\mathsf{g}_1,\mathsf{g}_2\in \mathsf{G}$, and whose morphisms are elements
 $\mathsf{k}\in \mathsf{K}$ such that 
\[
	\mathsf{r}^\prime(\mathsf{g})=\mathsf{k}\mathsf{r}(\mathsf{g}) \mathsf{e} (\mathsf{g},\mathsf{k}^{-1}),
\]
for all $\mathsf{g}\in \mathsf{G}$.

In some
  of our examples, the Lie group $\mathsf{K}$ will be
  connected. This
  enables us to give its group of automorphisms the structure of a Lie
  group,\footnote{More generally, this can be done if the group of
    components is finitely generated \cite{Hoc52}.} $\mathsf{Aut} \, \mathsf{K}$, such that we can equivalently
  view $\mathsf{e}$ as a smooth homomorphism $\mathsf{G} \to
  \mathsf{Aut} \, \mathsf{K}$ and $\mathsf{f}$ as a conjugation in $\mathsf{Aut} \, \mathsf{K}$. This construction goes as follows. The
  derivative defines a functor from the 1-category of Lie groups to
  the 1-category of Lie algebras. This functor induces a homomorphism from the group of
  automorphisms of any Lie group $\mathsf{K}$ to those of its Lie
  algebra $\frak{k}$. This map is a bijection if the Lie group $\mathsf{K}$ is
  simply-connected. It follows that the map is an injection if the
Lie group $\mathsf{K}$ is
  merely connected, since any such $\mathsf{K}$ is a quotient of
  its simply-connected universal cover $\tilde{\mathsf{K}}$ by some
  discrete subgroup $\mathsf{Z}$ of
  its centre. (To be explicit, the map lands on the automorphisms of
  $\tilde{\mathsf{K}}$ that induce automorphisms on $\mathsf{Z}$, or
  equivalently that map $\mathsf{Z}$ onto itself.) Now, giving the
  group of automorphisms of $\mathsf{K}$ the compact-open topology
  makes it into a closed subgroup of the group of automorphisms of
  $\mathfrak{k}$ which is itself a closed subgroup of the general
  linear group on $\mathfrak{k}$, with its usual Lie group
  structure. So we have a Lie group $\mathsf{Aut} \, \mathfrak{k}$ and
  a Lie subgroup $\mathsf{Aut} \, \mathsf{K}$ thereof.

  This construction also gives us a means to give an explicit
  description of $\mathsf{Aut} \, \mathsf{K}$, provided we can do the
  same for $\mathsf{Aut} \, \mathfrak{k}$. The latter task is
  complicated by the fact that the field $\ssR$ over which we are
  working is not algebraically closed. These difficulties can be
  circumvented if $\mathsf{K}$ is compact
  as well as connected, and a description of $\mathsf{Aut} \,
  \mathsf{K}$ in that case in terms of root diagrams can be found in
  {\em e.g.} \S IX.4.10 of~\cite{Bor08}.

  In other examples, $\mathsf{K}$ will be a discrete Lie group. Then
  we can replace $\mathsf{G}$ by its group of path-components, since the
  smooth maps $\mathsf{e}$ and $\mathsf{r}$ both have targets that are
  discrete, so factor through $\pi_0\mathsf{G}$. Thus the analysis of actions on $BK$ and
their homotopy fixed points collapses to the analysis of short exact
sequences of groups and their splittings, with $\mathsf{G}$ replaced by $\pi_0 \mathsf{G}$.
  
  With these results in hand, we are ready to consider some examples of group actions on smooth
spaces corresponding to QFTs.

    \subsubsection{Lie groups acting on non-connected smooth spaces}

In the examples, we will frequently encounter examples of smooth
spaces $X$ that are
not connected. It is useful to relate group actions on these to
actions on related connected smooth spaces. As shown in \S\ref{sec:orbstab}, one useful
result here is that every homotopy fixed point
through $x$ of a group action on $X$ descends to a homotopy fixed
point of a corresponding group action on the connected component
$\tge{1}{x}{X}$. Assuming one can find the latter actions and homotopy fixed
points, the remaining difficulties are, firstly, that one needs to
consider all $x$, and, secondly,  that one needs to determine which group
actions on $\tge{1}{x}{X}$ extend to the whole of $X$ (note that the homotopy
fixed points then necessarily
extend).

We are unable to surmount these difficulties for general $X$, but we
can make some progress in special cases. One in particular is where $X =B_\nabla K$ for some Lie group
$\mathsf{K}$. We discuss this example further in  Appendix~\ref{sec:grpSSc}. 
\subsection{Comparison with the topological case}

It is worth pausing here to compare with what happens in the case of
topological quantum field theories, discussed in~\cite{GRWTS23}.

The two cases are similar in that, in both cases, a
symmetry of a QFT is a homotopy fixed point of a group object acting
on some object in an underlying topos.

The only difference is that,
in the topological case, the underlying topos is that of spaces,
$\mathcal{S}$, whilst in the smooth case we have the topos $\Sm$ of
smooth spaces.

So how do the topoi  $\Sm$ and $\mathcal{S}$ differ? Given any topos
$\X$, as introduced above, there is a functor $\Gamma$ to $\mathcal{S}$, defined by
evaluating  $\X(\ast, -)$. This functor forms the left half of
an adjunction (whose right adjoint produces `discrete' objects in
$\X$) and the failure of this adjunction to be an equivalence
allows us to characterize, in a canonical way, the difference between $\X$ and
$\mathcal{S}$.

Now, the topos $\Sm$ is rather special, in that this
adjunction is much closer to being an equivalence than it is for a
generic topos $\X$. To wit, the adjunction extends to a
quadruple adjunction, making $\Sm$ into what
is called in \cite{Sch13} a cohesive topos. In particular, $\Gamma$ is also
a right adjoint, whose left adjoint produces `concrete' objects in
$\Sm$, of which manifolds (and more generally diffeological spaces)
are examples.

Via these adjunctions, one can often translate statements about smooth
symmetries of QFTs into statements about symmetries of TQFTs and {\em
  vice versa}, though some care is needed. Suppose, for example, that one finds smooth $G$-actions
on some smooth space $X$ in ${\Sm}$. Any such action leads to an
action in $\mathcal{S}$ of the underlying discrete group $\Gamma G$ on
$\Gamma X$, but not every action of  $\Gamma G$ on $\Gamma X$ can be
got in this way, since some may not act on $\Gamma X$ smoothly. So
even if we are considering topological quantum field theories, the
physical requirement that things be smooth ({\em i.e.} that we work in
${\Sm}$), leads to a more restrictive notion of symmetry. Explicitly,
as we shall see in \S\ref{sec:TFTs1dTake1}, some of the symmetries of TQFTs that we
found in~\cite{GRWTS23} do not correspond to smooth actions.

The change from $\mathcal{S}$ in~\cite{GRWTS23} to ${\Sm}$ here leads
to another important new feature, namely that of a smoothness
anomaly. To see that these cannot occur in the topological case, it is
enough to observe, as we did in \S\ref{sec:findact}, that in $\mathcal{S}$ every space is a coproduct of
connected components. Thus, every action (with or without a homotopy
fixed point)  on a connected component
of a space $X$ can be extended to an action on the whole of $X$ by
demanding that it act trivially on the other components.

\section{Examples from physics \label{sec:examples}}
\subsection{QFTs in $d=1$ with a smooth map to a target manifold \label{sec:QMtoM}}
For a first example, consider QFTs in
spacetime dimension $d=1$ in which the spacetimes in the bordism
category are equipped with both an orientation and a smooth map to a
manifold $\mathsf{M}$ and the target category is the nerve of the
1-category whose objects are complex vector spaces and whose morphisms
are linear maps. These can be intepreted physically as 1-dimensional
QFTs with a background field. In the invertible case,
these QFTs correspond to topological actions (see {\em e.g.} \cite{DGRW20}) in classical mechanics
formed using only the smooth map and the orientation 
(which allows us to do integrals, and so obtain actions by integrating
local lagrangians). An example is the coupling of a particle
(which in 3-d space would correspond to choosing $\mathsf{M} = \ssR^3$) to
an external magnetic field. We note however that here such actions are not
required to be real (a.k.a. hermititan),
since we have imposed no requirement of unitarity yet.\footnote{We will do when we
consider quantum mechanics in the next Subsection.}

According to \cite{BEP15} such QFTs correspond to smooth complex
vector bundles with connection. The explicit correspondence is as follows. The
0-dimensional manifold given by a point equipped with a smooth map to
$\mathsf{M}$ defines a point $\mathsf{m} \in \mathsf{M}$; to this we
associate the fibre of the vector bundle over $\mathsf{m}$. The 1-dimensional
manifold with boundary given by an interval equipped with a smooth map
to $\mathsf{M}$ defines a smooth path in $\mathsf{M}$; to this we
associate the parallel transport of the connection.

So what is the smooth space $X$ representing such QFTs? The results of \cite{BEP15} only give us a description of the underlying space, so there is a possible ambiguity. Let us describe two likely candidates.
Instead of
dealing with vector bundles with connection, we can consider (for each
$n \geq 0$) the
principal
$\mathsf{GL}_n$-bundles with connection to which they are associated. These, as we
have seen in \S \ref{sec:smoothspaces}, can be assembled either into
the smooth space $(B_\nabla GL_n)^M$, where by $Z^Y$ we denote the
smooth space ({\em i.e.} an object in ${\Sm}$) of morphisms from a
smooth space $Y$
to a smooth space $Z$, or into the smooth space $GL_n\mathrm{Conn}(M)$. Both of these have the same underlying space, so are consistent with the results of \cite{BEP15}. But now consider a path in either smooth space, which ought to correspond physically to a smooth 1-parameter family of theories. A path in $(B_\nabla GL_n)^M$ is precisely a morphism $M \times [0,1] \to B_\nabla GL_n$, {\em i.e.} a $GL_n$-principal bundle on $M \times [0,1]$ with connection. But on physical grounds, what we want is a $GL_n$-principal bundle on $M \times [0,1]$ with a notion of connection that is fibrewise over $[0,1]$, in that it only allows parallel transport along vectors tangent to $M$. This corresponds to a path in $GL_n\mathrm{Conn}(M)$.

As such, we shall proceed here with the assumption that $GL_n\mathrm{Conn}(M)$ is the correct smooth space. Putting everything together, we
take $X$ to be
given by $\coprod_n GL_n\mathrm{Conn}(M)$. In Appendix~\ref{sec:TFTs1dTake3} we discuss what would happen if we had taken $B_\nabla GL_n$, and show the presence of a smoothness anomaly in that case.

\subsubsection{Smooth TQFTs in $d=1$  \label{sec:TFTs1dTake1}}

As a warm up, let us consider the case where $M$ is a point, such that $X \simeq \coprod_n  BGL_n$. The general arguments in \S
\ref{sec:grpSSl} thus show that, at least when the group $G$ acting
corresponds to a Lie group $\mathsf{G}$, the problem reduces to
finding smooth splittings of short exact sequences of Lie groups
extending $\mathsf{G}$ by $\mathsf{GL}_n$. The (splittable) short
exact sequences are given by
smooth homorphisms $\mathsf{G} \to \mathsf{Aut}\, \mathsf{GL}_n$
as in \S\ref{sec:grpSSl} and the corresponding homotopy fixed
points are given by smooth maps $\mathsf{G} \to \mathsf{GL}_n$ as in Eq.~\ref{eq:TwistedReps}.

The physics interpretation is as follows. Such a TQFT is specified by
the dimension $n$ (necessarily finite) of the state space, which is a
mere vector space since we have no notion of unitarity yet. The
hamiltonian is zero, as one expects for a TQFT, corresponding to the
fact that the
linear operator on the state space associated to time evolution
along an interval is the identity operator. A symmetry of the
TQFT is na\"{\i}vely a linear (not unitary, since this
notion makes no sense) operator on the state space that
commutes with the hamiltonian, {\em i.e.} any linear operator. A $\mathsf{G}$-symmetry of the
TQFT is an assignment of a linear operator to each element  in $\mathsf{G}$. This assignment need not form a {\em bona fide} representation of
$\mathsf{G}$, because a symmetry need not send a TQFT to itself, but rather
could send it to an equivalent one. The twisted representations as
defined above give a complete description of the ways in which this
can happen.

\subsubsection{The general case}
Now let us return to the case of general $\mathsf{M}$. The smooth space $GL_n \mathrm{Conn}(M)$ is then neither connected nor path-connected, which
  complicates our analysis of symmetries.

To make things tractable, let us consider just a special class of
  group actions on $GL_n\mathrm{Conn}(M)$, namely those where a group
  object $G$ corresponding to a Lie group $\mathsf{G}$ acts on
  $GL_n\mathrm{Conn}(M)$ via the usual notion of a smooth action of
  $\mathsf{G}$ on $\mathsf{M}$. The induced action on $GL_n\mathrm{Conn}(M)$ is given by pulling smooth families of bundles back along a a smooth family of elements in $\mathsf{G}$.
Doing so
 allows us to make contact with the way in which physicists usually
 study symmetries of QFTs, namely by considering transformations of
 fields leaving some classical action (or strictly speaking, the
exponentiated action) invariant.

So let us  compare with what physicists do. To begin with, we ignore
symmetry considerations. Since $d=1$, the physics
here is that of a particle moving on the smooth manifold
$\mathsf{M}$. In the case where $n=1$ ({\em i.e.} invertible QFTs),
evaluating a QFT on the closed spacetime
manifold $\mathsf{S}^1$ gives a complex number, which represents a possible
contribution to the exponentiated action governing the dynamics. Since
this action depends only on the orientation and the smooth map to 
$\mathsf{M}$ with which $\mathsf{S}^1$ is equipped, a physicist calls
this a `topological' action term.

Now, it has long been known (for a recent discussion, see \cite{DGRW20}) that
in any spacetime dimension $d$, a source of such action terms is given by the differential
cohomology \cite{CS85} in degree $d+1$ of the manifold
$\mathsf{M}$, which refines the ordinary integral cohomology of
$\mathsf{M}$ with information about the differential forms on
$\mathsf{M}$. (Here, since no requirement of unitarity is imposed on
the physics side, we should consider differential forms valued in
$\ssC$ rather than $\ssR$.)

Since this differential cohomology in degree 2 is known to correspond
to the set of equivalence classes of principal $\ssC^\times$-bundles with connection on
$\mathsf{M}$ \cite{BB14}, which in turn is given by $\pi_0 \Sm (\ast, \bbC^\times\mathrm{Conn}(M))$, we see that in $d=1$, every such
topological action
term (that is compatible with locality in that it can be extended to
an invertible QFT) takes this form. So differential cohomology
classifies topological action terms in $d=1$.\footnote{In arbitrary $d>1$, it is
clear that differential cohomology cannot capture all topological action
terms; rather some differential refinement of cobordism is
needed.}

Now let us consider symmetry. What the physicist does is to consider a
smooth action of some Lie group $\mathsf{G}$ on $\mathsf{M}$ and
declares a physical theory to be symmetric if the physics action is
invariant under the $\mathsf{G}$ action. This requirement corresponds
to the mathematical notion of invariant differential cohomology defined in
\cite{DGRW20}. Our definition here of a $G$-symmetric QFT as a homotopy
fixed point of a group action on a smooth space of fully-extended QFTs
is rather more sophisticated, even in $d=1$, since it requires the
symmetry to be consistently defined not just on closed spacetime
manifolds, {\em i.e.} on disjoint unions of circles, but also
on compact spacetime manifolds with boundary, {\em i.e.} on disjoint unions of
circles and intervals. 

The upshot is that there is no reason to expect that the
classification of invariant topological action terms given by
invariant differential cohomology to agree with the more refined
classification based on fully-extended QFTs given here and indeed we shall now
see that, in general, they do not agree. In more detail, we expect that there
should be a map from the latter to the former obtained by
restricting a QFT to closed spacetimes, but we will show that this map
is neither surjective nor injective, in general. In physics terms, a
failure to be surjective represents a locality anomaly, in that a
symmetry of the classical action defined on closed spacetimes cannot
be extended consistently to spacetimes with boundary, while a failure
to be injective implies that there exist many ways to extend a
symmetry of the classical action on closed spacetimes to spacetimes
with boundary.

To construct the map, we begin by considering an action on an object
$X$ by a group
object $G$ in a general topos $\X$. We may pull this back along the
morphism $\esh G \to G$ to get an action of $\esh G$ on $X$, along
with a morphism $X^{hG} \to X^{h\esh G}$. 
Applying $\Gamma$ to the latter to pass to spaces (recall that $\Gamma: X \mapsto
{\X}(\terminal,X)$) and making use of the adjunctions
$\mathrm{Disc}\dashv \Gamma$ and those of the form $(\ast_X)^\ast\dashv (\ast_X)_\ast$
results in a morphism
\begin{equation}\label{eq:CohomPartial}
\Gamma (X^{hG}) \to (\Gamma X)^{h\Gamma G} ,
\end{equation}
in ${\cal S}$. 

Taking the Postnikov decomposition of the $\Gamma G$-action on $\Gamma X$
\pastingVE{\Gamma X}{(\Gamma X)/\!\!/\Gamma G}{\tau_{\leq 0} \Gamma X}{(\tau_{\leq 0}\Gamma X)/\!\!/\Gamma G}{\terminal}{B\Gamma G\mathrlap{,}}
implies
that we have a morphism $ ((\Gamma X)^{h\Gamma G}) \to (\tau_{\leq 0} \Gamma X)^{h\Gamma G}$ in ${\cal S}$ which, since the target is $0$-truncated, factors through $ \tau_{\leq 0}
((\Gamma X)^{h\Gamma G}) \to (\tau_{\leq 0} \Gamma X)^{h\Gamma G}$.
Combining this with the 0-truncation of Eq.~\ref{eq:CohomPartial} then
yields a sequence
\begin{equation}\label{eq:CohomSequence}
	\tau_{\leq 0}	\Gamma \left(X^{hG}\right) \to \tau_{\leq 0}
((\Gamma X)^{h\Gamma G}) \to (\tau_{\leq 0} \Gamma X)^{h\Gamma G} \mathrlap{,}
\end{equation}
of set maps.

Now specialize to the topos $\Sm$ of smooth spaces, setting $X =
 \mathbb{C}^\times\mathrm{Conn}(M)$, where $M$ corresponds to a smooth
manifold $\mathsf{M}$ and suppose that $G$ corresponds to a Lie group
$\mathsf{G}$ acting smoothly on $\mathsf{M}$. We claim that the
composition of the two maps in Eq.~\ref{eq:CohomSequence} gives the
desired map. Indeed, the left-hand term in
Eq.~\ref{eq:CohomSequence} is then the set of equivalence classes of
$\mathsf{G}$-symmetric invertible QFTs in $d=1$ equipped with an orientation and a smooth map to $\mathsf{M}$. In the right hand term, $\tau_{\leq 0} \Gamma
 \mathbb{C}^\times\mathrm{Conn}(M)$ is the differential cohomology in
degree two of $\mathsf{M}$. The
fact that $G$ acts on $\mathbb{C}^\times\mathrm{Conn}(M)$ via $M$ implies
that $\esh G$ also acts via $M$. Thus 
$\left( \tau_{\leq 0} \Gamma
\mathbb{C}^\times\mathrm{Conn}(M)\right)^{h \Gamma G}$ is the invariant
differential cohomology in degree two of the $\mathsf{G}$-manifold
$\mathsf{M}$, as defined in \cite{BB14}. Moreover, since there is a 1-1
correspondence between differential characters in degree two and
holonomy maps of bundles with connection \cite{BB14}, we see that the
composition of the two maps indeed corresponds to restricting the QFT
to closed spacetime manifolds and is the map we seek.

To see that the map is not surjective, in general, consider the case where the abelian Lie group $\mathbb{R}^2$  acts on itself by left (say)
translation. A calculation of the invariant differential cohomology in
degree two, as
in \cite{DGRW20}, shows that $(\tau_{\leq
0}\Gamma (
\mathbb{C}^\times\mathrm{Conn}(\mathbb{R}^2)))^{h\Gamma \mathbb{R}^2} \cong \mathbb{C}$. Physically this corresponds to the
classical mechanics of a particle moving in a plane in the presence of
a uniform magnetic field (which, since there is no requirement of
unitarity, is allowed to take complex values), whose physics action on
a spacetime $\mathsf{S}^1$ is
indeed invariant under translations of the plane. One of these fixed points corresponds to the bundle with connection $A = y dx$ (where $x$ and $y$ parameterise $\mathbb{R}^2$). However, the action of $\Gamma \R^2$ on the corresponding connected component  of $\Gamma  (
\mathbb{C}^\times\mathrm{Conn}(\mathbb{R}^2)))$ corresponds to the
non-trivial central extension of $\ssR^2$ by $\ssC^\times$ represented
by the cocycle $\epsilon((x_1,y_1), (x_2,y_2)) = e^{-i y_1 x_2}$. This
central extension does not split , meaning there is no fixed point in $ \tau_{\leq 0}
((\Gamma \mathbb{C}^\times\mathrm{Conn}(\mathbb{R}^2)))^{h\Gamma
  \mathbb{R}^2})$ corresponding to that in $(\tau_{\leq
0}\Gamma (
\mathbb{C}^\times\mathrm{Conn}(\mathbb{R}^2)))^{h\Gamma \mathbb{R}^2}
\cong \mathbb{C}$. In other words, the right-hand map
in (\ref{eq:CohomSequence}) is not surjective, so the composite of the
two maps in (\ref{eq:CohomSequence}) cannot be surjective either. This corresponds to the well-known fact that the
(exponentiated) physics action defined on an interval is not invariant under
translations, but rather shifts by a boundary term.

To see that the map is no injective, in general, it suffices to consider the action of a Lie group $\mathsf{G}$ on the
manifold consisting of a single point, for which the (invariant)
differential cohomology in degree two is a singleton set.
Since  $\mathbb{C}^\times\mathrm{Conn}(\ast)\simeq \mathbb{C}^\times$ the set $\tau_{\leq 0}	\Gamma \left(X^{hG}\right)$ is the set of equivalence classes of smooth splittings of the trivial extension of $\mathsf{G}$ by $\mathbb{C}^\times$. Thus choosing $\mathsf{G}$ to be $\ssZ/2$ already provides us with an example where the map fails to be injective.

The physical interpretation is as follows. The structure associated to
the homotopy fixed point is a 1-dimensional representation of $
\mathsf{G}$, which is carried by the state space attached to a
spacetime point. This structure is invisible when we evaluate the
theory only on closed spacetimes of dimension one.

\subsection{Non-Unitary Quantum Mechanics\label{sec:NonUnitQM}}
Now let us consider QFTs in $d=1$ on spacetimes equipped with both an orientation and a Riemannian metric. 
These theories are close to what a physicist would call quantum
mechanics, in that an oriented spacetime interval has associated to it
a 
length and an orientation allowing non-trivial
time evolution via a hamiltonian. They have, however, no notion yet of unitarity, so we call them `non-unitary quantum mechanical theories'.

According to \cite{GP21}, if we force the target category defining our theories to be the nerve of the $1$-category of finite-dimensional vector spaces and linear maps to avoid technical difficulties, the relevant smooth space is $X = \coprod_n BGL_n^{B\bbR}$. In what follows we shall consider just one component, so $X= BGL_n^{B\bbR}$.

So a point in $X$ is a morphism from $B\bbR$ to $BGL_n$. As we saw in
\S\ref{sec:smoothspaces}, every such morphism is homotopic to a
morphism corresponding to a smooth homomorphism from  
$\ssR$ to $\mathsf{GL}_n$, {\em i.e.}
to a smooth representation of $\ssR$ of dimension $n$. This corresponds physically to the time evolution.

In fact, the smooth space $BGL_n^{B\R}$ is equivalent to the smooth space $Mat_n/\!\!/GL_n$ corresponding to the action Lie groupoid
$\mathsf{Mat}_n/\!\!/\mathsf{GL}_n$, where $\mathsf{Mat}_n$ is the
smooth manifold of $n \times n$-matrices with values in $\ssC$ and the
action of the Lie group $\mathsf{GL}_n$ is by conjugation. To see
this, let us construct a morphism $Mat_n/\!\!/GL_n\to BGL_n^{B\R}$ as
follows: For every $p \in \mathbb{N}$, we must give a functor from
$(Mat_n/\!\!/GL_n)(\ssR^p)$  to  $BGL_n^{B\R}(\ssR^p)$. The source
groupoid $(Mat_n/\!\!/GL_n)(\ssR^p)$ has as objects smooth maps
$\mathsf{h}_1:\ssR^p\to \mathsf{Mat}_n$  and has as morphisms from
$\mathsf{h}_1$ to $\mathsf{h}_2$ smooth maps $\mathsf{m}:\ssR^p\to \mathsf{GL}_n$ such that
$\mathsf{h}_2(\mathsf{x})=\mathsf{m}(\mathsf{x})\mathsf{h}_1(\mathsf{x})\mathsf{m}(\mathsf{x})^{-1}$. The target groupoid $BGL_n^{B\R}(\ssR^p)$
has as objects smooth maps $\mathsf{r}:\ssR^p\times \ssR\to \mathsf{GL}_n$ which
are homomorphisms for all $\mathsf{x}\in \ssR^p$ and has as morphisms from $\mathsf{r}_1$
to $\mathsf{r}_2$ smooth maps $\mathsf{m}:\ssR^p\to \mathsf{GL}_n$ such that
$\mathsf{r}_2(\mathsf{x})=\mathsf{m}(\mathsf{x})\mathsf{r}_1(\mathsf{x})\mathsf{m}(\mathsf{x})^{-1}$. We choose the functor that takes
objects $\mathsf{h}$ in $(Mat_n/\!\!/GL_n)(\ssR^p)$ to $(\mathsf{x},\mathsf{t})\mapsto
e^{\mathrm{i} \mathsf{h}(\mathsf{x})\mathsf{t}}$ in $BGL_n^{B\R}(\ssR^p)$, and which takes
morphisms $\mathsf{m}$ in $(Mat_n/\!\!/GL_n)(\ssR^p)$ to the corresponding
morphism described by the same smooth map $\mathsf{m}$ in
$BGL_n^{B\R}(\ssR^p)$. The factor of $\mathrm{i}$ we introduced will be
convenient when we come to study unitary quantum mechanics. The
remaining unspecified data of  $Mat_n/\!\!/GL_n\to BGL_n^{B\R}$ are
taken such that the homotopies involved are in fact strict identities
between compositions of 1-functors. This morphism is an
  equivalence of smooth spaces whose inverse can be constructed by
  differentiating the smooth maps $\ssR^p\times \ssR\to \mathsf{GL}_n:(\mathsf{x},\mathsf{t})\mapsto e^{\mathrm{i}\mathsf{t}\mathsf{h}(\mathsf{x})}$
  with respect to $\mathsf{t}$. Physically, we can think of $\mathsf{Mat}_n$ as the smooth manifold of
hamiltonians of quantum-mechanical theories with a state space of
dimension $n$. For now, since we have no notion of unitarity yet, the
state space is a mere vector space, and there is no requirement for a hamiltonian to be hermitian. The action encodes the fact that conjugation by $\mathsf{GL}_n$ sends a hamiltonian to one that is physically equivalent.

We remark that it is crucial that we retain the information about the
physical equivalences between hamiltonians. If we simply na\"{\i}vely
pass to the set $\mathsf{Mat}_n/\mathsf{GL}_n$ of equivalence classes
of hamiltonians (which Simplicio does when they  `diagonalize the
hamiltonian'\footnote{We caution that in the situation without
  unitarity, one cannot in fact necessarily diagonalize the
  hamiltonian, {\em cf. e.g.} $\left(\begin{smallmatrix} 1 & 1 \\ 0 &
    0 \end{smallmatrix}\right) \in \mathsf{Mat}_2$; but one can pass to some
  canonical normal form, such as the Jordan normal form.}), we lose the
smooth structure. Even if we consider $\tau_{\leq 0} (Mat_n/\!\!/{GL}_n)$,
which is the smooth space corresponding to
$\mathsf{Mat}_n/\mathsf{GL}_n$, we will not be able to give a complete
description of the smooth symmetries.

Let us now make some relevant remarks about the smooth space
$Mat_n/\!\!/GL_n$. Firstly, it is not connected and has many
inequivalent points, for its
0-truncation is, as a space, that corresponding to the set of
equivalence classes of hamiltonians, of which there are many. (For
$n=1$, for example, we get the set $\bbR$; more generally we get the
set of possible
elementary divisors, {\em i.e.} the characteristic polynomials of the
Jordan blocks.)

Secondly, given a point $h$ in $Mat_n/\!\!/GL_n$ corresponding to a hamiltonian $\mathsf{h} \in
\mathsf{Mat}_n$, the connected component at $h$ is given by the
delooping of the group object $S_h$ corresponding to the Lie group
$\mathsf{S}_h$ given by the stabilizer of $h$ under the action of $\mathsf{GL}_n$.

From here, it is easy to satisfy Simplicio. For $G$-actions with
homotopy fixed point through a point corresponding to a hamiltonian
$\mathsf{h}$,
we may replace $G$ by $\tau_{\leq 0} G$, following the discussion in \S\ref{sec:findact}. Taking $G_0 := \tau_{\leq 0} G$ to be Lie for
simplicity, we find that the possible $G$-actions with homotopy fixed
point through $h$ are described by split short exact sequences of Lie
groups extending $\mathsf{G}_0$ by $\mathsf{S}_h$.
Suppose the automorphism $\mathsf{e}:\mathsf{G}_0\times
\mathsf{S}_h\to \mathsf{S}_h$ describes the splitting
(see~\S\ref{sec:grpSSl}). Then the homotopy fixed points are given by
smooth maps $\mathsf{r}:\mathsf{G}_0\to \mathsf{S}_h$ satisfying
Eq.~\ref{eq:TwistedReps}. 
 So, for example, if we take the trivial extension
$\mathsf{S}_h \times \mathsf{G}_0$, we get a {\em bona fide}
representation of $\mathsf{G}_0$ that commutes with $\mathsf{h}$, but if $G_0$ acts on $S_h$
by inner automorphisms, we get a projective representation of $\mathsf{G}_0$
that commutes with $\mathsf{h}$.
So physically, we get back the usual
expectation that symmetries correspond to linear (not unitary here)
operators that commute with the hamiltonian,
in the sense that 
$\mathsf{r}(\mathsf{g}) \mathsf{h}=\mathsf{h}\mathsf{r}(\mathsf{g}),$ 
but they need only form a representation of a group up to twisting
via $\mathsf{e}$. 

Here an explicit description of the automorphisms of $\mathsf{S}_h$,
is far from straightforward, not least because $\mathsf{S}_h$, though
connected, is not compact.\footnote{To see that $\mathsf{S}_h$ is
  connected, let $\mathsf{s},\mathsf{s}^\prime \in \mathsf{S}_h$ and consider the
  function $\mathsf{f}:\ssC\to \mathsf{Mat}_n: \mathsf{t}\mapsto \mathsf{s}\mathsf{t}+\mathsf{s}^\prime (1-t)$. By
  the fundamental theorem of algebra, the zero set $Z$ of
  $\mathsf{det} \mathsf{f}: \ssC \to \ssC$ consists of at most $n$ isolated points so $\ssC-Z$ is
  path connected.  Moreover, $0,1 \in \ssC-Z$. Since $\mathsf{f}\mathsf{h} = \mathsf{h}\mathsf{f}$,
  evaluating $\mathsf{f}$ along any path from $ \mathsf{t}=0$ to $ \mathsf{t}=1$ in $\ssC-Z$ exhibits
  a path in $\mathsf{S}_h$ connecting $s$ to $s^\prime$. To see that
 $\mathsf{S}_h$ is not compact, observe that it has a Lie subgroup isomorphic
 to $\ssC^\times$ given by the non-zero diagonal matrices.}

To give just the simplest example, if
  $\mathsf{S}_h=\mathsf{GL}_1 \simeq \ssC^\times$, all automorphisms
  are necessarily outer and the Lie group $\mathsf{Aut}\,  \mathsf{GL}_1$ is isomorphic to the subgroup of $\mathsf{GL}_2$
  consisting of $2\times 2$ matrices of the form $\left(\begin{smallmatrix} a &
    b \\ 0 & \pm 1 \end{smallmatrix}\right)$, which, in terms of the connected
  component at the identity (isomorphic to the
  affine group of the real line) and the group of components
  (isomorphic to $\ssZ/2$), can be
  expressed as a short exact sequence
  of Lie groups that is right- (but not left-) split.

Since the smooth space $X$ of non-unitary quantum-mechanical theories
is not connected, we cannot expect to be able to classify all group
actions on the full smooth space easily. We shall content ourselves
with describing some specific examples. Moreover, there can be smoothness anomalies, in that a group action defined on the
connected component at $h$ may not extend to all of $X$ and indeed the
example we give for unitary quantum mechanics in \S\ref{sec:UnitQM}
provides an
example for non-unitary quantum mechanics as well. 
\subsubsection{Smooth TQFTs in $d=1$ - Take two}\label{sec:1dTFTTake2}

If we focus on the special case of theories corresponding to the
point $h=0$ (of which up to equivalence there is one for each $n$), we
get another smooth incarnation of the oriented TQFTs in
$d=1$ whose symmetries we discussed in \cite{GRWTS23}.

To be precise, we should like to consider the homotopy fixed points
through $h=0$ of
group actions on
$Mat_n/\!\!/GL_n$. According to the general arguments in \S \ref{sec:orbstab}, we can
do this by first studying the group actions with homotopy fixed points
on $\tge{1}{x}{X}$, which here is equivalent to $BGL_n$ (since every element
of $\mathsf{GL}_n$ stabilizes $h=0$), and then worrying about which
actions on $\tge{1}{x}{X}$ extend to actions on $X$. As we show in the \pagetarget{page:ExtendToQM}{Appendix}~\ref{proof:ExtendToQM}, there is in
fact nothing to worry about here, in that all actions extend. 

The problem, and  the physical interpretation is now identical, as one would hope, to what we found in~\S\ref{sec:TFTs1dTake1}.

\subsubsection{Invertible non-unitary quantum mechanics}
Next consider the special case where $n=1$, for which  ${Mat}_1/\!\!/GL_1=\bbC\times
B\bbC^\times$.  While this is still not connected, its $0$-truncation
is simply the smooth space corresponding to the manifold $\ssC$, which
will allow us to make some progress.

For a generic group $G$ acting on $\C \times B\C^\times$, we have a diagram
\pastingVE{\bbC\times B\bbC^\times}{(\bbC\times
  B\bbC^\times)/\!\!/G}{\bbC}{\bbC/\!\!/G}{\terminal}{BG\mathrlap{,}}
in $\Sm$ in which all squares are cartesian. The bottom square defines an
action of $G$ on the smooth space corresponding to
the manifold $\ssC$, whose homotopy quotient we denote $\bbC /\!\!/ G$. The top square shows that the morphism $(\bbC\times
  B\bbC^\times)/\!\!/G \to \bbC /\!\!/ G$ has the structure of a
  $B\bbC^\times$-fibre bundle over $\bbC/\!\!/G$, as defined in \cite{NSS14}, in which the effective
  epimorphism $\bbC \to \bbC/\!\!/G$ provides a cover that trivializes the
  bundle. As shown there,  $B\bbC^\times$-fibre bundles over $\bbC/\!\!/G$ are classified by morphisms
  from $\bbC/\!\!/G\to B\Aut \, (B\bbC^\times)$, though we must take care that not all such bundles will be trivialised by $\C\to \bbC/\!\!/G$. Going in the other direction, we can
  classify all $G$-actions on $\C\times B\bbC$ by first finding all $G$-actions on
  $\bbC$. For each of these, we must then find the possible
  $B\bbC^\times$-bundles over the resulting homotopy quotient
  $\bbC/\!\!/G$, subject to the condition that they trivialize under $\C\to
  B\bbC/\!\!/G$. Such a bundle always exists, since we
  have the trivial bundle $B\bbC^\times \times \bbC/\!\!/G$. 
  
The richness of possible $G$-actions on smooth spaces of QFTs is now
clear, since even when $G$ corresponds to
a Lie group, in general there will be many ways that it can act smoothly
on $\bbC$, and each of these will induce at least one action on
$X$. So it is hard to say anything else in full generality. Thus, let us turn to constructing explicit examples of group actions. 

Suppose, as a first example, that $G$ acts trivially on $\bbC$, but that
we have a non-trivial action of $G$ on $B\bbC^\times$, denoted
$B\bbC^\times/\!\!/G$. Then we have a $G$-action on $\bbC\times B\bbC^\times$ given by 
\pastingVE{\bbC\times B\bbC^\times}{(\bbC\times
  B\bbC^\times)/\!\!/G}{\bbC}{\bbC\times BG}{\terminal}{BG\mathrlap{,}}
 This example is of interest because it shows that we have no
 smoothness anomalies. Indeed, the connected component of any point
 $\terminal\to \C\times B\C^\times$ is simply $B\C^\times$ and the above
 action leads to the action corresponding to $B\bbC^\times /\!\!/G$ on each connected component.
  
For our next example, we consider actions on $\C \times B\C^\times$ of
a group object that is itself a delooping, such that we may write it as
$G = B\hat G$. These are of interest, since they justify our earlier claim that
QFTs in spacetime dimension $d$ can have non-trivial $d$-form
symmetries. Moreover, they illustrate another surprising feature of
actions of such groups: their possible homotopy fixed points can pick
out a `proper subspace' of $\tau_{\leq 0} X$, even though they necessarily act
trivially on $\tau_{\leq 0} X$.

We consider a specific class of actions of such groups that are
obtained by pullback of a canonical example, in which $\hat G$ is the
abelian group object $(\C^\times)^\C$
corresponding to the abelian group
whose elements are
smooth maps from $\ssC$ to $\ssC^\times$, with multiplication given
by pointwise multiplication in the target, made into a smooth space in
the obvious way.\footnote{The reader will observe
  that the construction works for any $X$ of the form $Y \times BA$,
  provided $BA$ is deloopable.}

Evaluation on the target defines a morphism $\C \times (\C^\times)^\C
\to \C^\times$ and we can use this to build a morphism $\C \times B^2(\C^\times)^\C
\to B^2\C^\times$ as, for each $\ssR^p$, the functor
$(\C\times B^2(\C^\times)^\C)(\ssR^p)
\to B^2\C^\times(\ssR^p)$ that acts trivially on 0- and 1-simplices and by evaluation on 2-simplices. 
Using this morphism, we can form the diagram
\[\begin{tikzcd}
	\C \times B\C^\times & (\C \times B\C^\times) /\!\!/
        B(\C^\times)^\C & \terminal \\
	\C & {\bbC\times B^2(\C^\times)^\C} & {B^2\C^\times} \\
	\terminal & {B^2(\C^\times)^\C\mathrlap{,}}
	\arrow[from=1-3, to=2-3]
	\arrow[from=2-1, to=2-2]
	\arrow[from=2-2, to=2-3]
	\arrow[from=1-2, to=1-3]
	\arrow[from=1-2, to=2-2]
	\arrow[from=1-1, to=1-2]
	\arrow[from=1-1, to=2-1]
	\arrow[from=2-1, to=3-1]
	\arrow[from=3-1, to=3-2]
	\arrow[from=2-2, to=3-2]
\end{tikzcd}\]
in which all squares are cartesian and the top rectangle is the
trivial principal $B\C^\times$-bundle over $\C$. 
The left-hand rectangle then
exhibits an action of $B(\C^\times)^\C$ on $\C \times
B\C^\times$.\footnote{More explicitly, the object $(\C \times B\C^\times) /\!\!/
        B(\C^\times)^\C$ can be described as follows.  To $\ssR^p$, we
        assign the $2$-groupoid whose objects are smooth maps $\ssR^p\to
        \ssC$, while a 1-simplex $f_0\to f_1$ exists if and only if
        $f_0=f_1$ and is given by a smooth map $\mu:\ssR^p\to \ssC^\times$, and a
        2-simplex with all $0$-vertices $f$ and $1$-vertices
        $\mu_{01}$, $\mu_{12}$ and $\mu_{02}$ is a smooth map $\eta:\ssC\times \ssR^p\to \ssC^\times$ such that $\mu_{01}(x)\mu_{12}(x)=\mu_{02}(x)\eta(f(x),x)$.}

Given any abelian group object $B\hat G$ and a morphism
$B^2\hat G \to B^2(\C^\times)^\C$, we can construct a group action of
$B\hat G$ on $B(\C^\times)^\C$  by pulling back. We note that we get a
similar diagram for $B\hat G$, namely 
\[\begin{tikzcd}
	\C \times B\C^\times &(\C \times B\C^\times)/\!\!/B\hat G & \terminal \\
	\C & {\bbC\times B^2\hat G} & {B^2\C^\times} \\
	\terminal & {B^2\hat G\mathrlap{.}}
	\arrow[from=1-3, to=2-3]
	\arrow[from=2-1, to=2-2]
	\arrow[from=2-2, to=2-3]
	\arrow[from=1-2, to=1-3]
	\arrow[from=1-2, to=2-2]
	\arrow[from=1-1, to=1-2]
	\arrow[from=1-1, to=2-1]
	\arrow[from=2-1, to=3-1]
	\arrow[from=3-1, to=3-2]
	\arrow[from=2-2, to=3-2]
\end{tikzcd}\]
While the bottom square imposes no constraint on the action (since
$B\hat G$ always acts trivially on the $0$-truncation $\C$ of $\C
\times B\C^\times$), the action is strongly constrained by the top
squares. Namely, $(\C \times B\C^\times)/\!\!/B\hat G \to  {\bbC\times B^2\hat G}$
must not only have the structure of a principal ${B\C^\times}$-bundle, but also must trivialize under pullback along $\C \to \C \times
B\C^\times$.

Moreover, our original action of $B(\C^\times)^\C$ is apparently anyway
uninteresting for physics, since it has no homotopy fixed
points. Indeed,
  $\tge{2}{[z]}{(\C \times B\C^\times) /\!\!/
        B(\C^\times)^\C}=:B^2S_z$ at a point $z$ represented
  by $\mathsf{z} \in\ssC$ assigns to $\ssR^0$ the subgroup of smooth maps $\mathsf{f}:\ssC\to
  \ssC^\times$ for which $\mathsf{f}(\mathsf{z})$ is the
    identity in $\ssC^\times$. In spaces
  $B^2S_z(\ssR^0)\to B^2(\bbC^\times)^\bbC (\ssR^0)$ does not have a
  section for any $z$, thus neither does the morphism $B^2S_z \to
  B^2(\bbC^\times)^\bbC$ in smooth spaces. Hence, the action of
  $B(\C^\times)^\C$ on $\C\times B\C^\times$ has no sections
 {\em i.e.} no homotopy fixed points.

Nevertheless, after pulling back we can obtain a plentiful supply of interesting actions
that do have homotopy fixed points
using this construction, as the
following examples show. 

Consider a $B\hat G$-action of the above type. Then,  $\tge{2}{[z]}{(\C \times B\C^\times) /\!\!/
        B\hat G}=: B^2\hat S_z$ for a point  $z:\ast \to\C$
 is related to  $B^2S_z$ by the cartesian square
\square{B^2\hat S_z}{B^2 S_z}{B^2\hat G}{B^2(\C^\times)^\C\mathrlap{.}}{}{}{}{}
For this $B\hat G$-action
to have a homotopy fixed point, the left vertical morphism must have a section. This is
equivalent to saying that $B^2\hat G\to B^2(\C^\times)^\C$ must
factor, up to homotopy, through $B^2S_z \to
B^2(\C^\times)^\C$. Looking at this result in spaces, {\em i.e.} after
applying $\Gamma$, tells us that we only have homotopy fixed points
through $z$ if, under the double looping of $B^2\Gamma \hat G\to B^2\Gamma (\C^\times)^\C$, every object in $\Gamma \hat G$ is taken to a  smooth
map  $\ssC\to \ssC^\times$ in $\Gamma (\C^\times)^\C$ that evaluates to $1\in \ssC^\times$ at
$\mathsf{z}$. (If $\hat G$ happens to be discrete, this condition is
both necessary and sufficient.)

For a first example, let $B\hat G=B\R$, corresponding to the abelian
group of the real numbers under addition, considered as a discrete
group. Then let $B^2\R\to
B^2(\C^\times)^\C$ be the morphism corresponding to the homomorphism
sending $\mathsf{x}\in \ssR$ to the map $\ssC\to \ssC^\times: \mathsf{z} \mapsto
e^{\mathsf{z} \mathsf{x}}$. The induced action of $B\R$ on $\C\times B\C^\times$
has a single homotopy fixed point through $\mathsf{z}=0$.
More generally, one could take $\mathsf{x}$ to the map $\ssC\to \ssC^\times: \mathsf{z}
\mapsto e^{\mathsf{f}(\mathsf{z})\mathsf{x}}$ for some smooth map $\mathsf{f}:\ssC\to \ssC$; the homotopy
fixed points are then through the zeroes of $\mathsf{f}$.

As a second example, take $B\hat G=B\Z$, the abelian group of integers
under addition. Then let $B^2\Z\to B^2(\C^\times)^\C$ be the morphism corresponding to the map
taking $\mathsf{n}\in \ssZ$ to $\ssC\to \ssC^\times: \mathsf{z} \mapsto e^{\mathsf{z}\, \mathsf{n}}$. The points $z\in \C$ which are homotopy fixed points of this action are those corresponding to $\mathsf{z}\in2\pi \mathrm{i} \ssZ\subset \ssC$.

These group actions are analogous, in terms of the homotopy fixed
points of the induced action on $\tau_{\leq 0} X$, to the standard notion of group actions on sets. But in terms
of the actions themselves, they are very much unlike standard group
actions on sets, because they produce non-trivial homotopy fixed
points of $\tau_{\leq 0} X$ even though the action on $\tau_{\leq 0} X$ is
trival. The `explanation' for this is that although every group action
on $X$ with a homotopy fixed point induces a group action on $\tau_{\leq 0}
X$ with homotopy fixed point, there is no guarantee that all homotopy
fixed points of the latter arise in this way.

We stress that this phenomenon arises only because $X$ is not
connected; poor Simplicio will never observe such things.
\subsection{Unitary Quantum Mechanics\label{sec:UnitQM}}
We now discuss how to add a unitary structure to the theories
described, to get something which could genuinely be described as quantum
mechanics. In fact this can be done using symmetry
considerations,\footnote{For an alternative implementation, which
  generalizes to QFTs in higher $d$, see \cite{KS21}.}
making it all the more suitable for discussion here.

A key point is that we in fact need both a structure and a property
(corresponding respectively to reflection and postivity, respectively,
in the euclidean context \cite{FH21,MS23}), which
we can implement here by means of 0- and 1-form symmetries,
respectively.

We begin by observing that there is an action of the group object
corresponding to the Lie group $\ssZ/2$ on $\coprod_n BGL_n$ that sends the
non-trivial element of  $\ssZ/2$ to the automorphism $M\mapsto
(M^{\dagger})^{-1}$ of $\mathsf{GL}_n$. According to the discussion in
\S\ref{sec:grpSSl}, a homotopy fixed point of this action is given by an element
$\mathsf{A} \in \mathsf{GL}_n$ such that $\mathsf{A} = \mathsf{A}^\dagger$, {\em i.e.}  a
non-degenerate hermitian form. This $\mathsf{A}$ specifies where
the non-trivial element of $\ssZ/2$ is sent under the twisted representation as defined in Eq.~\ref{eq:TwistedReps}. The morphisms between homotopy fixed
points correspond to conjugation by $\mathsf{GL}_n$, so every homotopy fixed
point is equivalent to a diagonal matrix whose entries are $\pm 1$. 
More generally, the smooth space of homotopy fixed points assigns to
$p$ the groupoid with objects smooth maps $\mathsf{A}:\ssR^p\to
\mathsf{GL}_n$ taking values in non-degenerate hermitian forms and
morphisms from $\mathsf{A}_1$ to $\mathsf{A}_2$ given by smooth maps
$\mathsf{m}:\ssR^p\to \mathsf{GL}_n$ such that
$\mathsf{A}_2(\mathsf{x})=\mathsf{m}(\mathsf{x})
\mathsf{A}_1(\mathsf{x})\mathsf{m}(\mathsf{x})^{\dagger}$. This smooth
space is equivalent to $\coprod_n \coprod_{p+q=n} BU_{p,q}$, since the
Gram-Schmidt procedure can be carried out for smooth families.

This $\Z/2$-action on $BGL_n$ extends, of course, to an action on the
smooth space $\coprod_n (BGL_n)^{B\bbR}$ of non-unitary quantum
mechanics (where it corresponds to sending objects in the target
category of the QFT to their duals).\footnote{To get a reflection
  structure, as is required for reflection-positivity in euclidean
  QFTs, one should instead take the homotopy fixed points of the
  action which simultaneously sends $t \in \R$ to $-t$, {\em cf.}
  \cite{FH21}.} The corresponding homotopy quotient is given (for
  each $n$) by the mapping object
$(BGL_n/\!\!/\Z/2)^{B\R\times B\Z/2}$ in the slice topos $\Sm_{/B\Z/2}$ and the space of homotopy fixed points of this action
can be written in terms of the semidirect product associated to the
$\Z/2$ action on $BGL_n$ as ${\Sm}_{/B\Z/2}(B(\R\times \Z/2),
B(GL_n\rtimes \Z/2))$. A vertex in this space corresponds, via the arguments in \S\ref{sec:grpSSl}, to a smooth homomorphism $\ssR\times \ssZ/2 \to \mathsf{GL}_n\rtimes \ssZ/2$, such that the diagram
\twosimplex{\ssR\times  \ssZ/2}{\mathsf{GL}_n\rtimes \ssZ/2}{\ssZ/2\mathrlap{,}}{}{}{}{}
commutes (on the nose, since $\ssZ/2$ is abelian).

More prosaically, a homotopy fixed point of this action is, for some $n$,
an $n$-dimensional representation $e^{\mathrm{i}th}$ of $\ssR$ together with a
non-degenerate hermitian form $A$ of dimension $n$ such that $Ae^{\mathrm{i}th}
= e^{\mathrm{i}th^\dagger}A$.

Using the mapping space and $(\terminal_{B\Z/2})^\ast \dashv (\terminal_{B\Z/2})_\ast$
adjunctions, we find that the smooth space of homotopy fixed points is
equivalent to $\coprod_{p+q=n} (BU_{p,q})^{B\R}$. This smooth space of
QFTs, which we might call hermitian quantum mechanics, suffers from
the fact that probabilities computed using Born's rule can exceed 1.  
To remedy that, we need an extra property that singles out the
$p=n$, $q=0$ component of   $\coprod_{p+q=n} (BU_{p,q})^{B\R}$. One
can do this by hand or via a the action of a connected group object in smooth spaces (`a 1-form symmetry').\footnote{For example, the $B\Z/2$ action that is induced by the trivial action on
    $BU_{n,0}$ and the $B\Z/2$-action on $BU_{p,q}$, for which
    $BU_{p,q}/\!\!/B\Z/2$ assigns to, {\em e.g.}, $\ssR^0$ the 2-groupoid with a single 0-simplex, 1-simplices elements of $\mathsf{U}_{p,q}$, and 
   2-simplices with edges $U_{01},U_{12},U_{02}\in U_{p,q}$ that are labeled by $\pm
   1$ such that  $U_{01}U_{12} =\pm U_{02}$.} 

The smooth space corresponding to unitary QM is then $X=BU_n^{B\R}$. Similar to the non-unitary case, this is equivalent to the smooth space $Her_n/\!\!/U_n$ corresponding to the action Lie groupoid $\mathsf{Her}_n/\!\!/\mathsf{U}_n$, where $\mathsf{Her}_n$ denotes the smooth manifold of $n\times n$-hermitian matrices with values in $\ssC$ and the action of the Lie group $\mathsf{U}_n$ is by conjugation.
The smooth space $Her_n/\!\!/U_n$ is not connected; by
diagonalizing, we see that its
$0$-truncation, as a space, corresponds to the set in which an
element is a choice of $n$ unordered
real numbers. 
Labelling the degeneracies of these by $q_i \in \{0,1,2,\dots \}$
(so that $\sum_{i \in I} q_i = n$), we find that the connected component is given by
the delooping of the stabilizer under the $\mathsf{U}_n$ action, {\em i.e.} by
the Lie group $\prod_{i \in I} \mathsf{U}_{q_i}$. 

Ideally, Simplicio would now like to know the possible smooth automorphisms of
$\mathsf{S}_h = \prod_{i \in I} \mathsf{U}_{q_i}$ corresponding to the $h$ of
interest. We note that $\mathsf{S}_h$ is both connected and compact,
(though neither simply-connected  nor
semisimple unless $n=0$). It should therefore be possible to give an
explicit general description, but we refrain from doing so
here. To give just the simplest case, when $\mathsf{S}_h =
  \mathsf{U}_1$, we get that $\mathsf{Aut} \,  \mathsf{U}_1 \simeq
  \ssZ/2$, corresponding to complex conjugation.

Let us instead, finally, give an explicit example of a
smoothness anomaly. 
Recall that in this case, a smoothness anomaly is a $G$-action on
$BS_h$ which does not extend to a $G$-action on $X$. We have already
seen that there are no smoothness anomalies in non-unitary quantum
mechanics when $n=1$ and a similar argument applies here. So to make
things as simple as possible, we will
consider the case $n=2$.  For the same reason, we take $G=\Z/2$ and
$\mathsf{h}=\mathrm{diag}(1,-1)$, so that $\mathsf{S}_h\simeq \mathsf
{U}_1^{ 2}:=\mathsf{U}_1\times \mathsf{U}_1$ and $\mathsf{Aut} \, \mathsf{S}_h \simeq \mathsf{GL}_2(\ssZ)$. We take
the action in which the non-trivial element of $\ssZ/2$ is sent to the
smooth automorphism 
$\mathsf{c}: (e^{\mathrm{i}\uptheta},e^{\mathrm{i}\upphi}) \mapsto
(e^{\mathrm{i}\uptheta},e^{-\mathrm{i}\upphi})$ of $\mathsf{S}_h$. A homotopy fixed
  point of this $\Z/2$-action on $B {U}_1^{ 2}$ corresponds to a twisted representation, as per Eq.\ref{eq:TwistedReps}, which assigns an element of the form $(\pm 1,e^{\mathrm{i}\uptheta})\in \mathsf {U}_1^{ 2}$ to the non-trivial element of $\ssZ/2$. We denote by $j:B {U}_1^{ 2}\to Her_2/\!\!/U_2$ the
inclusion of $B {U}_1^{ 2}$ into $Her_2/\!\!/U_2$ which (following our discussion of Lie
groupoids in \S \ref{sec:smoothspaces}) is induced by the smooth map
$\mathsf{j}: \mathsf {U}_1^{ 2}\to\mathsf{Her}_2\times  \mathsf{U}_2:
(e^{\mathrm{i}\uptheta},e^{\mathrm{i}\upphi}) \mapsto \left(\mathsf{h}, (e^{\mathrm{i}\uptheta},e^{\mathrm{i}\upphi})\right)$.

Now let us suppose that there is an action on $Her_2/\!\!/U_2$ that leads to
  this action on $BS_h$ and show that it implies a
  contradiction. Taking the putative action, we have a diagram
  (analogous to the diagram \ref{eq:AutoDiagram})
  \[\begin{tikzcd}[row sep={45,between origins}, column sep={45,between origins}]
B {U}_1^{ 2}\ar{dd} \ar[rr,"j",pos=0.5]\and    \and  Her_2/\!\!/U_2 \ar[crossing over]{rr}  \and \and \terminal\\[\perspective]
 \and B {U}_1^{ 2} \ar[rr,"j",pos=0.2] \ar{dl} \ar[ul,"Ba",swap] \and     \and Her_2/\!\!/U_2\ar[ul,"f",swap] \ar{rr} \ar{dl} \and\and \terminal\ar[""{name=0, anchor=center, inner sep=0},dl]\ar{ul} \\[-\perspective]
B {U}_1^{ 2}/\!\!/\Z/2\ar{rr} \and  \and   (Her_2/\!\!/U_2)/\!\!/\Z/2 \ar{rr} \ar[from=uu,crossing over]\and \and B\Z/2\mathrlap{,}\ar[from=uu,crossing over]\and
\arrow[{description}, shorten >=3pt, Rightarrow, from=1-5, to=0]
\end{tikzcd}\]
in which all squares are cartesian and the morphisms denoted $Ba$ and $f$ must be equivalences, since they are
obtained by pulling back the equivalence $\terminal \to \terminal$. There are two possible choices for the filler of the horn $\terminal\to B\Z/2
\leftarrow \terminal$ denoted by the double arrow, corresponding to the two
elements of $\ssZ/2$; we shall show that for the non-trivial
element, no suitable $f$ exists. 

To do so,  we consider first the constraints on $a$ imposed by the
outer prism in the diagram. Since this prism only
involves objects that are deloopings of Lie groups, we may replace it
by the equivalent diagram
\[\begin{tikzcd}[row sep={45,between origins}, column sep={45,between origins}]
    \mathsf {U}_1^{ 2} \ar[crossing over]{rr}  \and \and \terminal\\[\perspective]
    \and \mathsf {U}_1^{ 2} \ar[ul,"\mathsf{a}",swap] \ar{rr} \ar{dl} \and\and \terminal\ar[""{name=0, anchor=center, inner sep=0},dl]\ar{ul} \\[-\perspective]
        \mathsf {U}_1^{ 2}\rtimes \ssZ/2 \ar{rr} \ar[from=uu,crossing over]\and \and \ssZ/2 \mathrlap{,}\ar[from=uu,crossing over]\and
\arrow[{description}, shorten >=3pt, Rightarrow, from=1-3, to=0]
\end{tikzcd}\] 
in the category ${\cal L}$ defined in \S\ref{sec:smoothspaces}, where the semidirect
product $\mathsf{U}_1^{ 2}\rtimes \ssZ/2$ is defined by the given action of
$\ssZ/2$ on $\mathsf{U}_1^{ 2}$ in the usual way. In order for this to
commute, we must have that $\mathsf{a} = \mathsf{b} \circ \mathsf{c} \circ
\mathsf{b}^{-1}$ for some automorphism $ \mathsf{b}$ of
$\mathsf{U}_1^{ 2}$.

Having constrained the allowed forms $\mathsf{a}$, our job will be done if we can show
that there is no valid $\mathsf{a}$ and no equivalence $f$ making a commutative square
\[\begin{tikzcd}
	{B {U}_1^{ 2}} & {B {U}_1^{ 2}} \\
	{Her_2/\!\!/U_2} & {Her_2/\!\!/U_2\mathrlap{.}}
	\arrow["Ba", from=1-1, to=1-2,dashed]
	\arrow["f"', from=2-1, to=2-2,dashed]
	\arrow["j", from=1-2, to=2-2]
	\arrow["j"', from=1-1, to=2-1]
\end{tikzcd}\]
To show it, we turn this putative commutative diagram into a
putative commutative triangle involving Lie groups,
for which we can use the explicit description of the full sub-category
$\mathcal{L}$ of $\Sm$ in \S\ref{sec:smoothspaces} to show that such a triangle cannot exist.

As an intermediate step, we form the commutative diagram
\[\begin{tikzcd}
	& {B {U}_1^{ 2}} & {B {U}_1^{ 2}} \\
	{\R\times B {U}_1^{ 2}} & {Her_2/\!\!/U_2} & {Her_2/\!\!/U_2} & {BU_1\mathrlap{.}} \\
	{B {U}_1^{ 2}} & {BU_2}
	\arrow["Ba", from=1-2, to=1-3,dashed]
	\arrow["f"', from=2-2, to=2-3,dashed]
	\arrow["j", from=1-3, to=2-3]
	\arrow["j"', from=1-2, to=2-2]
	\arrow[from=2-1, to=2-2]
	\arrow[from=3-2, to=2-2]
	\arrow[from=2-3, to=2-4]
	\arrow["Bd"', from=3-1, to=3-2]
	\arrow["{\{0\}\times \mathrm{id}}", from=3-1, to=2-1]
	\arrow["{\{1\}\times \mathrm{id}}"', from=1-2, to=2-1]
	\arrow["BT"', from=3-2, to=2-4,dashed]
	\arrow["BR", bend left=50, from=1-2, to=2-4,dashed]
\end{tikzcd}\]
Here, the morphism $Bd$ corresponds to the homomorphism $\mathsf{d}:
\mathsf{U}_1^{ 2}\to
\mathsf{U}_2:(e^{\mathrm{i}\uptheta},e^{\mathrm{i}\upphi})\mapsto
\mathrm{diag}(e^{\mathrm{i}\uptheta},e^{\mathrm{i}\upphi})$. The
morphism $\R \times {B} {U}_1^{ 2}\to  {Her}_2/\!\!/{U}_2$ is that
constructed, in the manner of \S\ref{sec:smoothspaces}, from the
smooth functor of Lie groupoids $\ssR \times \mathsf{B} \mathsf{U}_1^{
  2}\to  \mathsf{Her}_2/\!\!/\mathsf{U}_2$ that maps morphisms
$(\mathsf{x},\mathsf{g})\in \ssR\times \mathsf{U}_1^{ 2}$ to
$(\mathsf{x}\,\mathsf{h} ,\mathsf{d}(\mathsf{g}))\in
\mathsf{Her}_2\times \mathsf{U}_2$. The morphism $BU_2\to
Her_2/\!\!/U_2$ is that constructed from the smooth functor
$\mathsf{B}\mathsf{U}_2\to \mathsf{Her}_2/\!\!/\mathsf{U}_2$ that maps morphisms  $\mathsf{g}\in \mathsf{U}_2$ to $(0,\mathsf{g})\in \mathsf{Her}_2\times \mathsf{U}_2$. The morphism $ Her_2/\!\!/U_2\to BU_1$ is the composition of the morphism $ Her_2/\!\!/U_2\to BU_2$ defining the conjugation action of $U_2$ on $Her_2$ and the morphism $Bdet:BU_2\to BU_1$ corresponding to the determinant homomorphism $\mathsf{det}:\mathsf{U}_2\to \mathsf{U}_1$. The morphism $BR:B{U}_1^{ 2}\to BU_1$ corresponds to the homomorphism 
 \begin{equation} \label{eq:f1k}
\mathsf{R}:	(e^{\mathrm{i}\uptheta}, e^{\mathrm{i}\upphi})\mapsto \exp\left\{ \mathrm{i}\begin{pmatrix}
            1& 1\end{pmatrix} b \begin{pmatrix} 1 & 0 \\ 0 & -1\end{pmatrix} b^{-1}\begin{pmatrix} \uptheta  \\ \upphi \end{pmatrix}\right\},
\end{equation}
where $b \in \mathsf{GL}_2(\ssZ)$ is the element in $\mathsf{Aut}
\, \mathsf{U}_1^{ 2}$ corresponding to the automorphism $\mathsf{b}$.
 Lastly, taking a composite of the morphism $BU_2\to Her_2/\!\!/U_2$, the morphism $f$, and the morphism $Her_2/\!\!/U_2\to BU_1$ we can construct a $BT:BU_2\to BU_1$ making the
 diagram commute.
 
Any morphism $\R\to (BU_1)^{B{U}_1^{ 2}}\simeq BU_1\times \Z \times \Z$ factors through $\ast$; consequently the composite horizontal morphism in the above diagram $\R\times B{U}_1^{ 2}\to BU_1$ must factor through the projection $p_2:\R\times B{U}_1^{ 2}\to B{U}_1^{ 2}$. The result is the commutative diagram
\[\begin{tikzcd}
	& {B{U}_1^{ 2}} \\
	{\R\times B{U}_1^{ 2}} & {B{U}_1^{ 2}} && {BU_1\mathrlap{,}} \\
	{B{U}_1^{ 2}} & {BU_2}
	\arrow["Bd"', from=3-1, to=3-2]
	\arrow["{\{0\}\times \mathrm{id}}", from=3-1, to=2-1]
	\arrow["{\{1\}\times \mathrm{id}}"', from=1-2, to=2-1]
	\arrow["BT"', from=3-2, to=2-4,dashed]
	\arrow["BR",bend left=50, from=1-2, to=2-4,dashed]
	\arrow["{p_2}", from=2-1, to=2-2]
	\arrow[from=2-2, to=2-4,dashed]
\end{tikzcd}\]
which implies the existence of a commutative triangle
\[\begin{tikzcd}
	{B{U}_1^{ 2}} && {BU_1\mathrlap{.}} \\
	& {BU_2}
	\arrow["Bd"', from=1-1, to=2-2]
	\arrow["BT"', from=2-2, to=1-3,dashed]
	\arrow["BR", from=1-1, to=1-3,dashed]
\end{tikzcd}\]
We may equivalently consider this diagram in $\mathcal{L}$, in which case it becomes
 \[\begin{tikzcd}
	{\mathsf{U}_1^{ 2}} && {\mathsf{U}_1\mathrlap{.}} \\
	& {\mathsf{U}_2}
	\arrow["\mathsf{d}"', from=1-1, to=2-2]
	\arrow["\mathsf{T}"', from=2-2, to=1-3,dashed]
	\arrow["\mathsf{R}", from=1-1, to=1-3,dashed]
\end{tikzcd}\]
Since $\mathsf{U}_1$ is abelian, this diagram must commute on the
  nose. The universality of abelianization tells us that $\mathsf{T}$
  must factor through $\mathsf{det}$, and thus $\mathsf{T}\circ
  \mathsf{d}$ must take the form $(e^{\mathrm{i}\uptheta},e^{\mathrm{i}\upphi})\mapsto e^{\mathrm{i} \mathsf{m} (\uptheta+\upphi)}$ for some $\mathsf{m}\in \ssZ$. There is, however, no choice of $b \in
\mathsf{GL}_2(\ssZ)$  in Eq.~\ref{eq:f1k} such that $\mathsf{R}$ takes this
form, so our assumption that the action extends to $Her_2/\!\!/U_2$
must be false.
We thus have a smoothness anomaly,  in the familiar
  world of unitary quantum mechanics.
\subsection{Invertible TQFTs in $d=2$\label{sec:2dTFT}}
In \cite{GRWTS23}, we gave a classification of the group actions and homotopy
fixed points on the space of framed or oriented TQFTs in $d=2$ taking
values in the bicategory whose objects are algebras over
$\ssC$, 1-morphisms are bimodules and 2-morphisms are intertwiners.

We have already seen in the case of $d=1$ that there is a freedom in
choosing how to realise smooth versions of TQFTs, namely by thinking
of them as QFTs on spacetimes equipped with either a smooth map to a
point or a metric, but with trivial dynamical evolution. We expect the
same in higher $d$, but unfortunately no explicit description of a
smooth space of such QFTs is yet available.

Neverthless, it seems reasonable
to assume that in the former case we get the same pattern of
symmetries as in \cite{GRWTS23} ({\em i.e.} we can assume that any group
is discrete) and that in the latter case we endow the groups with
their usual smooth structure. Considering the latter case, the
permutation group $\mathsf{S}_n$ can act on
$B(\mathbb{C}^\times)^{\oplus n}$ via permutation, and this action has
a natural homotopy fixed point derived from the inclusion
$\mathsf{S}_n\to (\ssC^\times)^{\oplus n}\rtimes
\mathsf{S}_n$. Following the discussion in Section \ref{sec:findact}, this
induces an $S_n$-action with homotopy fixed point on
$B^2(\mathbb{C}^\times)^{\oplus n}$ which we denote
$B^2(\mathbb{C}^\times)^{\oplus n}/\!\!/S_n$. This provides us with a
natural guess for the smooth space $X$, which we note is connected and
therefore (in this topos) path-connected.

Before delving into the nitty-gritty, let us make some remarks that
may help with the physics interpretation. Very roughly, given a $0$-form
group $G_0$, a homotopy fixed point assigns to each $g_0
\in G_0$ a natural transformation from the functor (from the bordism
category to some target category) defining the QFT to
itself. A natural transformation in a 1-category assigns to each
object in the source a morphism in the target, but in a higher
category it assigns in addition 2-morphisms to 1-morphisms and so
on. Here, a QFT assigns to the closed 1-manifold $\mathsf{S}^1$ the
$(\ssC,\ssC)$-bimodule, {\em i.e.} the $\ssC$-vector space $\ssC^n$, which we interpret
here as the state space associated to $\mathsf{S}^1$, while we assign to
$g_0$ a linear map from $\ssC^n$ to itself, which we can think of as the
usual action of the group on the state space.

Continuing, given a group object $G_1$ whose underlying smooth space is connected 
  (a `$1$-form group')
,  a homotopy fixed point assigns to each $g_1$ in $G_1$ a 2-natural transformation, which assigns to an object in the
source a $2$-morphism in the target, and so on. So here, a QFT assigns
to a closed 0-manifold ({\em i.e.} a point) the commutative $\ssC$-algebra $\ssC^n$ over $\ssC$, while
to $g_1$ we assign an intertwiner from the identity 1-morphism on
$\ssC^n$, {\em i.e.} the bimodule
$_{\ssC^n}(\ssC^n)_{\ssC^n}$, to itself.

Now for the nitty-gritty. Given a $G$-action on $X:=B^2(\mathbb{C}^\times)^{\oplus n}/\!\!/S_n$, we factorize to
obtain the diagram
\pastingVE{X}{X/\!\!/G}{BS_n}{BS_n/\!\!/G}{\terminal}{BG\mathrlap{.}}
The bottom square defines a $G$-action on $BS_n$. From \S\ref{sec:findact}, we
know that for this action we can replace $G$ with $\tau_{\leq 0}G$. Supposing
this to be a Lie group, we get homotopy fixed points given by smooth
maps from $\pi_0\mathsf{G}$ to $\mathsf{S}_n$ that are twisted homomorphisms, in the
sense of Eq.~\ref{eq:TwistedReps}. Since
the state space associated to $\mathsf{S}^1$ is $\ssC^n$, we interpret these as
twisted representations of the state space that are special in that
the structure of the TQFT forces them to act by permuting the states.

Although every homotopy fixed point of a $G$-action on $X$ induces one
on $\tau_{\leq 0} X$, it is not the case that the former are completely
determined by the latter, even when $G$ is $0$-truncated. This is most
easily seen by considering the invertible case where $n=1$ and the only
action on $\tau_{\leq 0} X = BS_1 = \terminal$ is the trivial one.
We then get an action of the form 
 \squareE{B^2\bbC^\times}{B^2\bbC^\times/\!\!/G}{\terminal}{BG\mathrlap{.}}{}{}{}{}
The space of homotopy fixed points ${\Sm}_{/BG}(BG,
B^2\bbC^\times/\!\!/G)$  is in \cite{NSS14} interpreted as the degree-two $\C^\times$-cohomology of
$BG$ with coefficients in a local system given by the
action of $G$ on $\bbC^\times$.
In general, this cohomology is not related to any traditional
cohomology theory, even when $G$ is a Lie group. However, in the case
of a trivial action, the story is somewhat simplified. For instance,
when $G$ is a compact Lie group, we have, following \cite[Thm. 6.4.38]{Sch13},
\[
	\pi_0{\Sm}_{/BG}(BG, B^2\bbC^\times\times BG)=H^2_{\mathrm{Segal}}(\mathsf{G}, \ssR_+)\times H^3_{\mathrm{Segal}}(\mathsf{G},\ssZ)\mathrlap{,}
\]
where $H^\ast_{\mathrm{Segal}}(-,-)$ represents Lie group cohomology
as defined by Segal~\cite{Seg70,Bry00}. The first factor simply corresponds to
smooth cocycles {\em i.e.} smooth maps $\chi:\mathsf{G}\times \mathsf{G} \to \ssR_+$ satisfying 
\[
	\chi(\mathsf{g}_2,\mathsf{g}_3)+\chi(\mathsf{g}_1,\mathsf{g}_2\mathsf{g}_3)=\chi(\mathsf{g}_1\mathsf{g}_2,\mathsf{g}_3)+\chi(\mathsf{g}_1,\mathsf{g}_2).
\]
We have that  $e^{\chi(\mathsf{g}_1,\mathsf{g}_2)}\in \ssC^\times$ represents a self-intertwiner
of the bimodules $_{\ssC}\ssC_{\ssC}$ associated to the
1-dimensional interval.

Sticking with the invertible case $n=1$, let us study pure 1-form
symmetries. So we take $G$ to be $1$-connected and $2$-truncated, so
that we may write it as, $B^2\hat G$, and we fruthermore take $\hat G$
to correspond to a Lie group. From the discussion in \S\ref{sec:findact} on
Eilenberg-Maclane objects, it follows that, since there is only one
group action of $G$ with homotopy fixed point on $ B\bbC^\times$,
there is also only one on $B^2\bbC^\times$, and this corresponds to the trivial action. The homotopy fixed point space is given by ${\Sm}(B G, B^2 \bbC^\times)$. There is an essentially
surjective functor ${\Sm}_\terminal(B G, B^2 \bbC^\times)\to {\Sm}(BG, B^2 \bbC^\times)$ induced by `forgetting the
point'. We have, from \cite[Prop. 7.2.2.12]{Lur09}, that ${\Sm}_\terminal(B^2\hat{G}, B^2 \bbC^\times)$
is equivalent to the set of homomorphisms $\mathsf{r}$ from $\hat{\mathsf{G}}$ to
$\ssC^\times$. Thus, every homotopy fixed point can be traced back to a 1-dimensional $\hat{\mathsf{G}}$ representation $\mathsf{r}$ with values in $\ssC^\times$. For each $\mathsf{g}\in \hat{\mathsf{G}}$ we can again have that  $\mathsf{r}(\mathsf{g})\in \ssC^\times$ is a self-intertwiner of the bimodules $_{\ssC}\ssC_{\ssC}$ associated to the 1d-interval.

\section{Closing words}
We have described how the language of $\infty$-topoi, in general, and
smooth spaces, in particular, can be used to formulate the notion of
smooth generalized symmetries of dynamical quantum field theories.
Though we have tried to put the basic concepts in place, it is clear that we have
barely scratched the surface in terms of what could be done. Indeed,
comparing to \cite{GRWTS23}, we see that we have only generalized the very
first notion there, namely that of generalized global symmetries. It
remains, for example, to
study gauge symmetries and associated anomalies, for which we
uncovered a rich story
in \cite{GRWTS23}, tied to the cobordism hypothesis. We expect the same to be true in the
smooth context.

Elevating symmetries to smooth symmetries also offers us the hope of
being able to differentiate. There are tools for doing this in the
topos of smooth spaces \cite{Sch13,SS20} and this presumably furnish us with the
means of deriving a generalized version of Noether's theorem relating
smooth symmetries and conserved charges of QFTs, as we briefly
sketched in \cite{GRWTS23}.

Part of the reason that we have not been able to do very much in terms
of applications is certainly the intrinsic mathematical difficulty of
working in an $\infty$-topos. But another part is that we currently
lack many examples of smooth spaces of QFTs. This in turn is partly
due to the lack of understanding of what higher category,
generalizing the 1-category of vector spaces and linear maps, should
be used as the target for QFTs. We have reasons to hope, however, that
progress will be made on this question in the near future; we hope
that this will allow the power of the $\infty$-categorical methods described here to be
brought to bear on physics. ``To infinity and beyond!''

\let\oldaddcontentsline\addcontentsline
\renewcommand{\addcontentsline}[3]{}
\section*{Acknowledgments} 
We thank David Ayala and Pelle Steffens for discussions, along with
the referee for the suggestion, amongst others, to consider the space $G\mathrm{Conn}(M)$. BG is supported by Science and Technology Facilities Council (STFC) consolidated grant ST/T000694/1. JTS is supported  by the U.S. National Science Foundation (NSF) grant PHY-2014071. 
\let\addcontentsline\oldaddcontentsline

\appendix
\renewcommand\thesubsection{\thesection.\arabic{subsection}}
\addtocontents{toc}{\protect\setcounter{tocdepth}{1}}
\section{Group actions on $B_\nabla K$}
    \subsection{Lie groups acting on $B_\nabla K$ \label{sec:grpSSc}}
As we have seen in \S\ref{sec:orbstab}, any 
homotopy fixed point of any
action of any $G$ corresponds to a homotopy fixed point of a corresponding action
on some connected component, which is equivalent to
$B\esh K$, and for these we can take $G$ to be $0$-truncated,
without loss of generality.

If we further assume that $G$ corresponds to a Lie group
$\mathsf{G}$, then, as we have argued in \S\ref{sec:grpSSl},
the analysis of actions on $B\esh K$ and
their homotopy fixed points collapses to the analysis of short exact
sequences of groups and their splittings.

We can also say something about the presence of smoothness anomalies
in this case, at least if we also assume that the Lie group $\mathsf{K}$ is
  abelian. This will enable us to give another concrete  example of a
  smoothness anomaly in Appendix~\ref{sec:TFTs1dTake3}, albeit in a
  case that is most likely of mathematical, rather than physical, interest.

The arguments are somewhat involved, so let us first sketch the
essence of the
idea before dotting the $i$s and crossing the $\hbar$s to make everything precise. Suppose the action of 
$G$ on $B\esh K$ corresponds to assigning the automorphism
$\mathsf{a}: \esh{\mathsf{K}} \to \esh{\mathsf{K}}$ to some element of
$\mathsf{G}$. If there is no smoothness anomaly, this must lift to
some autoequivalence $f: B_\nabla K \to B_\nabla K$, which in turn
induces an autoequivalence on the space of bundles with connection for every
manifold $\mathsf{M}$.

Now, given a principal $\mathsf{K}$-bundle with connection on a
manifold $\mathsf{M}$, we can
compute its holonomy. When the Lie group $\mathsf{K}$ is abelian, this
allows us to associate an element in $\mathsf{K}$ to a loop in
$\mathsf{M}$; since a small deformation of a loop leads
to a small change in the holonomy, we expect that this leads to a
suitably defined morphism in $\Sm$.

The holonomy depends
only on restrictions of the principal bundle with connection to 
$\mathsf{S}^1$, where the connection is
always flat. Since we have already seen that $B\esh K$ can be
interpreted as a smooth space
classifying principal bundles with flat connections, the induced
effect of $f$ on holonomy
is to post-compose the holonomy morphism of the domain bundle with
connection on $\mathsf{M}$
with the map $\mathsf{a}$. But if $\mathsf{a}$, which is
certainly smooth with respect to $\esh{\mathsf{K}}$, is not smooth with
respect to the smooth structure on $\mathsf{K}$, then it seems
unlikely that the
resulting holonomy morphism of the codomain bundle will be smooth; if
it isn't, $f$ cannot
exist and so we have a smoothness anomaly. 

Now for the gory details. Let $\mathsf{G}$ be a Lie group and
$\mathsf{K}$ be an abelian Lie group. 
 Starting from a $G$-action on $B_\nabla
  K$ admitting a homotopy fixed point, we get a $G$-action on $B\esh
  K$. Choosing a horn filler of $\ast \to BG\leftarrow \ast$ (which
  corresponds to choosing an element of $\mathsf{G}$), we get a commutative diagram
\begin{equation}\label{eq:AutoDiagram}
\begin{tikzcd}[row sep={45,between origins}, column sep={45,between origins}]
B\esh K\ar{dd} \ar[rr,"j",pos=0.5]\and    \and   B_\nabla K \ar[crossing over]{rr}  \and \and \terminal\\[\perspective]
 \and B\esh K \ar[rr,"j",pos=0.2] \ar{dl} \ar[ul,"Ba",swap] \and     \and B_\nabla K \ar[ul,"f",swap] \ar{rr} \ar{dl} \and\and \terminal\ar[""{name=0, anchor=center, inner sep=0},dl]\ar{ul} \\[-\perspective]
B\esh K/\!\!/G\ar{rr} \and  \and    B_\nabla K/\!\!/G \ar{rr} \ar[from=uu,crossing over]\and \and BG\mathrlap{,}\ar[from=uu,crossing over]\and
\arrow[{description}, shorten >=3pt, Rightarrow, from=1-5, to=0]
\end{tikzcd}
\end{equation}
in which all squares are cartesian, the bottom and front rectangles
are equivalent to one another, and in which the morphisms denoted
$Ba$ and $f$ must be equivalences, since they are pullbacks of the
equivalence $\ast \to \ast$.

Supposing that we instead start from a  a $G$-action on $B\esh K$ admitting a homotopy fixed point,
described by the homomorphism $\mathsf{e}:\mathsf{G}\times
\esh{\mathsf{K}}\to   \esh{\mathsf{K}}$, we construct the allowed
outer prisms in this diagram. All the objects in this prism are
deloopings of Lie groups, so we can replace by an equivalent
  diagram in the category ${\cal L}$ of the form
\[\begin{tikzcd}[row sep={45,between origins}, column sep={45,between origins}]
     \esh{\mathsf{K}}\ar[crossing over]{rr}  \and \and \terminal\\[\perspective]
    \and \esh{\mathsf{K}}\ar[ul,"\mathsf{a}",swap] \ar{rr} \ar{dl} \and\and \terminal\ar[""{name=0, anchor=center, inner sep=0},dl]\ar{ul} \\[-\perspective]
        \esh{\mathsf{K}}\rtimes  \mathsf{G} \ar{rr} \ar[from=uu,crossing over]\and \and \mathsf{G} \mathrlap{.}\ar[from=uu,crossing over]\and
\arrow[{description}, shorten >=3pt, Rightarrow, from=1-3, to=0]
\end{tikzcd}\]
Choosing the 2-simplex filler indicated in the diagram 
to be $\mathsf{g}\in \mathsf{G}$, the most general allowed form of
  $\mathsf{a}$, based on different manifestations of the pullback
  squares, is given by $\mathsf{b}\circ
  \mathsf{e}_\mathsf{g}^{-1}\circ \mathsf{b}^{-1}$, where $\mathsf{b}$
  is an automorphism of $\esh{\mathsf{K}}$.\footnote{This statement remains
  true if we replace $\esh{\mathsf{K}}$ by any abelian Lie group, a fact we will use for our second example of a smoothness anomaly
  in \S\ref{sec:UnitQM}.}

A smoothness anomaly will result if there exists $\mathsf{g}\in
  \mathsf{G}$ such that it is not possible to find a corresponding
  allowed form of $\mathsf{a}$ and an equivalence $f$ making a
  commutative diagram of the form 
\[\begin{tikzcd}
	{BK^\delta} & {BK^\delta} \\
	{B_\nabla K} & {B_\nabla K \mathrlap{.}}
	\arrow["Ba", from=1-1, to=1-2,dashed]
	\arrow[from=1-1, to=2-1]
	\arrow[from=1-2, to=2-2]
	\arrow["{f}"', dashed, from=2-1, to=2-2]
      \end{tikzcd}\]

   Now we simplify the problem using holonomy, which defines a morphism in $\Sm$, denoted
$\mathrm{hol}: (B_\nabla K)^{S^1}\to K$,\footnote{When $\mathsf{K}$ is
  non-abelian, we instead get a morphism to the smooth space
  $K/\!\!/K$, where the action of $\mathsf{K}$ on itself is
  by conjugation.} where $(-)^{(-)}:\Sm \times
\Sm^{op}\to \Sm$ denotes the internal hom in $\Sm$ and $S^1$ is the smooth
space corresponding to the circle manifold $\mathsf{S}^1$. This is described as follows~\cite{FSS13}. 
Let $\mathsf{X}$ be a manifold.  Then $\Sm(X,(B_\nabla K)^{S^1})$ is
the groupoid of principal $\mathsf{K}$-bundles with connection over
$\mathsf{X}\times\mathsf{S}^1$. From such a principal bundle we get a smooth
map $\mathsf{X}\to\mathsf{K}$, \emph{i.e.} an object of
  $\Sm(X,K)$, by taking for each $\mathsf{x} \in\mathsf{X}$ the
  holonomy along $\{\mathsf{x}\} \times \mathsf{S}^1$.
This construction is natural in $X$ and so gives rise to a morphism $(B_\nabla K)^{S^1}\to
K$.

Consider now the addition of the following data: A smooth
manifold $\mathsf{M}$ and a principal $\mathsf{K}$-bundle
$\mathsf{P}$ on $\mathsf{M}\times \mathsf{S}^1$ with connection. Such
a bundle is an object in $\Sm(M\times S^1, B_\nabla K)$ or equivalently
an object in $\Sm(M,  (B_\nabla K)^{S^1})$; we denote the latter object
by $P:M\to
(B_\nabla K)^{S^1}$. With this data, we can form the following commutative diagram
\[\begin{tikzcd}
	& {(BK^\delta)^{S^1}} & {(BK^\delta)^{S^1}} \\
	M & {(B_\nabla K)^{S^1}} & {(B_\nabla K)^{S^1}} \\
	& K & K \mathrlap{.}
	\arrow["{(Ba)^{S^1}}", from=1-2, to=1-3,dashed]
	\arrow[from=1-2, to=2-2]
	\arrow[from=1-3, to=2-3]
	\arrow["{f^{S^1}}"', dashed, from=2-2, to=2-3]
	\arrow["\mathrm{hol}", from=2-3, to=3-3]
	\arrow["\mathrm{hol}"', from=2-2, to=3-2]
	\arrow["P", from=2-1, to=2-2]
      \end{tikzcd}\]
From this diagram we get two morphisms $r:M\to K$ and $r^\prime:
  M\to K$ defined, respectively, as $\mathrm{hol}\circ P$ and $\mathrm{hol}\circ f^{S^1}
  \circ  P$. Since $r$ and $r^\prime$ correspond to smooth maps
  between
  manifolds, they are determined by their underlying maps of sets, or equivalently by the morphisms $\Gamma r$ and $\Gamma
    r^\prime$ in $\cal{S}$.

On
  applying $\Gamma:\Sm\to {\cal S}$ to the above diagram, we can make
  use of
  the fact that  $\Sm(S^1,BK^\delta)\to \Sm(S^1, B_\nabla K)$ is
  an equivalence. Indeed it is $-1$-truncated, since $BK^\delta \to
  B_\nabla K$ is, and it is $-1$-connected, since every bundle with
  connection over $\mathsf{S}^1$ is flat. 
 Since the space $\Sm(S^1,BK^\delta)$ corresponds to the groupoid
    of principal $\mathsf{K}^\delta$-bundles on $\mathsf{S}^1$,
    an object of $\Sm(S^1,BK^\delta)$ corresponds to a
    homomorphism $\mathsf{h}:\ssZ\to \mathsf{K}^\delta$. The
  morphism $\Gamma \mathrm{hol}:\Sm(S^1,BK^\delta)\to \Gamma K$
  corresponds to evaluating $\mathsf{h}$ at $1 \in \ssZ$ and
  the morphism $\Gamma (Ba)^{S^1}$ corresponds to post-composing with
  $\mathsf{a}$. Thus, we have a commutative diagram in ${\cal S}$ of
  the form
\[\begin{tikzcd}
	{\Sm(S^1,BK^\delta)} & {\Sm(S^1,BK^\delta)} \\
	{\Gamma K} & {\Gamma K \mathrlap{,}}
	\arrow["{\Gamma (Ba)^{S^1}}", from=1-1, to=1-2]
	\arrow["\Gamma \mathrm{hol}", from=1-2, to=2-2]
	\arrow["\Gamma a"', from=2-1, to=2-2]
	\arrow["\Gamma \mathrm{hol}"', from=1-1, to=2-1]
\end{tikzcd}\]
where $\Gamma
  a$ is the morphism in ${\cal S}$ corresponding to the map of
    sets underlying the smooth (but trivially so) homomorphism $\mathsf{a}:\esh{\mathsf{K}}\to \esh{\mathsf{K}}$. 
From this, we can deduce that in ${\cal S}$ we have $\Gamma
r^\prime=\Gamma a \circ \Gamma r $, or equivalently that
$
\mathsf{r}^{\prime \delta}=\mathsf{a} \circ\esh{\mathsf{r}}
$
as maps of sets.
If we can construct $\mathsf{r}$ and $\mathsf{a}$ such that the set map
$\mathsf{a} \circ\esh{\mathsf{r}}$ does not descend from a smooth map
$\mathsf{r}^\prime$,
then the group action we started with will
have a smoothness anomaly. We will see an example of this  in Appendix~\ref{sec:TFTs1dTake3}.

\subsection{Smooth TQFTs in $d=1$ -- Take three\label{sec:TFTs1dTake3}}

In this appendix we repeat the analysis of
  \S\ref{sec:TFTs1dTake1}, taking as $X$ the smooth space $\coprod_n
  B_\nabla GL_n$. This is a possible (though admittedly
  unlikely) candidate for  the smooth
space version of TQFTs in $d=1$. 

 We find that 
the space called $B GL_n$ there is replaced not by the
connected smooth space $BGL_n$, as one might na\"{\i}vely have guessed, but
rather by the non-connected smooth space $B_\nabla
GL_n$. This seems perhaps surprising at first glance, but the
arguments of \S\ref{sec:grpSSc} show that the upshot for physical
symmetries, at least from Simplicio's point of view, is entirely reasonable: for a group object $G$
corresponding to a Lie group $\mathsf{G}$, say, we can replace $B_\nabla
GL_n$ by $B\esh{GL_n}$ and $\mathsf{G}$ by $\pi_0\mathsf{G}$. In
other words, the symmetries of the smooth version are unchanged 
compared to those of the original: the relevant group actions are
specified by homomorphisms $\pi_0\mathsf{G} \to \mathrm{Aut} \,
\esh{\mathsf{GL}_n}$, where, as before, $\esh{\mathsf{GL}_n}$ is $\mathsf{GL}_n$ considered as a discrete group, and an associated homotopy fixed point is
given by a set map $\mathsf{r}: \pi_0\mathsf{G} \to
\esh{\mathsf{GL}_n}$ satisfying the corresponding version of Eq.~\ref{eq:TwistedReps}. The group of automorphisms of
$\esh{\mathsf{GL}_n}$, as described in \cite{Die63}, is
generated by inner automorphisms, automorphisms involving determinants,
inversion-transposition and automorphisms
involving field
automorphisms. 

So a smooth 1-d oriented $\mathsf{G}$-symmetric TQFT
 is, at least in this incarnation, a
representation of $\pi_0\mathsf{G}$, but twisted in the sense of
Eq.~\ref{eq:TwistedReps}; the morphism that forgets the $\mathsf{G}$-symmetry sends such a
twisted representation to its underlying vector space.

The general discussion on smoothness anomalies for $B_\nabla K$ for
abelian $\mathsf{K}$ in
\S\ref {sec:grpSSc} can be used to demonstrate that we can have a smoothness
anomaly in the $n=1$ case (so $\mathsf{K}=\ssC^\times$).

We begin by asserting that there exists (at least assuming the axiom
of choice) an automorphism of order three
of the discrete group $\esh{(\ssC^\times)}$.  For instance, given $\ssC^\times\simeq
  \ssR\times \mathsf{U}_1$, where $\ssR$ is the group of
  additive reals, one can form such an automorphism using an order-three
  permutation of a basis for $\esh{\ssR}$ (constructed using the axiom
  of choice), considered as a vector space over
  the rationals.

   It will be crucial for what
  follows that such an automorphism cannot lie in the same conjugacy class as
  an automorphism of $\esh{(\ssC^\times)}$ that is smooth with respect to the usual Lie group structure on $\ssC^\times$, since such an automorphism must have
  an order valued in $\{1,2,\infty\}$.

Proceeding, consider the action of $\Z/3$ on $B\esh{(\C^\times)}$ described by the homomorphism
$\mathsf{e}:\ssZ/3\times\ssC^\times\to\ssC^\times$ that sends the
generator of $\ssZ/3$ to such an automorphism of order three.
Let us assume that this action
extends to an action on $B_\nabla \C^\times$ and show that it leads to a contradiction. Following the
argument in \S\ref{sec:grpSSc}, we choose $\mathsf{M}=\ssR^2$,
and $\mathsf{P}$ to be the trivial principal $\ssC^\times$-bundle
on $\ssR^2\times\mathsf{S}^1$  with connection $1$-form $\mathsf{A}$
given,
at $((\mathsf{x},\mathsf{y}),{\uptheta},\mathsf{k})\in \ssR^2\times \mathsf{S}^1\times \ssC^\times$,
by
\[
	\mathsf{A}=\mathsf{k}^{-1}\mathrm{d}\mathsf{k}+\frac{1}{\pi}(\mathsf{x}+\mathrm{i} \mathsf{y}) \cos^2\uptheta \mathrm{d}\uptheta .
\]
The morphism $r$ described in \S\ref{sec:grpSSc}
for this choice of $\mathsf{M}$ and $\mathsf{P}$ corresponds to the
smooth surjective local diffeomorphism  $\mathsf{r}:\ssR^2\to \ssC^\times:(\mathsf{x},\mathsf{y})\mapsto
e^{\mathsf{x}+\mathrm{i} \mathsf{y}}$ implying (since $\mathsf{r}^\prime$ is also
smooth) that the automorphism $\mathsf{a}$ described in \S\ref{sec:grpSSc} is
smooth. But
the automorphism $\mathsf{a}$ is also an
automorphism of order three, being in the same conjugacy class as the
original automorphism, so cannot be
smooth, which is a contradiction. Thus the assumption that the
action extends must be false, and we have another example of a
smoothness anomaly.


\section{Proofs and other results}
\tocless\subsection{Actions equipped with homotopy fixed points as actions in the arrow topos \label{proof:Actions-HFPs}}
\noindent\emph{\S\ref{sec:fix}, \pagelink{page:Actions-HFPs}{p. \pagenumber{page:Actions-HFPs}}.}

	The identity morphism $\mathrm{id}_{BG}$ of $BG$, as an object
        in ${\cal O}_{\X}:=\Fun(\Delta^1,X)$, is connected and
       inherits a basepoint from $BG$ in an obvious way.
Equipped as such, $\mathrm{id}_{BG}$ describes a group object in
  ${\cal O}_{\X}$ that we denote $\Omega \mathrm{id}_{BG}$. The topos $({\cal O}_{\X})_{/\mathrm{id}_{BG}}$ can then be viewed  as the category of $\Omega \mathrm{id}_{BG}$-actions. 
	
	The canonical functor  ${\cal O}_{\X_{/BG}}\to ({\cal O}_{\X})_{/\mathrm{id}_{BG}}$ is an equivelence~\cite[Prop. 4.2.12]{Cis19}. We want to prove the following theorem.
	
	\newtheorem*{thm}{Theorem}

	\begin{thm}
	An object of $({\cal O}_{\X})_{/\mathrm{id}_{BG}}$  is in the essential image of the canonical functor ${\cal O}_{\X_{/BG}}\to ({\cal O}_{\X})_{/\mathrm{id}_{BG}}$ restricted to the full subcategory $(\X_{/BG})_\terminal \subseteq {\cal O}_{\X_{/BG}}$ if and only if it corresponds to a $\Omega\mathrm{id}_{BG}$-action on an object of the form $\terminal\to X$.
	\end{thm}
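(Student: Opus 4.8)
\emph{Set-up and dictionary.} The plan is to run everything through the canonical equivalence $\Phi\colon{\cal O}_{\X_{/BG}}\xrightarrow{\ \sim\ }({\cal O}_{\X})_{/\mathrm{id}_{BG}}$ of \cite[Prop. 4.2.12]{Cis19}. First I would spell out the dictionary it induces. On the left an object is a morphism $\phi\colon(E_0\xrightarrow{e_0}BG)\to(E_1\xrightarrow{e_1}BG)$ in $\X_{/BG}$, and $\Phi$ sends it to the square with left edge $\phi$, right edge $\mathrm{id}_{BG}$ and horizontal edges $e_0,e_1$, regarded as an object over $\mathrm{id}_{BG}$ in ${\cal O}_{\X}$. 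Since $\Phi$ preserves limits, terminal objects and the basepoints inherited from $BG$, the \emph{acted-on object} of the $\Omega\mathrm{id}_{BG}$-action $\Phi(\phi)$ — i.e. its fibre over the basepoint $(b_G,b_G)$ of $\mathrm{id}_{BG}$ — is computed objectwise and equals the morphism $b_G^*E_0\to b_G^*E_1$ in $\X$ obtained by pulling $\phi$ back along $b_G\colon\terminal\to BG$. I would also record that $(\X_{/BG})_\terminal\subseteq{\cal O}_{\X_{/BG}}$ is the full subcategory of those $\phi$ whose domain $(E_0\xrightarrow{e_0}BG)$ is the terminal object $\mathrm{id}_{BG}$ of $\X_{/BG}$; fullness is exactly the statement that $\mathrm{id}_{BG}$ is terminal, so that $\X_{/BG}(\mathrm{id}_{BG},\mathrm{id}_{BG})\simeq\terminal$.

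\emph{The key lemma.} The crux is: for $e\colon E\to BG$ in $\X$, one has $b_G^*E\simeq\terminal$ if and only if $e$ is an equivalence. The ``if'' is clear. For ``only if'', note that the structure map $b_G^*E\to BG$ factors as the projection $b_G^*E\to\terminal$ followed by $b_G$, so $b_G^*E\simeq\terminal$ forces $b_G^*E\simeq\terminal$ \emph{over} $BG$ via $b_G$. Now use that $b_G\colon\terminal\twoheadrightarrow BG$ is an effective epimorphism, so $BG\simeq\operatorname{colim}_{[n]\in\Delta^{\mathrm{op}}}\check C(b_G)_n$; universality of colimits gives $E\simeq E\times_{BG}BG\simeq\operatorname{colim}_n\bigl(E\times_{BG}\check C(b_G)_n\bigr)\simeq\operatorname{colim}_n\check C(b_G)_n\simeq BG$, compatibly with the maps to $BG$, so $e$ is an equivalence. (Equivalently, this is the contractibility of the space of $G$-actions on $\terminal$, the unique one being the trivial action, since $\mathrm{Aut}(\terminal)\simeq\terminal$; cf. \cite{NSS14}.) I expect the lemma itself to be routine; the real care is in the bookkeeping of the previous paragraph — pinning down which basepoint and which fibre, and checking the embedding $(\X_{/BG})_\terminal\hookrightarrow{\cal O}_{\X_{/BG}}$ has the claimed image.

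\emph{Assembling.} Given an object $\omega$ of $({\cal O}_{\X})_{/\mathrm{id}_{BG}}$, write $\omega\simeq\Phi(\phi)$. By the dictionary, $\omega$ is the $\Omega\mathrm{id}_{BG}$-action on $b_G^*E_0\to b_G^*E_1$, so $\omega$ corresponds to an action on an object of the form $\terminal\to X$ precisely when $b_G^*E_0\simeq\terminal$. By the lemma this holds iff $e_0$ is an equivalence, i.e. iff $(E_0\xrightarrow{e_0}BG)$ is a terminal object of $\X_{/BG}$, i.e. iff $\phi$ lies in the essential image of $(\X_{/BG})_\terminal\subseteq{\cal O}_{\X_{/BG}}$ — which, since $\Phi$ is an equivalence, is exactly the statement that $\omega$ is in the essential image of $\Phi$ restricted to $(\X_{/BG})_\terminal$. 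In the forward direction one should also check the promised matching with \S\ref{sec:fix}: when $e_0=\mathrm{id}_{BG}$ the acted-on object $\terminal=b_G^*BG\to b_G^*E_1$ is the underlying fixed point of the homotopy fixed point represented by $\phi$. This completes the plan.
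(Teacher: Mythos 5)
Your proposal is correct, and it shares the paper's essential input while packaging the second half quite differently. The common core is the conservativity of pullback along the effective epimorphism $b_G\colon\terminal\twoheadrightarrow BG$: your key lemma ($b_G^*E\simeq\terminal$ iff $E\to BG$ is an equivalence) is exactly what the paper invokes, in a single sentence, to conclude that the edge it calls $\mathcal{M}_{0'2}$ is an equivalence — you supply the \v{C}ech-nerve/universality-of-colimits proof that the paper leaves implicit, and your sketch of it is sound (the levelwise projections $E\times_{BG}\check C(b_G)_n\to\check C(b_G)_n$ are base changes of $b_G^*E\to\terminal$, so the induced map on colimits, which is $e$, is an equivalence). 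Where you diverge is in how the object is then identified with one in the image of $(\X_{/BG})_\terminal$: the paper works in the quasicategorical model and explicitly builds an equivalence $\mathcal{M}\colon\Delta^2\times\Delta^1\to\X$ from a ``degenerate'' object to the given one by a sequence of outer- and inner-horn fillings (the outer fillings being licensed precisely by the equivalence above), whereas you argue model-independently that an object of an arrow category is equivalent to one with terminal domain iff its domain is equivalent to the terminal object, note that this characterizes the essential image of the undercategory inclusion, and transport everything across the Cisinski equivalence $\Phi$, using pointwise limits to identify the acted-on object with $b_G^*E_0\to b_G^*E_1$ (the same pointwise-limits observation the paper uses for its `only if' direction). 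Both routes are valid; yours is shorter and avoids the simplicial bookkeeping, at the cost of having to make the abstract essential-image and fullness identifications explicit, while the paper's construction is self-contained at the level of simplices and makes the equivalence completely concrete.
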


	\begin{proof}
	The canonical functor ${\cal O}_{\X_{/BG}}\to ({\cal O}_{\X})_{/\mathrm{id}_{BG}}$ takes objects in the full-subcategory $(\X_{/BG})_\terminal$ to objects in $({\cal O}_{\X})_{/\mathrm{id}_{BG}}$ of the form
	\[\begin{tikzcd}
	BG & E \\
	BG & BG\mathrlap{,}
	\arrow[from=1-1, to=1-2]
	\arrow[from=1-2, to=2-2]
	\arrow[from=1-1, to=2-2]
	\arrow[from=1-1, to=2-1]
	\arrow[from=2-1, to=2-2]
\end{tikzcd}\]
where the bottom 2-simplex is the degenerate $2$-simplex on $\mathrm{id}_{BG}$. Furthermore, it is easy to see that every object in $({\cal O}_{\X})_{/\mathrm{id}_{BG}}$ of this form is the image of some object in $(\X_{/BG})_\terminal$. We denote by $({\cal O}_{\X})_{/\mathrm{id}_{BG}}^{deg}$ the full subcategory of $({\cal O}_{\X})_{/\mathrm{id}_{BG}}$ spanned by such objects .

What remains is to show that an object of $({\cal
  O}_{\X})_{/\mathrm{id}_{BG}}$  is equivalent to an object in
$({\cal O}_{\X})_{/\mathrm{id}_{BG}}^{deg}$ if and only if it
describes a group action on some $x:\ast\to X$ in $ \X_\terminal\subseteq {\cal
  O}_{\X}$. The `only if' direction holds since limits are calculated pointwise~\cite[Cor.~5.1.2.3 ]{Lur09}. We thus focus on the `if' direction.

Let $\mathcal{M}:\Delta^2\times \Delta^1\to \X$ be a diagram in $\X$. We use the following notion for the vertices of~$\mathcal{M}$,
\[\begin{tikzcd}[row sep={45,between origins}, column sep={45,between origins}]
  {0^\prime} \ar[crossing over]{rr}  \and \and {1^\prime} \\[\perspective]
    \and 0 \ar[ul] \ar{rr} \ar{dl} \and\and 1 \ar[""{name=0, anchor=center, inner sep=0},dl]\ar{ul} \\[-\perspective]
       2  \ar{rr} \ar[from=uu,crossing over]\and \and 3 \mathrlap{,}\ar[from=uu,crossing over]\and
\end{tikzcd}\] 
and use subscripts to indicate the subdiagrams obtained by restricting
to the specified vertices. A 1-simplex in  $({\cal
  O}_{\X})_{/\mathrm{id}_{BG}}$ is a diagram $\mathcal{M}$ subject to
the condition that $\mathcal{M}_{23}=\mathrm{id}_{BG}$. Such a diagram
$\mathcal{M}$ is a morphism in  $({\cal O}_{\X})_{/\mathrm{id}_{BG}}$
from  $\mathcal{M}_{0123}$ to $\mathcal{M}_{{0^\prime}{1^\prime}23}$
(considered as objects in $({\cal O}_{\X})_{/\mathrm{id}_{BG}}$) and
so it makes sense to ask when it is an equivalence. It turns out that
it is an equivalence if  $\mathcal{M}_{0{0^\prime}}:\Delta^1\to \X$
and $\mathcal{M}_{1{1^\prime}}:\Delta^1\to \X$, considered as
morphisms in $\X$ (through the image of the morphism $0\to 1$ of $\Delta^1$), are equivalences.

Let $\sigma:\Delta^1\times \Delta^1\to \X$ be an object $({\cal
  O}_{\X})_{/\mathrm{id}_{BG}}$ representing an
$\Omega\mathrm{id}_{BG}$-action on $x:\ast \to X$. We prove, by
explicit construction, the existence of an equivalence $\mathcal{M}$
from an object in $({\cal O}_{\X})_{/\mathrm{id}_{BG}}^{deg}$ to
$\sigma$. We construct $\mathcal{M}$ as follows. For $\mathcal{M}$  to
be a 1-simplex in $({\cal O}_{\X})_{/\mathrm{id}_{BG}}$ we require
$\mathcal{M}_{23}=\mathrm{id}_{BG}$. For $\mathcal{M}$ to be a
morphism into $\sigma$ we require
$\mathcal{M}_{{0^\prime}{1^\prime}23}=\sigma$. For $\mathcal{M}$ to be
a morphism from an object in  $({\cal
  O}_{\X})_{/\mathrm{id}_{BG}}^{deg}$ we require $\mathcal{M}_{023}$
to be the degenerate simplex on $\mathrm{id}_{BG}$. We now use
that $\sigma$ is a $\Omega \mathrm{id}_{BG}$-action on $x$ to conclude
that $\mathcal{M}_{{0^\prime}2}$ is an equivalence. It follows
  that a filler exists for the $\Lambda^2_2$ horn of
$\mathcal{M}_{0{0^\prime}2}$; we choose such a filler, which also
fixes $\mathcal{M}_{0{0^\prime}}$. For any choice of filler $\mathcal{M}_{0{0^\prime}}$ is an equivalence, as a morphism in $\X$. For $\mathcal{M}$
itself to be an equivalence (as a morphism in $({\cal O}_{\X})_{/\mathrm{id}_{BG}}$) we take
$\mathcal{M}_{1}=\mathcal{M}_{{1^\prime}}$ and
$\mathcal{M}_{1{1^\prime}}$ to be the identity of
$\mathcal{M}_{1}$. We now fill in the remaining data. We take
sequentially $\mathcal M_{0{0^\prime}{1^\prime}}$, $\mathcal
M_{01{1^\prime}}$, $\mathcal{M}_{1{1^\prime}3}$, $\mathcal
M_{0{0^\prime}23}$, $\mathcal{M}_{0{0^\prime}{1^\prime}3}$, and
$\mathcal{M}_{01{1^\prime}3}$ as fillers of the horns $\Lambda^2_1$,
$\Lambda^2_2$, $\Lambda^2_1$, $\Lambda^3_2$, $\Lambda^3_1$, and
$\Lambda^3_2 $ . These fix, respectively,   $\mathcal
M_{0{1^\prime}}$,  $\mathcal{M}_{01}$, $\mathcal M_{13}$,  $\mathcal
M_{0{0^\prime}3}$, $\mathcal{M}_{0{1^\prime}3}$, and
$\mathcal{M}_{013}$. Note that we can fill the $\Lambda^2_2$ horn
of $\mathcal{M}_{01{1^\prime}}$ since $\mathcal{M}_{1{1^\prime}}$ is
an equivalence, as a morphism in $\X$. 

The existence of the equivalence $\mathcal{M}$ proves our theorem.
\end{proof}
The conclusion drawn from this theorem is that $G$-actions equipped with homotopy fixed points can be thought of as $\Omega\mathrm{id}_{BG}$-actions on objects in $\X_\terminal$. In addition to this,  such an $\Omega\mathrm{id}_{BG}$-action has an underlying object $x:1\to X$ if  its preimage in  $({\X_{/BG}})_\terminal$ lands on $x$ under the functor $({\X_{/BG}})_\terminal\to \X_{\terminal}$ induced by $b_{BG}^\ast$.

\tocless\subsection{Relation between $m$-connected covers of
  $X/\!\!/G$ for connected group objects \label{proof:OrbitSpace}}
\noindent\emph{\S\ref{sec:orbstab}, \pagelink{page:OrbitSpace}{p. \pagenumber{page:OrbitSpace}}.}

\begin{thm}
 If $BG$ is $p$-connected, then any section $s:BG\to X/\!\!/G$ through $x:\ast \to X$ has a $(m-2)$-connected/$(m-2)$-truncated factorization through $\tge{m}{[x]}{X/\!\!/G}\to X/\!\!/G$ for $0\le m\le p+1$.
\end{thm}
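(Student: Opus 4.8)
The plan is to exhibit the factorization explicitly, reading off the target object from its defining pullback square
\[\begin{tikzcd}
\tge{m}{[x]}{X/\!\!/G} \ar[r,"q"] \ar[d] & X/\!\!/G \ar[d,"u"] \\
\terminal \ar[r,"{u\circ[x]}"] & \tau_{\leq m-1}(X/\!\!/G)\mathrlap{,}
\end{tikzcd}\]
in which $u$ is the $(m-1)$-connected morphism in the $(m-1)$-connected/$(m-1)$-truncated factorization of $X/\!\!/G\to\terminal$. First I would observe that $q$ is $(m-2)$-truncated by exactly the argument the text already gives for $\tge{m}{x}{X}\to X$ in \S\ref{sec:orbstab}: the point $\terminal\to\tau_{\leq m-1}(X/\!\!/G)$ is $(m-2)$-truncated by \cite[Lem.~5.5.6.15]{Lur09} together with the fact that looping increases truncatedness, and truncatedness is stable under pullback. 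So the whole problem reduces to lifting $s$ to an $(m-2)$-connected morphism $\sigma\colon BG\to\tge{m}{[x]}{X/\!\!/G}$ with $q\circ\sigma\simeq s$; by essential uniqueness of the $(m-2)$-connected/$(m-2)$-truncated factorization, $s\simeq q\circ\sigma$ is then the factorization claimed in the theorem.

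The main content, and the only place the hypothesis $0\le m\le p+1$ is really used, is the construction of $\sigma$. By the universal property of the pullback, producing $\sigma$ with $q\circ\sigma\simeq s$ amounts to producing a homotopy identifying the composite $u\circ s\colon BG\to\tau_{\leq m-1}(X/\!\!/G)$ with the constant map at $u\circ[x]$. This is where $BG$ being highly connected pays off: since $m-1\le p$, the object $BG$ is not merely $p$-connected but $(m-1)$-connected, that is $\tau_{\leq m-1}(BG)\simeq\terminal$, so for any $(m-1)$-truncated object $Z$ the basepoint $b_G\colon\terminal\to BG$ induces an equivalence $\X(BG,Z)\xrightarrow{\sim}\X(\terminal,Z)$ whose inverse sends a point of $Z$ to the corresponding constant map $BG\to Z$. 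Taking $Z=\tau_{\leq m-1}(X/\!\!/G)$ and using that the composite $\terminal\xrightarrow{b_G}BG\xrightarrow{s}X/\!\!/G$ is $[x]$ (which is precisely what it means for $s$ to be a homotopy fixed point through $x$, by the constructions of \S\ref{sec:fix} and \S\ref{sec:orbstab}), one finds that $u\circ s$ restricts along $b_G$ to $u\circ[x]$ and hence is homotopic to the constant map at $u\circ[x]$. This supplies the homotopy, and with it $\sigma$.

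Finally, I would check that $\sigma$ is $(m-2)$-connected. Its source $BG$ is $(m-1)$-connected (as above) and its target $\tge{m}{[x]}{X/\!\!/G}$ is $(m-1)$-connected by construction (connectedness being stable under pullback), and a morphism between two $(m-1)$-connected objects is automatically $(m-2)$-connected; this is a standard property of the $n$-connected/$n$-truncated factorization system in an $\infty$-topos, and in case one wants it spelled out it follows from \cite[Prop.~6.5.1.20]{Lur09} (which already gives that $\terminal\to BG$ and $\terminal\to\tge{m}{[x]}{X/\!\!/G}$ are $(m-2)$-connected) combined with a cancellation property for connected morphisms. The degenerate cases require nothing extra: for $m=0$ the morphism $q$ is an equivalence and there is nothing to prove, while for $m=1$ the nullhomotopy step uses only that $BG$ is always $0$-connected, which is why the theorem applies to $\tge{1}{x}{X}$ for every $G$. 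The one point where I expect to have to tread carefully is the bookkeeping in the nullhomotopy step, namely making the chosen homotopy compatible with the basepoints; but since the theorem asserts only the \emph{existence} of a factorization, this based refinement is not strictly necessary.
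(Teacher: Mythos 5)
Your proof is correct, but it reaches the conclusion by a different route than the paper. The paper never constructs the lift by hand: it takes the $(m-2)$-connected/$(m-2)$-truncated factorization of $s$ itself, $BG\to \tge{m}{s}{X/\!\!/G}\to X/\!\!/G$, uses the orthogonality of the factorization system (together with $s\circ b_G\simeq[x]$) to obtain a canonical comparison morphism $f:\tge{m}{[x]}{X/\!\!/G}\to\tge{m}{s}{X/\!\!/G}$, and then shows $f$ is an equivalence: it is $(m-2)$-truncated because $\tge{m}{[x]}{X/\!\!/G}\to X/\!\!/G$ is, and $(m-2)$-connected by a chain of applications of \cite[Prop.~6.5.1.16]{Lur09} and \cite[Prop.~6.5.1.20]{Lur09} that feeds in the $p$-connectedness of $BG$ through the $(m-2)$-connectedness of $\terminal\to BG$. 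You instead build the factorization directly from the defining pullback square of the connected cover: the hypothesis on $p$ enters through the observation that $\X(BG,Z)\simeq\X(\terminal,Z)$ for $(m-1)$-truncated $Z$, which supplies the nullhomotopy of $u\circ s$ and hence the lift $\sigma$, after which the truncatedness of $q$ and connectedness of $\sigma$ are checked with the same two propositions the paper uses. Your version makes it more transparent exactly where the connectivity hypothesis is used (the nullhomotopy), while the paper's version avoids any explicit homotopy bookkeeping by letting uniqueness of orthogonal factorizations do the identification. Two small points of care in your write-up: the blanket assertion that any morphism between $(m-1)$-connected objects is $(m-2)$-connected should, in a general topos, be run through the pointed argument you sketch as a fallback (a point of the source plus cancellation), which is available here since $BG$ is pointed; and the composite $\sigma\circ b_G$ need not agree with the canonical basepoint of $\tge{m}{[x]}{X/\!\!/G}$, but this is harmless because \cite[Prop.~6.5.1.20]{Lur09} applies to any point of an $(m-1)$-connected object, so the cancellation step goes through unchanged.
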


\begin{proof}
We have $(m-2)$-connected/$(m-2)$-truncated factorizations $[x]:\ast\to \tge{m}{[x]}{X/\!\!/G}\to X/\!\!/G$ and $s:BG\to \tge{m}{[x]}{X/\!\!/G}\to X/\!\!/G$. Since we also have a factorization of $[x]$ through the basepoint of $BG$ and $s$ we have a commutative square formed by the solid lines of 
\[\begin{tikzcd}[column sep=2cm]
	\terminal & {\tge{m}{s}{X/\!\!/G}} \\
	{\tge{m}{[x]}{X/\!\!/G}} & {X/\!\!/G\mathrlap{,}}
	\arrow["", from=1-2, to=2-2]
	\arrow[""', from=2-1, to=2-2]
	\arrow["", from=1-1, to=1-2]
	\arrow[""', from=1-1, to=2-1]
	\arrow["f",dashed, from=2-1, to=1-2]
\end{tikzcd}\]
which by the $(m-2)$-connected/$(m-2)$-truncated factorization system defines a canonical map $f:\tge{m}{[x]}{X/\!\!/G}\to\tge{m}{s}{X/\!\!/G}$ as shown. Our theorem is equivalent to the statement that $f$ is an equivalence under the given conditions. 

Since the bottom map in the above square is $(m-2)$-truncated, $f$ is itself $(m-2)$-truncated. Thus $f$ is an equivalence if and only if it is $(m-2)$-connected. We now show, through a series of implications, that $f$ is $(m-2)$-connected if  $BG$ is $p$-connected for some $p\geq m-1$. Since $\ast \to \tge{m}{[x]}{X/\!\!/G}$ is $(m-2)$-connected, from~\cite[Prop.~6.5.1.16]{Lur09} and the top left triangle in the above diagram, $f$ is $(m-2)$-connected if and only if  the morphism $\ast \to {\tge{m}{s}{X/\!\!/G}}$ is. But, $\ast \to {\tge{m}{s}{X/\!\!/G}}$ is a composition of $\ast \to BG$ and the $(m-2)$-connected morphism $BG\to  {\tge{m}{s}{X/\!\!/G}}$. Thus, from~\cite[Prop.~6.5.1.16]{Lur09}, $\ast \to {\tge{m}{s}{X/\!\!/G}}$ is $(m-2)$-connected if $\ast \to BG$ is $(m-2)$-connected. From~\cite[Prop.~6.5.1.20]{Lur09},  $\ast \to BG$ is $(m-2)$-connected if and only if $BG$ is $(m-1)$-connected. But $BG$ is $(m-1)$-connected if it is $p$-connected for some $p\geq m-1$. 
\end{proof}
 We remark that this argument could equally be viewed
 as being made in $\X_{/BG}$ as in $\X$.

\tocless\subsection{Truncating actions on connected objects equipped with homotopy fixed points  \label{proof:ConnectedTruncatedX}}
\noindent\emph{\S\ref{sec:findact}, \pagelink{page:ConnectedTruncatedX}{p. \pagenumber{page:ConnectedTruncatedX}}.}

\begin{thm}If $X$ is both $n$-truncated and $0$-connected, then every $G$-action on $X$ equipped with a homotopy fixed point is in the essential image of the functor  $(\X_{/B\tau_{\leq n-1}G})_\terminal\to (\X_{/BG})_\terminal$.
\end{thm}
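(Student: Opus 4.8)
The plan is to recognise the homotopy quotient, equipped with its section, as a delooping \emph{inside the slice topos} \(\X_{/BG}\), and then to descend that delooping along the truncation map \(\eta\colon BG\to B\tau_{\leq n-1}G\). By the discussion in \S\ref{sec:fix}, a \(G\)-action on \(X\) equipped with a homotopy fixed point is the same thing as a pointed object \((X/\!\!/G\to BG,\,s)\) of \(\X_{/BG}\), the point being the section \(s\colon BG\to X/\!\!/G\). Since \(X\to\terminal\) is \(n\)-truncated and \(0\)-connected, the same is true of the morphism \(X/\!\!/G\to BG\) defining the action (by the reasoning recalled at the end of \S\ref{sec:findact}); equivalently, \(X/\!\!/G\) is an \(n\)-truncated, \(0\)-connected, pointed object of the topos \(\X_{/BG}\). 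A pointed \(0\)-connected object of any topos is canonically the delooping of a group object, so \(X/\!\!/G\simeq B\mathbf{G}\), compatibly with the points, where \(\mathbf{G}:=\Omega_s(X/\!\!/G)\) is a group object of \(\X_{/BG}\); and since looping increases truncatedness, \(\mathbf{G}\) is \((n-1)\)-truncated.

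Next I would descend \(\mathbf{G}\). Applying \cite[Cor. 3.62]{Ras20} with \(n\) replaced by \(n-1\), pullback along \(\eta\) is an equivalence \(\eta^\ast\colon\tau_{\leq n-1}(\X_{/B\tau_{\leq n-1}G})\xrightarrow{\ \sim\ }\tau_{\leq n-1}(\X_{/BG})\). As \(\eta^\ast\) admits the left adjoint \(\eta_!\) it preserves finite products, so it restricts to an equivalence between the categories of group objects with \((n-1)\)-truncated underlying object in the two slices; hence \(\mathbf{G}\simeq\eta^\ast\mathbf{G}'\) for some such group object \(\mathbf{G}'\) of \(\X_{/B\tau_{\leq n-1}G}\). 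As \(\eta^\ast\) also admits the right adjoint \(\eta_\ast\) it preserves colimits, hence commutes with delooping, yielding a pointed equivalence \(\eta^\ast(B\mathbf{G}')\simeq B\eta^\ast\mathbf{G}'\simeq B\mathbf{G}\simeq X/\!\!/G\).

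Finally I would assemble the descended action. The morphism \(B\mathbf{G}'\to B\tau_{\leq n-1}G\) is \(0\)-connected, hence defines a \(\tau_{\leq n-1}G\)-action; its fibre over the basepoint is \(X\), using \(b_{B\tau_{\leq n-1}G}=\eta\circ b_{BG}\) together with the equivalence just obtained; and the canonical point of \(B\mathbf{G}'\) is a homotopy fixed point of it. So \((B\mathbf{G}',\,\mathrm{pt})\) is an object of \((\X_{/B\tau_{\leq n-1}G})_\terminal\) whose image under \((\X_{/B\tau_{\leq n-1}G})_\terminal\to(\X_{/BG})_\terminal\) is the pointed object \((X/\!\!/G\to BG,\,s)\) we began with, which is what was to be shown.

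The real content sits in the first paragraph: presenting \(X/\!\!/G\) as a delooping in \(\X_{/BG}\) is precisely where \(0\)-connectedness of \(X\) and the existence of a homotopy fixed point are used, and it is what allows \cite[Cor. 3.62]{Ras20} to be applied at truncation level \(n-1\) rather than \(n\). Everything downstream is the mechanics of descent along \(\eta\)---preservation of finite products, of colimits (hence of \(B\)), of the basepoint, and of the fibre over the basepoint---so the step most in need of care is the truncatedness bookkeeping: keeping straight the passage between `\(X/\!\!/G\) is \(n\)-truncated in \(\X_{/BG}\)' and `\(\mathbf{G}=\Omega_s(X/\!\!/G)\) is \((n-1)\)-truncated', and verifying that \(\eta^\ast\) genuinely restricts to an equivalence of the pertinent categories of group objects.
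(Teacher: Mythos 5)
Your argument is correct, but it takes a genuinely different route from the one in the paper. The paper descends the quotient directly: it forms the naturality square comparing $X/\!\!/G \to BG$ with $\tau_{\le n}(X/\!\!/G) \to B\tau_{\le n-1}G$ and proves that square (together with the truncated section) is cartesian, the key inputs being that the section $s\colon BG \to X/\!\!/G$ is $(n-1)$-truncated (deduced from $\terminal \to X$ being $(n-1)$-truncated and descent of truncatedness along the effective epimorphism $X \twoheadrightarrow X/\!\!/G$, \cite[Prop. 6.2.3.17]{Lur09}), a reverse-pasting lemma \cite[Lem. 3.7.3]{ABFJ20}, and \cite[Rem. 3.64]{Ras20}. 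You instead work internally to the slice topos: you use $0$-connectedness and the fixed point to present $(X/\!\!/G \to BG,\,s)$ as the delooping of the $(n-1)$-truncated group object $\mathbf{G}=\Omega_s(X/\!\!/G)$ in $\X_{/BG}$, descend $\mathbf{G}$ along $\eta$ via \cite[Cor. 3.62]{Ras20} at truncation level $n-1$ (the same equivalence the paper already quotes at level $n$ in \S\ref{sec:findact}), and deloop upstairs, using that $\eta^\ast$ preserves finite limits and colimits and hence basepoints, group structure and delooping. The two proofs are parallel in spirit — both ultimately exploit that looping drops truncation by one and that connectedness supplies an effective epimorphism — but yours trades the paper's external cartesian-square bookkeeping (Rem. 3.64 plus reverse pasting) for internal looping/delooping plus transport of group objects across the truncated-slice equivalence; what the paper's version buys in exchange is an explicit identification of the descended object as $\tau_{\le n}(X/\!\!/G)$ with its truncated section, which is convenient for the surrounding discussion. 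Two small points to tighten: the justification "admits the left adjoint $\eta_!$, so preserves finite products" should be phrased as "$\eta^\ast$ is a right adjoint (of $\eta_!$), hence preserves limits", and the restriction of $\eta^\ast$ to an equivalence of categories of group objects deserves a sentence noting that the Segal (pullback) conditions and the terminal object live in, and are detected by, the full subcategories of $(n-1)$-truncated objects; neither is a gap.
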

\begin{proof}
Given a $G$-action on $X$ equipped with a homotopy fixed point, we can
form a commutative diagram in $\X$ of the form
\[\begin{tikzcd}
	\ast & BG & {B\tau_{\le n-1} G} \\
	X & {X/\!\!/G} & {\tau_{\le n}(X/\!\!/G)} \\
	\ast & BG & {B\tau_{\le n-1} G\mathrlap{,}}
	\arrow[from=1-2, to=2-2]
	\arrow[two heads, from=1-1, to=1-2]
	\arrow[from=1-1, to=2-1]
	\arrow[two heads, from=2-1, to=2-2]
	\arrow[from=2-2, to=3-2]
	\arrow[two heads, from=3-1, to=3-2]
	\arrow[from=2-1, to=3-1]
	\arrow[from=2-3, to=3-3]
	\arrow[two heads, from=3-2, to=3-3]
	\arrow[two heads, from=2-2, to=2-3]
	\arrow[from=1-3, to=2-3]
	\arrow[two heads, from=1-2, to=1-3]
\end{tikzcd}\]
where the left-hand side squares are cartesian, and the right-hand
squares commute because of naturality. We have used that  $\tau_{\leq n} BG \simeq B \tau_{\leq n-1} G$. We must show that the
right-hand squares are, in addition, cartesian when $X$ is
$n$-truncated and $0$-connected, since then, under the functor
$(\X_{/B\tau_{\leq n-1}G})_\terminal\to (\X_{/BG})_\terminal$, the
object of $(\X_{/B\tau_{\leq n-1}G})_\terminal$ described by the
$2$-simplex formed by the right-hand vertical arrows would be mapped into the object of $(\X_{/BG})_\terminal$  described by the $2$-simplex formed by the middle vertical arrows. 

Now, the right-hand total rectangle is manifestly cartesian. The map $X\to \ast$ is $0$-connected and thus $\ast \to X$ is an effective epimorphism. Thus so is $ BG\to X/\!\!/G$, and so is its truncation $ B\tau_{\le n-1}G\to \tau_{\le n}(X/\!\!/G)$ by successive use of~\cite[Prop. 6.5.1.16 (5)]{Lur09}. Thus, by reverse pasting, \cite[Lem. 3.7.3]{ABFJ20},  if we can show that the top right square is cartesian, the bottom right square will be too.

By~\cite[Rem. 3.64]{Ras20} the top right square will be cartesian if we can show that $BG\to X/\!\!/G$  is $(n-1)$-truncated. To see that it is, first note that $\ast \to X$ is $(n-1)$-truncated, which follows from the fact that looping increases truncatedness. Then note that, since $X\to X/\!\!/G$ is an effective epimorphism, we can use \cite[Prop. 6.2.3.17]{Lur09} to conclude that $BG\to X/\!\!/G$ must be $(n-1)$-truncated. 
\end{proof}

\tocless\subsection{A recipe for finding certain group actions \label{proof:ConnectedActions}}
\noindent\emph{\S\ref{sec:findact}, \pagelink{page:ConnectedActions}{p. \pagenumber{page:ConnectedActions}}.}

It was stated in the main text that we can sometimes find group actions in an arbitrary
topos in much the same way as in the topos of spaces, provided that
certain technical assumptions are satisfied.

We begin by spelling these
assumptions out.
Firstly, we must assume that the object $X$ on which we act can be written as
$\coprod_{i\in A} X_i$, where $A$ is an indexing set and all $X_i$ are
$0$-connected. Secondly, we must assume that every $0$-connected object in
the topos (or at least every object that appears in the recipe below)
has only one point up to homotopy. Thirdly, we must assume that 
 the homotopy quotient $X/\!\!/G$ corresponding to the action of $G$ on $X$
takes the form $\coprod_{\alpha} (\coprod_{i\in A_\alpha}
X_i/\!\!/G)$, where $\{A_\alpha\}$ is a partition of $A$ and all $
(\coprod_{i\in A_\alpha} X_i/\!\!/G)$ are $0$-connected.
In the topos of spaces, these assumptions hold (for any object
and action thereon), so the
recipe gives us an algorithm to find all possible actions on any
object.

Now we describe the recipe.

\paragraph*{Step 1:} The first step is to choose a partition
$\{A_\alpha\}$ of the set $A$ subject to the condition that, for each
$i, j\in A_\alpha$, we have $X_i\simeq X_j$. We can see the need
  for this condition as follows. We want to find  $G$-actions on $X_\alpha:=\coprod_{i\in A_\alpha} X_i$ for each $A_\alpha$ such that $X_\alpha/\!\!/G$ is connected. Suppose we are given such an action $X_\alpha/\!\!/G$. As suggested in the main text, we can take its $0$-connected/$0$-truncated factorization, giving the diagram 
\pastingVE{X_\alpha}{X_\alpha/\!\!/G}{\coprod_{i\in A_\alpha} \ast_i}{\tau_{\le 0}X_\alpha/\!\!/G}{\ast}{BG\mathrlap{.}}
 (We have put subscripts on the terminal objects $\ast$, to indicate which $X_i$ they descend from.) By choosing a $j\in A_\alpha$, we can form the diagram  
\[\begin{tikzcd}
	{X_j} & {X_\alpha} & {X_\alpha/\!\!/G} \\
	{\terminal_j} & {\coprod_{i\in A_\alpha} \terminal_i} & {\tau_{\le 0}X_\alpha/\!\!/G\mathrlap{,}}
	\arrow[from=2-1, to=2-2]
	\arrow[from=1-1, to=1-2]
	\arrow[from=1-2, to=2-2]
	\arrow[from=1-1, to=2-1]
	\arrow[from=2-2, to=2-3,two heads]
	\arrow[from=1-2, to=1-3,two heads]
	\arrow[from=1-3, to=2-3]
\end{tikzcd}\]
in which the right square is cartesian by definition and the left
square is cartesian by the fact that in any topos we have descent over coproducts~\cite{Rezk19}.
By pasting, the rectangle formed from the two squares is also
cartesian. We now make use of our second assumption, namely that
connected objects only have one point up to homotopy. Under this
assumption, since $\tau_{\le 0}X_\alpha/\!\!/G$ is connected, the
 diagram above shows that we must have $X_i\simeq X_j$ for each $i,j\in A_\alpha$ if  $X_\alpha/\!\!/G$ is to be connected.  

\paragraph*{Step 2:} The second step is to choose a $G$-action on
$\tau_{\le 0} X_\alpha\simeq \coprod_{i\in A_\alpha} \terminal_i$ for
which $\tau_{\le 0} X_\alpha/\!\!/G$ is connected.

Choosing a $j\in A_\alpha$, we can now form the diagram 
\[
\begin{tikzcd}
	X_j \ar[r] \ar[d]& X_\alpha \ar[r,dashed,two heads]\ar[d] & ? \ar[d,dashed]\\
	\ast_j \ar[r]& \coprod_{i\in A_\alpha} \terminal_i \ar[r,two heads]& \tau_{\le 0} X_\alpha/\!\!/G\mathrlap{,}
\end{tikzcd}
\]
in which the left-hand square is cartesian, and the rectangle and
right-hand square want to be cartesian too. Here $\tau_{\le 0}
X_\alpha/\!\!/G$ is connected, and pointed by $\ast_j$, thus we may
associate to it a group object $\Omega \tau_{\le 0} X_\alpha/\!\!/G$. 

\paragraph*{Step 3:} For all $\alpha$ and a chosen $j\in A_\alpha$, choose an action of $\Omega \tau_{\le 0} X_\alpha/\!\!/G$ on $X_j$. This gives us an object $E$, which sits in the commutative diagram 
 \[\begin{tikzcd}
	{X_j} & Y & {E} \\
	{\terminal_j} & {\coprod_i \terminal_i} & {\tau_{\le 0}X_\alpha /\!\!/G} \\
	& \terminal& BG\mathrlap{.}
	\arrow[from=1-2, to=2-2]
	\arrow[from=2-1, to=2-2]
	\arrow[from=1-1, to=1-2]
	\arrow[from=1-1, to=2-1]
	\arrow[from=2-2, to=2-3,two heads]
	\arrow[from=1-3, to=2-3]
	\arrow[from=1-2, to=1-3,two heads]
	\arrow[from=2-3, to=3-3]
	\arrow[from=2-2, to=3-2]
	\arrow[from=3-2, to=3-3,two heads]
\end{tikzcd}\]
where all squares are cartesian, which determines $Y$. In fact, since
it does not matter which $j$ we chose, $Y$ must be $\coprod_{i\in
  A_\alpha} X_i$ by the universality of coproducts~\cite{Rezk19}. Thus
every $E$ is in fact a group action on $X_\alpha$, so we are entitled to call it $X_\alpha/\!\!/G$.

\paragraph*{Step 4:} To construct the final action on $X$, we take the coproduct in $\X_{/BG}$ of $\coprod_{\alpha} X_\alpha /\!\!/G$.

\tocless\subsection{Absence of smoothness anomalies for TFTs in the smooth space of QM \label{proof:ExtendToQM}}
\noindent\emph{\S\ref{sec:1dTFTTake2}, \pagelink{page:ExtendToQM}{p. \pagenumber{page:ExtendToQM}}.}

\begin{thm}
Let $G$ be a group object in $\Sm$ corresponding to
a Lie group $\mathsf{G}$. For every $G$-action on $BGL_n$ that admits
a homotopy fixed point, there is a $G$-action on
$X=Mat_n/\!\!/GL_n$ with homotopy fixed point $s$ through $h=0$, such
that $\tge{1}{s}{X/\!\!/G}\to BG$ corresponds to the original
$G$-action on $BGL_n$.
\end{thm}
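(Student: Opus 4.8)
\emph{Proof proposal.} The plan is to build the desired $G$-action on $X=Mat_n/\!\!/GL_n$ by transporting the given action along the equivalence $X\simeq BGL_n^{B\R}$ of \S\ref{sec:NonUnitQM} and letting $G$ act on the \emph{target}, exactly as in the ``$G$ acts on the target'' discussion of \S\ref{sec:QMtoM}; the homotopy fixed point through $h=0$ will be the ``constant'' one manufactured from the given homotopy fixed point on $BGL_n$, and the identification of the connected component will be forced by uniqueness of the $(-1)$-connected/$(-1)$-truncated factorization. To set up, recall that since $G$ corresponds to a Lie group, the data of a $G$-action on $BGL_n$ admitting a homotopy fixed point is, by \S\ref{sec:grpSSl}, an object $BGL_n/\!\!/G\to BG$ of $\Sm_{/BG}$ together with a section $\sigma:BG\to BGL_n/\!\!/G$, equivalently a point $\tilde\sigma:\terminal\to (BGL_n)^{hG}$. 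Since $BGL_n$ is path-connected we may and do arrange that the underlying fixed point of $\sigma$ is the basepoint of $BGL_n$.

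First I would form the induced action. Regarding $B\R\times BG\to BG$ as an object of $\Sm_{/BG}$, set $X/\!\!/G:=(BGL_n/\!\!/G)^{B\R\times BG}\to BG$, the internal hom computed in $\Sm_{/BG}$; its fibre over $b_G:\terminal\to BG$ is $\Sm(B\R,BGL_n)\simeq X$, so this is a $G$-action on $X$. The projection $B\R\times BG\to BG$ induces an equivariant morphism $c_G:BGL_n/\!\!/G\to X/\!\!/G$ over $BG$ whose underlying morphism $c:BGL_n\to BGL_n^{B\R}\simeq Mat_n/\!\!/GL_n$ is ``precomposition with $B\R\to\terminal$''; chasing the equivalence of \S\ref{sec:NonUnitQM} objectwise, $c$ is the inclusion of the connected component at $h=0$, which \S\ref{sec:1dTFTTake2} identifies with $BGL_n$ (the $\mathsf{GL}_n$-stabilizer of $0$ being all of $\mathsf{GL}_n$). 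Next, via the adjunctions $Z\times(-)\dashv(-)^Z$ and $(\terminal_{BG})^\ast\dashv(\terminal_{BG})_\ast$ one obtains, exactly as in \S\ref{sec:QMtoM}, an equivalence $X^{hG}\simeq\Sm(B\R,(BGL_n)^{hG})$ under which $c_G$ acts on sections by precomposition with $B\R\to\terminal$. Let $s$ be the section of $X/\!\!/G\to BG$ corresponding to the constant morphism $B\R\to\terminal\xrightarrow{\tilde\sigma}(BGL_n)^{hG}$; equivalently $s=c_G\circ\sigma$. Its underlying fixed point in $X$ is the constant morphism at the basepoint of $BGL_n$, which under $X\simeq BGL_n^{B\R}\simeq Mat_n/\!\!/GL_n$ is the point $h=0$. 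Thus $s$ is a homotopy fixed point of the induced action, through $h=0$.

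Then I would identify $\tge{1}{s}{X/\!\!/G}$. The morphism $c$ is a monomorphism, being the inclusion of a connected component, i.e.\ the morphism $\tge{1}{x}{X}\to X$, which is $(-1)$-truncated by \S\ref{sec:orbstab}; by descent for monomorphisms along the effective epimorphism $\terminal\twoheadrightarrow BG$, $c_G$ is a monomorphism too. On the other hand $\sigma:BG\to BGL_n/\!\!/G$ is an effective epimorphism: pulling it back along $\terminal\twoheadrightarrow BG$ gives the underlying fixed point $\terminal\to BGL_n$, which is $(-1)$-connected since $BGL_n$ is $0$-connected, and effective epimorphisms are local on the base. Hence $s=c_G\circ\sigma$ \emph{is} the $(-1)$-connected/$(-1)$-truncated factorization of $s:BG\to X/\!\!/G$. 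Since $BG$ is $0$-connected, \S\ref{sec:orbstab} together with Appendix~\ref{proof:OrbitSpace} (taking $p=0$, $m=1$) identifies $\tge{1}{s}{X/\!\!/G}$ with the target of precisely that factorization; by its uniqueness, $\tge{1}{s}{X/\!\!/G}\simeq BGL_n/\!\!/G$ over $X/\!\!/G$, hence over $BG$, which is the $G$-action on $BGL_n$ we started with. (As a sanity check, pulling this equivalence back along $X\to X/\!\!/G$ recovers $\tge{1}{x}{X}\simeq BGL_n$, in agreement with \S\ref{sec:1dTFTTake2}.)

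The main obstacles are bookkeeping ones. The first is the adjunction chase giving $X^{hG}\simeq\Sm(B\R,(BGL_n)^{hG})$ and matching the section $c_G\circ\sigma$ with the constant morphism; this is the manipulation already carried out in \S\ref{sec:QMtoM} for $(B_\nabla GL_n)^M$, with $M$ replaced by $B\R$, so it should cause no real trouble. The second, and slightly more delicate, is the explicit verification via the equivalence $BGL_n^{B\R}\simeq Mat_n/\!\!/GL_n$ that ``precomposition with $B\R\to\terminal$'' lands on the connected component at $h=0$ and nowhere else; once this is established, the monomorphism claim and, with it, the whole identification of $\tge{1}{s}{X/\!\!/G}$ follow immediately. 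Everything else is formal.
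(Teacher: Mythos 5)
Your proposal is correct in substance, but it reaches the statement by a genuinely different route than the paper. The paper's proof is concrete and Lie-theoretic: it uses \S\ref{sec:grpSSl} to encode the given action with its homotopy fixed point as a split extension, forms the semidirect product $\hat{\mathsf{G}}=\mathsf{GL}_n\rtimes_{\mathsf{e}}\mathsf{G}$, extends the conjugation action on $\mathsf{Mat}_n$ to $\hat{\mathsf{G}}$ via $((\mathsf{M},\mathsf{g}),\mathsf{h})\mapsto \mathsf{M}\,d\mathsf{e}_{\mathsf{g}}(\mathsf{h})\,\mathsf{M}^{-1}$, and takes $Mat_n/\!\!/\hat{G}\to BG$ as the extended action, with the section through $h=0$ coming from the fixed point $0$; the identification of $\tge{1}{s}{X/\!\!/G}$ is then exactly your final step, exhibiting $BG\to B\hat G\to Mat_n/\!\!/\hat G$ as a $(-1)$-connected/$(-1)$-truncated factorization. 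You instead transport along $X\simeq (BGL_n)^{B\R}$ and induce the action formally, as the internal hom $(BGL_n/\!\!/G)^{B\R\times BG}$ in $\Sm_{/BG}$ (the ``act on the target'' construction of \S\ref{sec:QMtoM}), take $s=c_G\circ\sigma$ with $c_G$ the constant-maps morphism, and identify the connected cover by showing $\sigma$ is $(-1)$-connected and $c_G$ is $(-1)$-truncated via the fact that $b_G^\ast$ reflects these classes (they are local, and $b_G^\ast$ is base change along an effective epimorphism) — all of which is sound. What the paper's route buys is an explicit Lie-groupoid model $\mathsf{Mat}_n/\!\!/\hat{\mathsf{G}}$ of the extended action, directly tied to the classification of \S\ref{sec:grpSSl} and usable for computation; what yours buys is formality and generality — you never actually use that $G$ corresponds to a Lie group, and the only genuinely non-formal input is the objectwise check (parallel to the paper's argument that $B\esh{K}\to B_\nabla K$ is the component inclusion) that precomposition with $B\R\to\terminal$ is a monomorphism with essential image the component at $h=0$, which you correctly flag and which does go through. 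Two cosmetic points to fix in a write-up: the fibre of your induced action and the identification of its fixed points should be stated with mapping objects, not mapping spaces (i.e.\ $(BGL_n)^{B\R}$ rather than $\Sm(B\R,BGL_n)$, and $\Sm(\terminal,X^{hG})\simeq\Sm(B\R,(BGL_n)^{hG})$ rather than $X^{hG}\simeq\Sm(B\R,(BGL_n)^{hG})$); and the monomorphism claim for $c$ should be established directly (objectwise full faithfulness) before, not after, invoking that it is the component inclusion, to avoid the apparent circularity.
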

\begin{proof}
A  $G$-action on $BGL_n$ admitting a homotopy fixed point is specified by a $\mathsf{G}$-action  $\mathsf{e}: \mathsf{G}\times \mathsf{GL}_n\to \mathsf{GL}_n$ in the traditional sense.
The corresponding $G$-action as an object in $\Sm_{/BG}$ is the morphism $B\hat G\to BG$ corresponding to the projection of the semi-direct product $\hat{\mathsf{G}}:=\mathsf{GL}_n\rtimes_\mathsf{e} \mathsf{G}$ onto $\mathsf{G}$.

There exists an $\hat G$-action on $Mat_n$ corresponding to the smooth
map $\hat{\mathsf{G}}\times \mathsf{Mat}_n\to
\mathsf{Mat}_n:((\mathsf{M},\mathsf{g}),\mathsf{h})\mapsto
\mathsf{M}d\mathsf{e}_\mathsf{g}(\mathsf{h}) \mathsf{M}^{-1}$, where
$d\mathsf{e_g}:\mathsf{Mat}_n\to \mathsf{Mat}_n$ is the differential
of $\mathsf{e_g}:\mathsf{GL}_n\to \mathsf{GL}_n$. This action on
$Mat_n$ has a section through $h=0$. We can combine the $G$-action on
$BGL_n$ that yields $\hat G$ and the $\hat G$-action on $Mat_n$, along with their sections, to form the commutative diagram  
\[\begin{tikzcd}
	& \ast & BG \\
	& {BGL_n} & {B\hat G} \\
	{Mat_n} & {Mat_n/\!\!/GL_n} & {Mat_n/\!\!/\hat G} \\
	\ast & {BGL_n} & {B\hat G} \\
	& \ast & BG\mathrlap{,}
	\arrow[from=4-3, to=5-3]
	\arrow[from=5-2, to=5-3]
	\arrow[from=4-2, to=4-3]
	\arrow[from=4-2, to=5-2]
	\arrow[from=3-2, to=4-2]
	\arrow[from=3-3, to=4-3]
	\arrow[from=3-2, to=3-3]
	\arrow[from=2-3, to=3-3]
	\arrow[from=2-2, to=3-2]
	\arrow[from=1-2, to=1-3]
	\arrow[from=2-2, to=2-3]
	\arrow[from=1-3, to=2-3]
	\arrow[from=1-2, to=2-2]
	\arrow[from=3-1, to=3-2]
	\arrow[from=3-1, to=4-1]
	\arrow[from=4-1, to=4-2]
\end{tikzcd}\]
in which all squares are cartesian. By pasting, all rectangles are
cartesian too, so we can read off that $Mat_n/\!\!/\hat G\to BG$ is an
action on $Mat_n/\!\!/GL_n$ with section $s$ through $h=0$. Moreover,
since the morphism $B\hat G\to Mat_n/\!\!/\hat G$ is $(-1)$-truncated
and the morphism $BG\to B\hat G$ is $(-1)$-connected,  the factorization $s:BG\to B\hat G\to Mat_n/\!\!/\hat G$ exhibits $B\hat G$ as $\tge{1}{s}{Mat_n/\!\!/\hat G}$.
\end{proof}
\definecolor{newLinkCC}{RGB}{100, 100, 100}  
\hypersetup{
    linkcolor=newLinkCC
}

\bibliographystyle{alphaurl}
\bibliography{smooth_generalized_symmetry}
\end{document}